\newcommand*\emptycircle{\tikz{\draw circle (0.9mm)}}
\newcommand{\blackcircle}{\tikz{\filldraw[fill=black] circle (0.9mm)}}
\newcommand{\semifilled}{\tikz
    {
    \draw circle (0.9mm);
    \node [semicircle, fill=black, 
      inner sep=0pt, outer sep=0pt,minimum size=0.9mm, anchor=south, rotate=90]{};
     }}
\DeclareMathOperator*{\argmax}{arg\,max}
\definecolor{lavender}{rgb}{0.75, 0.58, 0.89}
\newtheorem{condition}{Condition}
\newcommand{\tts}[1]{{\small\texttt{#1}}}
\newcommand{\removelatexerror}{\let\@latex@error\@gobble}
\newif\ifshowcomment
    \newcommand{\Jian}[1]{\textsf{\color{green}{[{Jian: #1}]}}}
    \newcommand{\dawn}[1]{\textsf{\color{red}{[{Dawn: #1}]}}}
    \newcommand{\yongdae}[1]{\textsf{\color{blue}{[{Yongdae: #1}]}}}
    \newcommand{\yujin}[1]{\textsf{\color{lavender}{[{Yujin: #1}]}}}
    \newcommand{\Jian}[1]{}
    \newcommand{\dawn}[1]{}
    \newcommand{\yongdae}[1]{}
    \newcommand{\yujin}[1]{}
\newcommand{\para}[1]{\smallskip\noindent\textbf{#1}}
\newcommand{\xmark}{\ding{55}}%
\newcommand{\tri}{\ding{115}}%
\begin{document}
\title{Impossibility of Full Decentralization in Permissionless Blockchains} 

\copyrightyear{2019} 
\acmYear{2019} 
\acmConference[AFT '19]{1st ACM Conference on Advances in Financial Technologies}{October 21--23, 2019}{Zurich, Switzerland}
\acmBooktitle{1st ACM Conference on Advances in Financial Technologies (AFT '19), October 21--23, 2019, Zurich, Switzerland}
\acmPrice{15.00}
\acmDOI{10.1145/3318041.3355463}
\acmISBN{978-1-4503-6732-5/19/10}

\author{Yujin Kwon*, Jian Liu$^\dagger$, Minjeong Kim*, Dawn Song$^\dagger$, Yongdae Kim*}
\affiliation{\vspace{2mm}
\institution{*KAIST}}
\email{{dbwls8724, mjkim9394, yongdaek}@kaist.ac.kr}
\affiliation{ \vspace{2mm} \institution{$^\dagger$UC Berkeley}}
\email{jian.liu@eecs.berkeley.edu, dawnsong@cs.berkeley.edu}

\renewcommand{\shortauthors}{Kwon et al.}

\begin{abstract}
Bitcoin uses the \textit{proof-of-work} (PoW) mechanism where nodes earn rewards in return for the use of their computing resources.
Although this incentive system has attracted many participants, power has, at the same time, been significantly biased towards a few nodes, called \textit{mining pools}. 
In addition, poor decentralization appears not only in PoW-based coins but also in coins that adopt \textit{proof-of-stake} (PoS) and \textit{delegated proof-of-stake} (DPoS) mechanisms.

In this paper, we address the issue of centralization in the consensus protocol. 
To this end, we first define \textit{$(m, \varepsilon, \delta)$-decentralization} as a state satisfying that 1) there are at least $m$ participants running a node, and 2) the ratio between the total resource power of nodes run by the richest and the $\delta$-th percentile participants is less than or equal to $1+\varepsilon$. 
Therefore, when $m$ is sufficiently large, and $\varepsilon$ and $\delta$ are 0, $(m, \varepsilon, \delta)$-decentralization represents \textit{full decentralization}, which is an ideal state. 
To ascertain if it is possible to achieve good decentralization, 
we introduce conditions for an incentive system that will allow a blockchain to achieve $(m, \varepsilon, \delta)$-decentralization. 
When satisfying the conditions, a blockchain system can reach full decentralization with probability 1, regardless of its consensus protocol.
However, to achieve this, the blockchain system should be able to assign a positive Sybil cost, where the Sybil cost is defined as the difference between the cost for one participant running multiple nodes and the total cost for multiple participants each running one node. 
Conversely, we prove that if there is no Sybil cost,
the probability of achieving $(m, \varepsilon, \delta)$-decentralization is bounded above by a function of $f_\delta$, where $f_\delta$ is the ratio between the resource power of the $\delta$-th percentile and the richest participants. 
Furthermore, the value of the upper bound is close to 0 for small values of $f_\delta$. 
Considering the current gap between the rich and poor, this result implies that it is almost impossible for a system without Sybil costs to achieve good decentralization. 
In addition, because it is yet unknown how to assign a Sybil cost without relying on a TTP in blockchains, it also represents that currently, a contradiction between achieving good decentralization in the consensus protocol and not relying on a TTP exists.
\end{abstract}

\begin{CCSXML}
<ccs2012>
<concept>
<concept_id>10002978.10003006.10003013</concept_id>
<concept_desc>Security and privacy~Distributed systems security</concept_desc>
<concept_significance>500</concept_significance>
</concept>
<concept>
<concept_id>10002978.10003029.10003031</concept_id>
<concept_desc>Security and privacy~Economics of security and privacy</concept_desc>
<concept_significance>500</concept_significance>
</concept>
</ccs2012>
\end{CCSXML}

\ccsdesc[500]{Security and privacy~Economics of security and privacy}
\ccsdesc[500]{Security and privacy~Distributed systems security}

\keywords{Blockchain; Consensus Protocol; Decentralization}

\maketitle

\section{Introduction}

Traditional currencies have a centralized structure, and thus there exist several problems such as a single point of failure and corruption.
For example, the global financial crisis in 2008 was aggravated by the flawed policies of banks that eventually led to many bank failures, followed by an increase in the distrust of these institutions. 
With this background, Bitcoin~\cite{nakamoto2008bitcoin}, which is the first decentralized digital currency, has received considerable attention; given that it is a decentralized cryptocurrency, there is no organization that controls the system, unlike traditional financial systems. 

To operate the system without any central authority, Bitcoin uses \textit{blockchain} technology. 
Blockchain is a public ledger that stores transaction history, while nodes record the history on the blockchain by generating blocks through a consensus protocol that provides a synchronized view among nodes. 
Bitcoin adopts a consensus protocol using the PoW mechanism in which nodes utilize their computational power in order to participate.
Moreover, nodes receive coins as a reward for the use of their computational power, and this reward increases with the amount of computational power used.
This incentive system has attracted many participants. 
At the same time, however, computational power has been significantly biased towards a few participants (i.e., mining pools).
As a result, the decentralization of the Bitcoin system has become poor, thus deviating from its original goal~\cite{gervais2014bitcoin, bonneau2015sok, gencer2018decentralization}.

Since the success of Bitcoin, many  cryptocurrencies have been developed. They have attempted to address several drawbacks of Bitcoin, such as low transaction throughput, waste of  energy owing to the utilization of vast computational power, and poor decentralization.
Therefore, some cryptocurrencies use consensus mechanisms different from PoW, such as PoS and DPoS, in which nodes should have stakes for participation instead of a computing resource.
While these new consensus mechanisms have addressed several of the drawbacks of Bitcoin, the problem of poor decentralization still remains unsolved. 
For example, similar to PoW systems, stakes are also significantly biased towards a few participants.
This has caused concern about poor decentralization in PoS and DPoS coins. 

Currently, many coins suffer from two problems that degrade the level of decentralization: 1) an insufficient number of independent participants because of the coalition of participants (e.g., mining pools) and 2) a significantly biased power distribution among them. 
Therefore, many developers have attempted to create a well-decentralized system~\cite{quadratic_voting, brunjes2018reward}. 
In addition, researchers such as Micali have noted that ``incentives are the hardest thing to do" and believe that inappropriate incentive systems may cause blockchain systems to be significantly centralized~\cite{debate}. 
This implies that it is currently an open problem as to whether we can design an incentive system that allows for good or full decentralization to be achieved. 

\smallskip
\noindent\textbf{Full decentralization. }
In this paper, the conditions for full decentralization are studied for the first time. 
To this end, we define \textit{$(m, \varepsilon, \delta)$-decentralization} as a state that satisfies that 1) \textit{the number of participants running nodes in a consensus protocol is not less than $m$} and 2) \textit{the ratio between the effective power of the richest and the $\delta$-th percentile participants is not greater than $1+\varepsilon$,} where the effective power of a participant represents the total resource power of the nodes run by that participant. 
The case when $m$ is sufficiently large and $\varepsilon$ and $\delta$ are 0 represents full decentralization in which everyone has the same power. 
To investigate if a high level of decentralization is possible, we model a blockchain system (Section~\ref{sec:model}), 
and find four sufficient conditions of the incentive system such that the blockchain system \textit{converges in probability} to $(m, \varepsilon, \delta)$-decentralization.
If an incentive system that satisfies these four conditions exists, the blockchain system can achieve $(m, \varepsilon, \delta)$-decentralization with probability 1, regardless of the underlying consensus protocol. 
The four conditions are: 1) \textit{at least $m$ nodes earn rewards}, 2) \textit{it is not more profitable for participants to delegate their resource power to fewer participants than it is to run their own nodes}, 
3) \textit{it is not more profitable for a participant to run multiple nodes than to run one node}, and 4) \textit{the ratio between the resource power of the richest and the $\delta$-th percentile nodes converges in probability to a value of less than $1+\varepsilon.$} 

\smallskip
\noindent\textbf{Impossibility. }
Based on these conditions, we find an incentive system that enables a system to achieve full decentralization. 
\textit{In this incentive system, for the third condition to be met, the cost for one participant running multiple nodes should be greater than the total cost for multiple participants each running one node.
The difference between the former cost and the latter cost is called a Sybil cost in this paper. }
This implies that a system where Sybil costs exist can be fully decentralized with probability 1. 

When a system does not have Sybil costs, there is no incentive system that satisfies the four conditions (Section~\ref{sec:impossible}). 
More specifically, the probability of reaching $(m, \varepsilon, \delta)$-decentralization is bounded above by a function $G(f_\delta)$ that is close to 0 for a small ratio $f_\delta$ between the resource power of the $\delta$-th percentile and the richest participants. 
This implies that achieving good decentralization in a system without Sybil costs depends totally on the rich-poor gap in the real world. 
As such, the larger the rich-poor gap, the closer the probability is to zero. 
To determine the approximate ratio $f_\delta$ in actual systems, we investigate hash rates in Bitcoin and observe that $f_0\,\,(\delta=0)$ and $f_{15}\,\,(\delta=15)$ are less than $10^{-8}$ and $1.5\times 10^{-5}$, respectively. 
In this case, $f_0$ indicates the ratio between the resource power of the poorest and the richest participants. 

Unfortunately, it is not yet known how permissionless blockchains that have no \textit{real identity management} can have Sybil costs.
Indeed, to the best of our knowledge, all permissionless blockchains that do not rely on a TTP do not currently have any Sybil costs. 
Taking this into consideration, \textit{it is almost impossible for permissionless blockchains to achieve good decentralization, and there is a contradiction between achieving good decentralization in the consensus protocol and not relying on a TTP}. 
The existence of mechanisms to enforce a Sybil cost in permissionless blockchains is left as an open problem.  
The solution to this issue would be the key to determining how blockchains can achieve a high level of decentralization. 

\smallskip
\noindent\textbf{Protocol analysis in the top 100 coins. }
Next, to find out what condition each system does not satisfy, we extensively analyze incentive systems of all existing PoW, PoS, and DPoS coins among the top 100 coins in CoinMarketCap~\cite{coinmarket} according to the four conditions (Section~\ref{app:protocol}). 
According to this analysis, PoW and PoS systems do not have both enough participants running nodes and an even power distribution among the participants.
However, unlike PoW and PoS coins, DPoS coins can have an even power distribution among a fixed number of participants when Sybil costs exist. 
If the Sybil costs do not exist, however, rational participants would run multiple nodes for higher profits. 
In that case, DPoS systems cannot guarantee that any participants possess the same power.

\smallskip
\noindent\textbf{Data analysis in top 100 coins. }
To validate the result of the protocol analysis and our theory, we also conduct data analysis of the same list of coins using three metrics: the number of block generators, the Gini coefficient, and Shannon entropy (Section~\ref{app:level}). 
Through this empirical study, we can observe the expected rational behaviors in most existing coins. 
In addition, we \textit{quantitatively confirm} that the coins do not currently achieve good decentralization. 
As a result, this data analysis not only investigates the actual level of decentralization, but also empirically confirms the analysis results of incentive systems.
We discuss the debate surrounding incentive systems and whether we can relax the conditions for full decentralization (Section~\ref{sec:discuss}). 
Finally, we conclude and provide two directions to go (Section~\ref{sec:conclude}).
\section{Importance of decentralization}\label{sec:background}

Decentralization is an essential factor that should be inherent in the design of blockchain systems. 
However, most of the computational power of PoW-based systems is currently concentrated in only a few nodes, called \textit{mining pools},\footnote{More specifically, this refers to centralized mining pools. Even though there are decentralized mining pools, given that centralized pools are major pools, we will, hereafter, simply refer to them as mining pools.} where individual miners gather together for mining. 
This causes concern not only about the level of decentralization, but also about the security of systems since the mining-power distribution is a critical aspect to be considered in the security of PoW systems.
In general, when a participant has large amounts of resource power, their behavior will significantly influence others in the consensus protocol. 
In other words, the more resources a participant has, the greater their influence on the system.
Therefore, the resource power distribution implicitly represents the level of decentralization in the system. 

At this point, we can consider the following questions: ``What can influential participants do in practice?" and ``Can this behavior harm other nodes?" 
Firstly, there are attacks such as double spending and selfish mining, which can be executed by attackers with over 50\% and 33\% of the resource power, respectively. 
These attacks would result in significant financial damage~\cite{btg_accident}.  
In addition, in a consensus protocol combined with PBFT~\cite{castro1999practical}, malicious behavior of nodes that possess over 33\% resource power can cause the consensus protocol to become stuck. 
It would certainly be more difficult for such attacks to be executed through collusion with others if the resource power is more evenly distributed. 
In addition, nodes participating in the consensus protocol verify transactions and generate blocks. More specifically, when generating a block, nodes choose which transactions to include in that block.
Therefore, they can choose only the advantageous transactions while ignoring the disadvantageous transactions. 
For example, participants can exclude transactions issued by rivals in the process of generating blocks and, if they possess large amounts of power, validation of these transactions will often be delayed because the malicious participant has many opportunities to choose the transactions that will be validated. 
Even though the rivals can also retaliate against them, the damage from the retaliation depends on the power gap between the malicious participants and their rivals. 

Furthermore, transaction issuers are required to pay transaction fees. 
The fees are usually determined by economic interactions~\cite{fees}. 
This implies that the fees can depend on the behavior of block generators.
For example, if they verify only transactions that have fees above a specific amount, the overall transaction fees can increase because users would have to pay a higher fee for their transactions to be validated.
Considering this, the more the system is centralized, the closer it may become to oligopolies. 

In fully decentralized systems, however, it would be significantly more difficult for the above problems to occur. Moreover, the system would certainly be fair to everyone.
This propels the desire to achieve a fully decentralized system.
Even though there have been many discussions and attempts to achieve good decentralization, existing systems except for a few coins~\cite{gencer2018decentralization, minjeong} have rarely been analyzed. 
This paper not only studies the possibility of full decentralization, but also extensively investigates the existing coins.
\section{System Model}
\label{sec:model}
In this section, we model a consensus protocol and an incentive system. 
Moreover, we introduce the notation used throughout this paper (see Tab.~\ref{tab:par}). 

\smallskip
\noindent\textbf{Consensus protocol. }
A blockchain system has a consensus protocol where player $p_i$ participates and generates blocks by running their own nodes. 
The set of all nodes in the consensus protocol is denoted by $\mathcal{N},$ and that of the nodes run by player $p_i$ is denoted by $\mathcal{N}_{p_i}$.
Moreover, we define $\mathcal{P}$ as the set of all players running nodes in the consensus protocol (i.e., $\mathcal{P}=\{p_i|\, \mathcal{N}_{p_i}\not = \emptyset \}$).
Therefore, $|\mathcal{N}|$ is not less than $|\mathcal{P}|.$ In particular, if a player has multiple nodes, $|\mathcal{N}|$ would be greater than $|\mathcal{P}|.$

For nodes to participate in the consensus protocol, they should possess specific resources, and their influence significantly depends on their resource power. 
The resource power in consensus protocols using PoW and PoS mechanisms is in the form of computational power and stakes, respectively.
Node $n_i\in \mathcal{N}$ possesses resource power $\alpha_{n_i}(>0)$.
Moreover, $\bm{\bar{\alpha}}$ denotes the vector of the resource power for all nodes (i.e., $\bm{\bar{\alpha}}=(\alpha_{n_i})_{{n_i}\in \mathcal{N}}$).
We also denote the resource power owned by player $p_i$ as $\alpha_{p_i}$ and the set of players with positive resource power as $\mathcal{P}_{\alpha}$ (i.e., $\mathcal{P}_{\alpha}=\{p_i|\, \alpha_{p_i}>0\}$).
Here, we note that these two sets, $\mathcal{P}_{\alpha}$ and $\mathcal{P},$ can be different because when players delegate their own power to others, they do not run nodes but possess the resource power (i.e., the fact that $\alpha_{p_i}>0$ does not imply that $\mathcal{N}_{p_i}\not=\emptyset$). 
For clarity, we describe a mining pool as an example. 
In the pool, there are workers and an operator, where the workers own their resource power but delegate it to the operator without running a full node. 
Therefore, pool workers belong to $\mathcal{P}_\alpha$ but not $\mathcal{P}$ while the operator belongs to both $\mathcal{P}_\alpha$ and $\mathcal{P}$.

In fact, the influence of player $p_i$ on the consensus protocol depends on the total resource power of the nodes run by the player rather than just its resource power $\alpha_{p_i}.$ 
Therefore, we define $EP_{p_i}$, the \textit{effective power} of player $p_i$ as $\sum_{n_i\in\mathcal{N}_{p_i}}\alpha_{n_i}.$ 
Again, considering the preceding example of mining pools, the operator's effective power is the sum of the resource power of all pool workers while the workers have zero effective power.
The maximum and $\delta$-th percentile of $\{EP_{p_i} |\, p_i\in\mathcal{P}\}$ are denoted by $EP_{\tt max}$ and $EP_{\delta}$, respectively, and $\bm{\bar{\alpha}_{\mathcal{N}_{p_i}}}$ represents a vector of the resource power of the nodes owned by player $p_i$ (i.e., $\bm{\bar{\alpha}_{\mathcal{N}_{p_i}}}=(\alpha_{n_i})_{n_i\in \mathcal{N}_{p_i}}$). 
Note that $EP_{\tt max}$ and $EP_{100}$ are the same.
In addition, we consider the average time to generate one block as a \textit{time unit} in the system.
We use the superscript $t$ to express time $t.$ For example, $\alpha^t_{n_i}$ and $\bm{\bar{\alpha}^t}$ represent the resource power of node $n_i$ at time $t$ and the vector of the resource power possessed by the nodes at time $t$, respectively.

\smallskip
\noindent\textbf{Incentive system. } To incentivize players to participate in the consensus protocol, the blockchain system must have an incentive system.
The incentive system would assign rewards to nodes, depending on their resource power. Here, we define the utility function $U_{n_i}(\alpha_{n_i}, \bm{\bar{\alpha}_{-n_i}})$ of the node $n_i$ as the expected net profit per time unit,   
where $\bm{\bar{\alpha}_{-n_i}}$ represents the vector of other nodes' resource power and the net profit indicates earned revenues with all costs subtracted. 
Specifically, the utility function $U_{n_i}(\alpha_{n_i}, \bm{\bar{\alpha}_{-n_i}})$ of node $n_i$ can be expressed as 
\vspace{-2mm}
\begin{equation*}
\resizebox{.95\hsize}{!}{$
    U_{n_i}=E[R_{n_i}\,|\,\bm{\bar{\alpha}}]=
\begin{cases}
\sum_{R_{n_i}} R_{n_i}\times\Pr(R_{n_i}|\,\bm{\bar{\alpha}}) & \text{if } R_{n_i} \text{ is discrete}\\
\int_{R_{n_i}} R_{n_i}\times\Pr(R_{n_i}|\,\bm{\bar{\alpha}}) &\text{otherwise,}
\end{cases}$} \vspace{-1mm} 
\end{equation*}
where $R_{n_i}$ is a random variable with probability distribution $\Pr(R_{n_i}|\,\bm{\bar{\alpha}})$ for a given $\bm{\bar{\alpha}}.$ 
This equation for $U_{n_i}$ and $R_{n_i}$ indicates that $U_{n_i}$ is the arithmetic mean of the random variable $R_{n_i}$ for given $\bm{\bar{\alpha}}.$ 
In addition, while function $U_{n_i}$ indicates the expected net profit that node $n_i$ can earn for the time unit, random variable $R_{n_i}$ represents all possible values of the net profit that node $n_i$ can obtain for the time unit.
For clarity, we give an example of the Bitcoin system, whereby $R_{n_i}$ and $\Pr(R_{n_i}|\,\bm{\bar{\alpha}})$ are defined as:
\vspace{-1mm}
$$ R_{n_i}=\begin{cases}
12.5 \text{ BTC}-c_{n_i} & \text{if } n_i \text{ generates a block}\\
-c_{n_i} & \text{otherwise,}
\end{cases}$$
\vspace{-3mm}
$$\Pr(R_{n_i}=a|\,\bm{\bar{\alpha}})=\begin{cases}
\frac{\alpha_{n_i}}{\sum_{n_j\in \mathcal{N}}\alpha_{n_j}} & \text{if } a=12.5\text{ BTC}-c_{n_i}\\
1-\frac{\alpha_{n_i}}{\sum_{n_j\in \mathcal{N}}\alpha_{n_j}} & \text{otherwise,}
\end{cases} 
$$
where $c_{n_i}$ represents all costs associated with running node $n_i$ during the time unit.
This is because a node currently earns 12.5 BTC as the block reward, and the probability of generating a block is proportional to its computing resource. 
Moreover, $R_{n_i}$ cannot be greater than a constant $R_{\tt max},$ determined in the system. 
In other words, the system can provide nodes with a limited value of rewards at a given time. 
Indeed, the reward that a node can receive for a time unit cannot be infinity, and problems such as inflation would occur if the reward were significantly large. 

In addition, if nodes can receive more rewards when they have larger resource power, then players would increase their resources by spending a part of the earned profit. 
In that case, for simplicity, we assume that all players increase their resource power per earned net profit $R_{n_i}$ at rate $r$ every time.
For example, if a node earns a net profit $R_{n_i}^t$ at time $t,$ the node's resource power would increase by $r\cdot R_{n_i}^t$ after time $t.$

We also define the \textit{Sybil cost function} $C(\bm{\bar{\alpha}_{\mathcal{N}_{p_i}}})$ as an additional cost that a player should pay per time unit to run multiple nodes compared to the total cost of when those nodes are run by different players. 
The cost $C(\bm{\bar{\alpha}_{\mathcal{N}_{p_i}}})$ would be 0 if $|\mathcal{N}_{p_i}|$ is 1 (i.e., the player $p_i$ runs one node). 
Moreover, the case where $C(\bm{\bar{\alpha}_{\mathcal{N}_{p_i}}})>0$ for any set $\mathcal{N}_{p_i}$ such that $|\mathcal{N}_{p_i}|>1$ indicates that the cost for one player to run $M(>1)$ nodes is always greater than the total cost for $M$ players each running one node.
Note that this definition does not just imply that it is expensive to run many nodes, the cost of which is usually referred to as Sybil costs in the consensus protocol~\cite{sybil}, 
\textit{this function implies that the total cost for running multiple nodes depends on whether one player runs those nodes.}

Finally, we assume that all players are rational. Thus, they act in the system for higher utility. 
More specifically, if there is a coalition of players in which the members can earn a higher profit, they delegate their power to form such a coalition (formally, it is referred to as a cooperative game). 
In addition, if it is more profitable for a player to run multiple nodes as opposed to one node, the player would run multiple nodes.

\begin{table}[ht] 
\centering
\caption{List of parameters.}
\vspace{-2mm}
\label{tab:par}
\begin{small}
\renewcommand{\tabcolsep}{1pt}
\begin{tabular}{|>{\centering\arraybackslash}m{2cm}|m{6cm}|}
\hline
\textbf{Notation} & \multicolumn{1}{c|}{\textbf{Definition}} \\ \hline\hline
$p_i$ & Player of index $i$ \\ \hline 
$\mathcal{P}$ &  The set of players running nodes in the consensus protocol \\ \hline
$n_i$ & Node of index $i$ \\ \hline
$\mathcal{N}$ & The set of nodes in the consensus protocol \\ \hline
$\mathcal{N}_{p_i}$ & The set of nodes owned by $p_i$ \\ \hline
$\alpha_{n_i}$, $\alpha_{p_i}$ & The resource power of node $n_i$ and player $p_i$ \\ \hline
$\bm{\bar{\alpha}}$ & The vector of resource power $\alpha_{n_i}$ for all nodes \\ \hline
$\mathcal{P}_{\alpha}$ & The set of players with positive resource power \\ \hline
$EP_{p_i}$ &   The effective power of nodes run by $p_i$    \\ \hline
$EP_{\tt max}, EP_{\delta}$ &    The maximum and $\delta$-th percentile of effective power of players running nodes   \\ \hline
$\bm{\bar{\alpha}_{\mathcal{N}_{p_i}}}$ & The vector of resource power of nodes run by $p_i$ \\ \hline
$\alpha^t_{n_i}$ &   The resource power of $n_i$ at time $t$ \\ \hline
\vspace{1mm}$\bm{\bar{\alpha}^t}$ &   The vector of resource power at time $t$ \\ \hline
$\bm{\bar{\alpha}_{-n_i}}$ & The vector of resource power of nodes other than $n_i$ \\ \hline
$U_{n_i}(\alpha_{n_i}, \bm{\bar{\alpha}_{-n_i}})$ & Utility function of $n_i$ \\ \hline
$R_{n_i}$ &    Random variable for a net reward of $n_i$ per time unit   \\ \hline
$R_{\tt max}$ & The maximum value of random variable $R_{n_i}$ \\ \hline
$r$ & Increasing rate of resource power per the net profit \\ \hline
$C(\bm{\bar{\alpha}_{\mathcal{N}_{p_i}}})$ &    Sybil cost function of $p_i$    \\ \hline
\end{tabular}\end{small}
\end{table}
\section{\large Conditions for full decentralization}
\label{sec:condition}

In this section, we study the circumstances under which a high level of decentralization can be achieved. 
To this end, we first formally define $(m, \varepsilon, \delta)$-decentralization and introduce the sufficient conditions of an incentive system that will allow a blockchain system to achieve  $(m, \varepsilon, \delta)$-decentralization. 
Then, based on these conditions, we find such an incentive system. 

\subsection{Full Decentralization}

The level of decentralization largely depends on two elements: the number of players running nodes in a consensus protocol and the distribution of effective power among the players. 
In this paper, full decentralization refers to the case where a system satisfies that 1) the number of players running nodes is as large as possible and 2) the distribution of effective power among the players is even. 
Therefore, if a system does not satisfy one of these requirements, it cannot become fully decentralized. 
For example, in the case where only two players run nodes with the same resource power, only the second requirement is satisfied.
As another example, a system may have many nodes run by independent players with the resource power being biased towards a few nodes. 
Then, in this case, only the first requirement is satisfied. 
Clearly, both of these cases have poor decentralization.
Note that, as described in Section~\ref{sec:background}, blockchain systems based on a peer-to-peer network can be manipulated by partial players who possess in excess of 50\% or 33\% of the effective power.
Next, to reflect the level of decentralization, we formally define $(m, \varepsilon, \delta)$-decentralization as follows. 

\begin{definition}[$(m, \varepsilon, \delta)$-Decentralization] \noindent 
\label{def:perfect}
For $1\leq m,$ $0\leq\varepsilon,$ and $0\leq \delta\leq 100$, a system is $\bm{(m, \varepsilon, \delta)}$\textbf{-decentralized} if it satisfies that 
\begin{enumerate}
    \item The size of $\mathcal{P}$ is not less than $m$ (i.e., $|\mathcal{P}|\geq m$),  
    \item 
    The ratio between the effective power of the richest player, $EP_{\tt max}$, and the $\delta$-th percentile player, $EP_{\delta}$, is less than or equal to $1+\varepsilon$ (i.e., $\frac{EP_{\tt max}}{EP_\delta}\leq 1+\varepsilon$).
\end{enumerate}
\end{definition}

In Def.~\ref{def:perfect}, the first requirement indicates that not only there are players that possess resources, but also that at least $m$ players should run their own nodes. 
In other words, too many players do not combine into one node (i.e., many players do not delegate their resources to others.). 
Note that delegation decreases the number of players running nodes in the consensus protocol.
The second requirement ensures an even distribution of the effective power among players running nodes. 
Specifically, for the richest and the $\delta$-th percentile players running nodes, the gap between their effective power should be small. 
According to Def.~\ref{def:perfect}, it is evident that as $m$ increases and $\varepsilon$ and $\delta$ decrease, the level of decentralization increases. 
Therefore, $(m,0,0)$-decentralization for a sufficiently large $m$ indicates full decentralization where there is a sufficiently large number of independent players and everyone has the same power. 

\subsection{Sufficient Conditions}
\label{subsec:condition}

Next, we introduce four sufficient conditions of an incentive system that will allow a blockchain system to achieve $(m, \varepsilon, \delta)$-decentralization with probability 1. 
We first revisit the two requirements of $(m, \varepsilon, \delta)$-decentralization.
For the first requirement in Def.~\ref{def:perfect}, the size of $\mathcal{N}$ should be greater than or equal to $m$ because the size of $\mathcal{P}$ is never greater than that of $\mathcal{N}.$ 
This can be achieved by assigning rewards to at least $m$ nodes. This approach is presented in Condition~\ref{con:resource} (GR-$m$). 
In addition, it should not be more profitable for too many players to combine into a few nodes than it is when they run their nodes directly. 
If delegating is more profitable than not delegating, many players with resource power would delegate their power to a few players, resulting in $|\mathcal{P}| < m.$
Condition~\ref{con:coll} (ND-$m$) states that it should not be more profitable for nodes run by independent (or different) players to combine into fewer nodes when the number of all players running nodes is not greater than $m$. 

\begin{condition}[\textbf{Giving Rewards (GR-$m$)}]
At least $m$ nodes should earn net profit.
Formally, for any $\bar{\bm{\alpha}},$ $|\mathcal{N}^+|\geq m,$ where
$$\mathcal{N}^{+}=\{n_i\in \mathcal{N}\,|\,U_{n_i}(\alpha_{n_i}, \bm{\bar{\alpha}_{-n_i}})>0\}.$$
\vspace{-2mm}
\label{con:resource}
\end{condition}

This condition states that some players can earn a reward by running a node, which makes the number of existing nodes equal to or greater than $m.$
Meanwhile, if the system does not give net profit, rational players would not run a node because the system requires a player to possess a specific resource (i.e., $\alpha_{n_i}>0$) in order to run a node unlike other peer-to-peer systems such as Tor. 
Simply put, players should invest their resource power elsewhere for higher profits instead of participating in a consensus protocol with no net profit, which is called an opportunity cost~\cite{need}. 
As a result, to reach $(m,\delta,\epsilon)$-decentralization, it is also necessary for a system to give net profit to some nodes.

\begin{condition}[\textbf{Non-Delegation (ND-$m$)}] \label{con:coll}
Nodes run by different players do not combine into fewer nodes unless the number of all players running their nodes is greater than $m.$
Before defining it formally, we denote a set of nodes run by different players by $\mathcal{S}^d$. That is, 
for any $n_i, n_j\in\mathcal{S}^d$, the two players running $n_i$ and $n_j$ are different.
We also let $s^d$ denote a proper subset of $\mathcal{S}^d$ such that 
$|\mathcal{P}(\mathcal{N}\backslash \mathcal{S}^d \cup s^d)|< m$, where  
$$\mathcal{P}(\mathcal{N}\backslash \mathcal{S}^d \cup s^d)=\{p_i\in\mathcal{P}\,|\,\exists n_i\in (\mathcal{N}\backslash \mathcal{S}^d \cup s^d) \text{ s.t. } n_i\in\mathcal{N}_{p_i}\}.$$
Then, for any set of nodes $\mathcal{S}^d,$ 

\vspace{-2mm}
\begin{eqnarray}
    &&  \hspace*{-10mm}\sum_{n_i\in \mathcal{S}^d}U_{n_i}(\alpha_{n_i},\bm{\bar{\alpha}_{-n_i}})\geq\notag\\ &&\hspace*{-12mm}  \max_{\substack{s^d \varsubsetneq \mathcal{S}^d \\
    \bm{\bar{\alpha}_d}\in s^d_\alpha}}\Bigl\{\sum_{\alpha_{n_i}\in\bm{\bar{\alpha}_d}}U_{n_i}(\alpha_{n_i},\bm{\alpha_{-n_i}^{-}}(\mathcal{S}^d\symbol{92} s^d))\Bigr\}, \label{eq:coll}
\end{eqnarray}
where,
\vspace{-2mm}
$$s^d_\alpha=\Bigl\{\bm{\bar{\alpha}_d}=(\alpha_{n_i})_{n_i\in s^d}\,\Big|\,\sum_{\alpha_{n_i}\in \bm{\bar{\alpha}_d}}\alpha_{n_i}= \sum_{n_i\in \mathcal{S}_d}\alpha_{n_i}\Bigr\}, 
$$
and 
$\bm{\alpha_{-n_i}^-}(\mathcal{S}^d\symbol{92} s^d)=(\alpha_{n_j})_{n_j\not\in \mathcal{S}^d\symbol{92} s^d, n_j\not=n_i}.$ 
\end{condition}

The set $\mathcal{P}(\mathcal{N}\backslash \mathcal{S}^d \cup s^d)$ represents all players running nodes that do not belong to $\mathcal{S}^d\backslash s^d$. 
In Eq.~\eqref{eq:coll}, the left-hand side represents the total utility of the nodes in $\mathcal{S}^d$ that are individually run by different players. 
Here, given that $\mathcal{S}^d \subseteq \mathcal{N},$ we note that $\bm{\bar{\alpha}_{-n_i}}$ includes the resource power of the nodes in $\mathcal{S}^d$ except for node $n_i.$  
The right-hand side represents the maximum total utility of the nodes in $s^d$ when the nodes in $\mathcal{S}^d$ are combined into fewer nodes belonging to $s^d$ by delegation of resource power of players. 
Note that $|s^d| < |\mathcal{S}^d|$ because $s^d \varsubsetneq \mathcal{S}^d.$ 
Therefore, Eq.~\eqref{eq:coll} indicates that the utility in the case where multiple players delegate their power to fewer players is not greater than that for the case where the players directly run nodes.
As a result, ND-$m$ prevents delegation that results in the number of players running nodes being less than $m,$ and the first requirement of $(m,\varepsilon,\delta)$-decentralization can be met when GR-$m$ and ND-$m$ hold. 

Next, we consider the second requirement in Def.~\ref{def:perfect}. 
One way to achieve an even distribution of effective power among players is to cause the system to have an even resource power distribution among nodes while each player has only one node. 
Note that in this case where each player has only one node, 
an even distribution of their effective power is equivalent to an even resource power distribution among nodes. 
Condition~\ref{con:sybil} (NS-$\delta$) states that, for any player with above the $\delta$-th percentile effective power, running multiple nodes is not more profitable than running one node. 
In addition, to reach a state where the richest and the $\delta$-th percentile nodes possess similar resource power, the ratio between the resource power of these two nodes should \textit{converge in probability} to a value of less than $1+\varepsilon.$
This is presented in Condition~\ref{con:density} (ED-$(\varepsilon, \delta)$).

\begin{condition}[\textbf{No Sybil nodes (NS-$\delta$)}] \label{con:sybil}
For any player with effective power not less than $EP_{\delta}$, participation with multiple nodes is not more profitable than participation with one node.
Formally, for any player $p_i$ with effective power $\alpha\geq EP_{\delta},$ 

\vspace{-2mm}
\begin{align}
    \max_{\substack{\{\mathcal{N}_{p_i}:\,|\mathcal{N}_{p_i}|>1\}\\ \bm{\bar{\alpha}_{\mathcal{N}_{p_i}}}\in\mathcal{S}_\alpha^{p_i}}}&\Bigl\{\sum_{\alpha_{n_i}\in\bm{\bar{\alpha}_{\mathcal{N}_{p_i}}}} U_{n_i}\left(\alpha_{n_i},\bm{\alpha_{-n_i}^{+}}(\mathcal{N}_{p_i})\right)-C(\bm{\bar{\alpha}_{\mathcal{N}_{p_i}}})\Bigr\}\notag
    \\ \vspace{-2mm} 
    &\leq U_{n_j}(\alpha_{n_j}=\alpha,\bm{\bar{\alpha}_{-\mathcal{N}_{p_i}}}), \label{eq:sybil}
\end{align}

\noindent where node $n_j\in\mathcal{N}_{p_i},$ the set $\small \bm{\bar{\alpha}_{-\mathcal{N}_{p_i}}}=(\alpha_{n_k})_{n_k\not\in\mathcal{N}_{p_i}}$,
$\bm{\alpha_{-n_i}^{+}}(\mathcal{N}_{p_i})=\bm{\bar{\alpha}_{-\mathcal{N}_{p_i}}}\Vert (\alpha_{n_k})_{n_k\in \mathcal{N}_{p_i}, n_k\not=n_i}$, and

$$\mathcal{S}_\alpha^{p_i}=\Bigl\{\bm{\bar{\alpha}_{\mathcal{N}_{p_i}}}=(\alpha_{n_i})_{n_i\in \mathcal{N}_{p_i}}\Big|\sum_{\alpha_{n_i}\in \bm{\bar{\alpha}_{\mathcal{N}_{p_i}}}}\alpha_{n_i}= \alpha\Bigr\}. \vspace{-1mm}
$$
\end{condition}
In Eq.~\eqref{eq:sybil}, the left and right-hand sides represent the maximum utility of the case where a player runs multiple nodes of which the total resource power is $\alpha,$ and the utility of the case where the player runs only one node $n_j$ with resource power $\alpha,$ respectively.
Therefore, Eq.~\eqref{eq:sybil} indicates that a player with equal to or greater than the $\delta$-th percentile effective power can earn the maximum utility when running one node.

\begin{condition}[\textbf{Even Distribution (ED-$(\varepsilon, \delta)$)}]
The ratio between the resource power of the richest and the $\delta$-th percentile nodes should \textbf{converge in probability} to a value less than $1+\varepsilon.$
Formally, when $\alpha_{\tt max}^t$ and $\alpha_{\delta}^t$ represent the maximum and the $\delta$-th percentile of $\{\alpha_{n_i}^t|n_i\in\mathcal{N}^t\}$, respectively, 
\begin{equation*}
    \lim_{t\rightarrow\infty} {\Pr}\Bigl[\,\frac{\alpha_{\tt max}^t}{\alpha_{\delta}^t}\leq 1+\varepsilon \Bigr]=1.
\vspace{-3mm}
\end{equation*}
\label{con:density}
\end{condition}

The above condition indicates that when enough time is given, the ratio between the resource power of the richest and the $\delta$-th percentile nodes reaches a value less than $1+\varepsilon$ with probability 1. 
We note that $\alpha^t_{n_i}$ changes over time, depending on the behavior of each player. 
In particular, if it is profitable for a player to increase their effective power, $\alpha^t_{n_i}$ would be a random variable related to $R_{n_i}^t$ because a player would reinvest part of their net profit $R_{n_i}^t$ to increase their resources. 
More specifically, in that case, $\alpha^t_{n_i}$ increases to $\alpha^t_{n_i}+rR_{n_i}^t$ after time $t$ as described in Section~\ref{sec:model}.

As a result, these four conditions allow blockchain systems to reach $(m, \varepsilon, \delta)$-decentralization with probability 1, as is presented in the following theorem. 
The proof of the theorem is omitted because it follows the above logic.

\begin{theorem}
For any initial state, a system satisfying GR-$m$, ND-$m$, NS-$\delta$, and ED-$(\varepsilon, \delta)$ converges in probability to $(m,\varepsilon,\delta)$-decentralization.
\end{theorem}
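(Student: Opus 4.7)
The plan is to verify the two defining requirements of $(m,\varepsilon,\delta)$-decentralization in turn, showing that each is forced by pairs of the four conditions, and then combine them via the convergence-in-probability statement supplied by ED-$(\varepsilon,\delta)$. So first I would argue about $|\mathcal{P}|$ deterministically (it either holds at all times once rational agents equilibrate, or fails only in transient configurations that rationality immediately destroys), and then about the effective-power ratio stochastically.

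For the first requirement, I would argue as follows. By GR-$m$, the set $\mathcal{N}^{+}$ of nodes with strictly positive utility satisfies $|\mathcal{N}^{+}|\ge m$ at every time $t$, so there exist at least $m$ profitable ``slots'' to be filled. Because players are rational and face a positive opportunity cost if they fail to run any node whose utility exceeds zero, every such slot corresponds to a node actually being run, hence $|\mathcal{N}^t|\ge m$. It remains to show that these $m$ nodes are in fact controlled by $m$ distinct players, i.e. that $|\mathcal{P}^t|\ge m$. Suppose for contradiction that a rational coalition would choose to merge so that $|\mathcal{P}^t|<m$; such a merge corresponds to choosing some $\mathcal{S}^d$ and a proper subset $s^d\subsetneq\mathcal{S}^d$ with $|\mathcal{P}(\mathcal{N}\setminus\mathcal{S}^d\cup s^d)|<m$. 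But ND-$m$ says exactly that this merge cannot strictly increase total utility, so no rational coalition has an incentive to reduce $|\mathcal{P}^t|$ below $m$. Hence $|\mathcal{P}^t|\ge m$ for every $t$ once rational agents have settled on their strategy.

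For the second requirement, I would first use NS-$\delta$ to reduce the effective-power statement to a resource-power statement at the node level. Fix any player $p_i$ with $EP_{p_i}\ge EP_\delta$. By NS-$\delta$, the utility-maximising strategy under $\mathcal{N}_{p_i}$ with $|\mathcal{N}_{p_i}|>1$ is weakly dominated by concentrating all of $p_i$'s resource into a single node of power $EP_{p_i}$. Since players are rational, every such $p_i$ runs exactly one node, whose resource power coincides with $EP_{p_i}$. Consequently the top of the effective-power distribution over $\mathcal{P}$ is in bijection with the top of the node-level resource distribution over $\mathcal{N}$, yielding $EP_{\mathtt{max}}^{t}=\alpha^{t}_{\mathtt{max}}$ and $EP_{\delta}^{t}=\alpha^{t}_{\delta}$. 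Then ED-$(\varepsilon,\delta)$ gives
\begin{equation*}
\lim_{t\to\infty}\Pr\!\left[\frac{EP_{\mathtt{max}}^{t}}{EP_{\delta}^{t}}\le 1+\varepsilon\right]=\lim_{t\to\infty}\Pr\!\left[\frac{\alpha_{\mathtt{max}}^{t}}{\alpha_{\delta}^{t}}\le 1+\varepsilon\right]=1.
\end{equation*}

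Intersecting the deterministic event $\{|\mathcal{P}^t|\ge m\}$ with the probability-one event from ED-$(\varepsilon,\delta)$ then gives convergence in probability to $(m,\varepsilon,\delta)$-decentralization, regardless of the initial state. The main obstacle, and the subtlety that justifies writing out the argument rather than declaring it obvious, is Step~2's bijection: one must be careful that NS-$\delta$ only constrains players above the $\delta$-th percentile, so strictly speaking the identification $EP^t_\delta=\alpha^t_\delta$ requires verifying that the $\delta$-th percentile of the effective-power distribution over $\mathcal{P}$ aligns with the $\delta$-th percentile of the resource distribution over $\mathcal{N}$; players below threshold may still split into multiple nodes, but because they lie below $EP_\delta$ they cannot raise the numerator $EP_{\mathtt{max}}$ nor lower the denominator $EP_\delta$, so the ratio bound is preserved.
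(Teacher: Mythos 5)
Your proposal is correct and follows essentially the same logic the paper uses: GR-$m$ together with ND-$m$ secures the first requirement ($|\mathcal{P}|\geq m$), NS-$\delta$ collapses the effective-power ratio to the node-level resource ratio for players above the $\delta$-th percentile, and ED-$(\varepsilon,\delta)$ supplies the convergence in probability; the paper itself omits the formal proof precisely because it "follows the above logic" of Section~\ref{subsec:condition}. Your closing remark on the percentile alignment is a worthwhile addition the paper glosses over, and the inequality indeed transfers in the right direction since splitting by sub-threshold players can only depress $\alpha^t_\delta$ relative to $EP^t_\delta$.
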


\subsection{\normalsize Possibility of Full Decentralization in Blockchain}
\label{subsec:possible}

To determine whether blockchain systems can achieve full decentralization, we study the existence of an incentive system satisfying these four conditions for a sufficiently large $m$, $\delta=0$, and $\varepsilon=0.$ 
We provide an example of an incentive system that satisfies the four conditions, thus allowing full decentralization to be achieved. 

It is also important to increase the total resource power involved in the consensus protocol from the perspective of security. 
This is because if the total resource power involved in the consensus protocol is small, an attacker can easily subvert the system. 
Therefore, to prevent this, we construct $U_{n_i}(\alpha_{n_i}, \bm{\bar{\alpha}_{-n_i}})$ as an increasing function of $\alpha_{n_i}$, which implies that players continually increase their resource power. 
In addition, we construct random variable $R_{n_i}$ with probability $\Pr(R_{n_i} | \bm{\bar{\alpha}})$ as follows: 
\begin{align}
R_{n_i}=&
\begin{cases}
B_r &\text{if $n_i$ generates a block}\\
0  &\text{otherwise}
\end{cases},\label{eq:br}\\
\Pr(R_{n_i}=a\,|\,\bm{\bar{\alpha}})=&
\begin{cases}
\frac{\sqrt{\alpha_{n_i}}}{\sum_{n_j\in\mathcal{N}}\sqrt{\alpha_{n_j}}} & \text{if }a=B_r\label{eq:prob}\\
1-\frac{\sqrt{\alpha_{n_i}}}{\sum_{n_j\in\mathcal{N}}\sqrt{\alpha_{n_j}}}  &\text{otherwise}
\end{cases},
\end{align}
\begin{equation}
U_{n_i}(\alpha_{n_i}, \bm{\bar{\alpha}_{-n_i}})=\frac{B_r\cdot \sqrt{\alpha_{n_i}}}{\sum_{n_j\in\mathcal{N}}\sqrt{\alpha_{n_j}}},\label{eq:example}     
\end{equation}
where the superscript $t$ representing time $t$ is omitted for convenience.
This incentive system indicates that when a node generates a block, it earns the block reward $B_r,$ and the probability of generating a block is proportional to the square root of the node's resource power.
Under these circumstances, we can easily check that the utility function $U_{n_i}$ is a mean of $R_{n_i}.$

Next, we show that this incentive system satisfies the four conditions. 
Firstly, the utility satisfies GR-$m$ for any $m$ because it is always positive. 
ND-$m$ is also satisfied because the following equation is satisfied:
This can be easily proven by using the fact that the utility is a concave function. 

\vspace{-2mm}
\begin{equation*}
    \sum_{i=1}^{m}U_{n_i}(\alpha_{n_i},\bm{\bar{\alpha}_{-n_i}})> U_{n_i}\Bigl(\sum_{i=1}^{m}\alpha_{n_i}\,\Bigg|\,(\alpha_{n_j})_{j>m}\Bigr)
\end{equation*}

Thirdly, to make NS-0 true, we can choose a proper Sybil cost function $C$ of Eq.~\eqref{eq:sybil}, which satisfies the following:
\begin{small}
\begin{equation*}
    \sum_{i=1}^{M}U_{n_i}(\alpha_{n_i},\bm{\bar{\alpha}_{-n_i}})- U_{n_i}\Bigl(\sum_{i=1}^{M}\alpha_{n_i}\,\Big|\,(\alpha_{n_j})_{j>M}\Bigr)\leq C((\alpha_{n_i})_{i\leq M})
\end{equation*}
\end{small}
Under this Sybil cost function, the players would run only one node. 
Finally, to show that this incentive system satisfies ED-$(0,0)$, we use the following theorem, whose proof is presented in Section~\ref{app:example}. 

\begin{theorem}
\label{thm:example}
Assume that $R_{n_i}$ is defined as follows:
\vspace{-1mm}
\begin{equation*}
    R_{n_i}=\begin{cases}
f(\bm{\bar{\alpha}}) &\text{if $n_i$ generates a block}\\
0  &\text{otherwise}
\end{cases},
\vspace{-1mm}
\end{equation*}
where $f: \mathbb{R}^{|\mathcal{N}|} \mapsto\mathbb{R}^+.$
If $U_{n_i}(\alpha_{n_i},\bm{\bar{\alpha}_{-n_i}})$ is a strictly increasing function of $\alpha_{n_i}$ and the following equation is satisfied for all $\alpha_{n_i}>\alpha_{n_j},$ ED-$(\varepsilon, \delta)$ is satisfied for all $\varepsilon$ and $\delta$.
\vspace{-1mm}
\begin{equation}
\frac{U_{n_i}(\alpha_{n_i}, \bm{\bar{\alpha}_{-n_i}})}{\alpha_{n_i}}<\frac{U_{n_j}(\alpha_{n_j}, \bm{\bar{\alpha}_{-n_j}})}{\alpha_{n_j}} \label{eq:density2}
\vspace{-1mm}
\end{equation}
On the contrary, if $U_{n_i}(\alpha_{n_i},\bm{\bar{\alpha}_{-n_i}})$ is a strictly increasing function of $\alpha_{n_i}$ and Eq.~\eqref{eq:density2} is not satisfied for all $\alpha_{n_i}>\alpha_{n_j},$ ED-$(\varepsilon, \delta)$ cannot be met for all $0\leq\varepsilon<\frac{EP_{\tt max}^0}{EP_{\delta}^0}-1$ and $0\leq\delta<100$. 
\end{theorem}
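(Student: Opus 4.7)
The plan is to read $U_{n_i}(\alpha_{n_i},\bm{\bar{\alpha}_{-n_i}})/\alpha_{n_i}$ as the expected per-unit growth rate of node $n_i$'s resource: since $\alpha_{n_i}^{t+1}=\alpha_{n_i}^t+rR_{n_i}^t$ with $E[R_{n_i}^t\mid \bm{\bar{\alpha}^t}]=U_{n_i}$, the expected multiplicative change over one time unit is $1+rU_{n_i}/\alpha_{n_i}$. Eq.~\eqref{eq:density2} is precisely the statement that this multiplicative factor is strictly smaller for the richer node, which is the mechanism that drives pairwise equalization of resource powers.

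For the forward direction, I would fix any pair with $\alpha_{n_i}^t>\alpha_{n_j}^t$ and study the ratio $Z_{ij}^t=\alpha_{n_i}^t/\alpha_{n_j}^t$. Using Eq.~\eqref{eq:density2} together with the fact that, by hypothesis, $R_{n_i}$ is a scaled Bernoulli so its conditional variance is controlled by $f(\bm{\bar{\alpha}^t})$, one obtains $E[Z_{ij}^{t+1}\mid Z_{ij}^t]\leq Z_{ij}^t$ with strict inequality on $\{Z_{ij}^t>1\}$. Promoting this supermartingale-type contraction to almost sure convergence requires a Lyapunov function---e.g.\ $V^t=\sum_{i<j}(\log\alpha_{n_i}^t-\log\alpha_{n_j}^t)^2$---or, equivalently, the ODE method of stochastic approximation applied to the mean flow $\dot{\alpha}_{n_i}=U_{n_i}$, whose equalization manifold $\{\alpha_{n_i}=\alpha_{n_j}\}$ is globally asymptotically stable precisely under Eq.~\eqref{eq:density2}. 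Specializing the pair $(n_i,n_j)$ to the current maximal and $\delta$-th percentile nodes then yields $\alpha_{\tt max}^t/\alpha_\delta^t\to 1$ in probability, hence ED-$(\varepsilon,\delta)$ for every $\varepsilon\geq 0$ and $0\leq\delta\leq 100$.

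For the converse, if Eq.~\eqref{eq:density2} fails for some pair $\alpha_{n_i}^0>\alpha_{n_j}^0$, the same one-step calculation instead produces $E[Z_{ij}^{t+1}\mid Z_{ij}^t]\geq Z_{ij}^t$, so $Z_{ij}^t$ is a submartingale and $E[Z_{ij}^t]\geq Z_{ij}^0$ for every $t$. Combined with a Markov/second-moment bound, this keeps $\Pr[\alpha_{\tt max}^t/\alpha_\delta^t\leq 1+\varepsilon]$ bounded away from $1$ whenever $\varepsilon<EP_{\tt max}^0/EP_\delta^0-1$, so ED-$(\varepsilon,\delta)$ cannot hold in that range.

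The hard part is upgrading the one-step, pairwise contraction into genuine convergence in probability of the global ratio $\alpha_{\tt max}^t/\alpha_\delta^t$: the increments $R_{n_i}^t$ couple all nodes through $f(\bm{\bar{\alpha}^t})$, the process is unbounded, and a bare martingale-convergence argument does not apply. I would handle this either by bounding the conditional variance of $\log Z_{ij}^t$ uniformly in $t$ (exploiting that $R_{n_i}$ has Bernoulli-type support) and invoking Robbins--Siegmund, or by appealing to stochastic-approximation ODE results once Lipschitz continuity of $U_{n_i}$ and stability of the equalization manifold are verified. A secondary subtlety is that Eq.~\eqref{eq:density2} is phrased pairwise, whereas ED-$(\varepsilon,\delta)$ is a percentile statement, so one must argue that the pairwise contraction is uniform enough across ranks to produce simultaneous convergence of the maximum and the $\delta$-th percentile.
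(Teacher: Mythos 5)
Your intuition---reading $U_{n_i}/\alpha_{n_i}$ as the expected per-unit growth rate and Eq.~\eqref{eq:density2} as saying poor nodes grow faster---matches the paper's. But your execution diverges at the first technical step, and the divergence introduces a real gap. You track the raw pairwise ratio $Z_{ij}^t=\alpha_{n_i}^t/\alpha_{n_j}^t$ and assert $E[Z_{ij}^{t+1}\mid Z_{ij}^t]\leq Z_{ij}^t$. That inequality does not follow from Eq.~\eqref{eq:density2}. Since at most one node receives the reward $f(\bm{\bar{\alpha}^t})$ per step, a direct computation gives $E[Z_{ij}^{t+1}]-Z_{ij}^t=\Pr(R_{n_i}=f)\tfrac{rf}{\alpha_{n_j}}-\Pr(R_{n_j}=f)\tfrac{\alpha_{n_i}rf}{\alpha_{n_j}(\alpha_{n_j}+rf)}$, which is nonpositive only if $\tfrac{U_{n_i}}{\alpha_{n_i}}\leq\tfrac{U_{n_j}}{\alpha_{n_j}+rf}$---strictly stronger than Eq.~\eqref{eq:density2} because the denominator $\alpha_{n_j}$ moves when $n_j$ wins (convexity of $x\mapsto 1/x$ works against you). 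So your supermartingale is not established, and the Lyapunov/Robbins--Siegmund/ODE machinery you propose to layer on top of it is resting on an unproved foundation; you also correctly flag that even a valid supermartingale only converges a.s.\ to \emph{some} limit $\geq 1$, not necessarily to $1$, and you leave that upgrade unresolved.

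The paper avoids both problems with one normalization: it works with the power \emph{fractions} $\beta_{n_i}^t=\alpha_{n_i}^t/\sum_j\alpha_{n_j}^t$, for which every outcome shares the common denominator $1+rB$ (where $B=f(\bm{\bar{\alpha}^t})/\sum_j\alpha_{n_j}^t$), so the one-step conditional expectation is linear in the $\beta$'s and Eq.~\eqref{eq:density2} cleanly yields the sandwich $\beta_{n_1}^t\leq E[\beta_{n_i}^{t+1}\mid\bm{\bar{\alpha}^t}]\leq\beta_{n_M}^t$ for \emph{every} node $i$, where $n_1$ and $n_M$ are the current poorest and richest. Hence $E[\beta_{n_1}^t]$ and $E[\beta_{n_M}^t]$ are monotone bounded real sequences; the monotone convergence theorem plus a short contradiction argument forces their limits to coincide, and since $\beta_{n_M}^t\geq\beta_{n_1}^t$ pointwise this gives $E[|\beta_{n_M}^t-\beta_{n_1}^t|]\to 0$, i.e.\ $L^1$ convergence, which implies the convergence in probability that ED-$(\varepsilon,\delta)$ demands---no stochastic-approximation results needed, and the max/percentile issue you worry about is handled automatically because the extremes themselves are controlled. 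For the converse, the paper likewise distinguishes the strictly-reversed case (where $E[\beta_{n_M}^t]$ increases to $1$, i.e.\ full centralization) from the equality case (where $E[\beta_{n_i}^t]$ is constant but $\beta_{n_i}^t$ provably fails to converge in probability); your Markov/second-moment sketch for the converse would need the same case split, since $E[Z_{ij}^t]\geq Z_{ij}^0$ alone does not preclude $\Pr[Z_{ij}^t\leq 1+\varepsilon]\to 1$. I recommend redoing the forward direction with normalized fractions and the sandwich-by-extremes argument.
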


Thm.~\ref{thm:example} states that when the utility is a strictly increasing function of $\alpha_{n_i}$ and Eq.~\eqref{eq:density2} is satisfied under the assumption that the block reward is constant for a given $\bm{\bar{\alpha}},$ an even power distribution is achieved. 
Meanwhile, if Eq.~\eqref{eq:density2} is not met, the gap between rich and poor nodes cannot be narrowed. 
Specifically, for the case where $\frac{U_{n_i}(\alpha_{n_i},\bm{\bar{\alpha}_{-n_i}})}{\alpha_{n_i}}$ is constant, the large gap between rich and poor nodes can be continued\footnote{Formally speaking, the probability of achieving an even power distribution among players is less than 1, and in Thm.~\ref{thm:impossible}, we will address how small the probability is.}. 
Moreover, the gap would widen when $\frac{U_{n_i}(\alpha_{n_i},\bm{\bar{\alpha}_{-n_i}})}{\alpha_{n_i}}$ is a strictly increasing function of $\alpha_{n_i}$. 
In fact, here $\frac{U_{n_i}(\alpha_{n_i},\bm{\bar{\alpha}_{-n_i}})}{\alpha_{n_i}}$ can be considered as an increasing rate of resource power of a node.  
Thus, Eq.~\eqref{eq:density2} indicates that the resource power of a poor node increases faster than that of a rich node. 

Now, we describe why the incentive system defined by Eq.~\eqref{eq:br}, \eqref{eq:prob}, and \eqref{eq:example} satisfies ED-$(0,0)$. 
Firstly, Eq.~\eqref{eq:br} is a form of $R_{n_i}$ described in Thm.~\ref{thm:example}, and Eq.~\eqref{eq:example} implies that $U_{n_i}$ is a strictly increasing function of $\alpha_{n_i}$.
Therefore, ED-$(0,0)$ is met by Thm.~\ref{thm:example} because Eq.~\eqref{eq:example} satisfies Eq.~\eqref{eq:density2} for all $\alpha_{n_i}>\alpha_{n_j}.$  
As a result, the incentive system defined by Eq.~\eqref{eq:br}, \eqref{eq:prob}, and \eqref{eq:example} satisfies the four sufficient conditions,
\textit{implying that full decentralization is possible under a proper Sybil cost function $C.$}
Moreover, Thm.~\ref{thm:example} describes the existence of infinitely many incentive systems that can facilitate full decentralization. 
Interestingly, we have found that an incentive scheme similar to this is being considered by the Ethereum foundation, who have also indicated that \textit{real identity management} can be important~\cite{quadratic_voting}.
This finding is in accordance with our results.

\section{\large Impossibility of Full Decentralization in Permissionless Blockchains}
\label{sec:impossible}

In the previous section, we showed that blockchain systems can be fully decentralized under an appropriate Sybil cost function $C$, where the Sybil cost represents the additional costs for a player running multiple nodes when compared to the total cost for multiple players each running one node. 
In order for a system to implement the Sybil cost, we can easily consider real identity management where a trusted third party (TTP) manages the \textit{real identities} of players. 
When real identity management exists, it is certainly possible to implement a Sybil cost. 
However, the existence of a TTP contradicts the concept of decentralization, and thus, we cannot adopt such identity management for good decentralization.
Currently, it is not yet known how permissionless blockchains without such identity management can implement Sybil cost. 
In fact, many cryptocurrencies are based on permissionless blockchains, and many people want to design permissionless blockchains on the basis of their nature.   
Unfortunately, as far as we know, the Sybil cost function $C$ of all permissionless blockchains is currently zero. 
Taking this into consideration (i.e., $C=0$), we examine whether blockchains without Sybil costs can achieve good decentralization in this section.

\subsection{Almost Impossible Full Decentralization}

To determine whether it is possible for a system without Sybil costs to achieve full decentralization, we describe the following theorem.

\begin{theorem}
Consider a system without Sybil costs (i.e., $C=0$).
Then, the probability of the system achieving $(m, \varepsilon, \delta)$-decentralization is always less than or equal to 
\vspace{-1mm}
$$\max_{s\in \mathcal{S}} \Pr[\textup{System }s \textup{ reaches } (m, \varepsilon, \delta)\textup{-decentralization}], \quad \text{where}\vspace{-1mm}$$
$\mathcal{S}$ is the set of all systems satisfying GR-$|\mathcal{N}|$, ND-$|\mathcal{P}_\alpha|$, and NS-$0$. 
\label{thm:suff_nec}
\end{theorem}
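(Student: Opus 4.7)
The plan is to show that for any system $s$ with $C=0$, there exists a system $s^\ast\in\mathcal{S}$ with $\Pr[s^\ast\text{ reaches }(m,\varepsilon,\delta)\text{-dec.}]\geq\Pr[s\text{ reaches }(m,\varepsilon,\delta)\text{-dec.}]$, which immediately yields the theorem by taking the maximum over $\mathcal{S}$. I proceed by case analysis on which of GR-$|\mathcal{N}|$, ND-$|\mathcal{P}_\alpha|$, NS-$0$ is violated by $s$; if none is violated, then $s\in\mathcal{S}$ and the bound is trivial.

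For each violation, the rationality assumption of Section~\ref{sec:model} pushes the evolution of $s$ away from decentralization. If GR-$|\mathcal{N}|$ fails, some node $n_i$ has $U_{n_i}\leq 0$ and by the opportunity-cost argument its player exits, reducing the effective $|\mathcal{N}|$ and hence $|\mathcal{P}|$. If ND-$|\mathcal{P}_\alpha|$ fails, some set of independently-run nodes can be strictly improved by delegation in the sense of Condition~\ref{con:coll}, so rational players delegate and $|\mathcal{P}|$ drops. If NS-$0$ fails and $C=0$, some player strictly benefits from operating multiple nodes (since there is no Sybil penalty to offset the gain), and by doing so aggregates $EP_{p_i}=\sum_{n_i\in\mathcal{N}_{p_i}}\alpha_{n_i}$, inflating $EP_{\tt max}/EP_\delta$.

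To turn these observations into the bound, I would construct $s^\ast$ by keeping the reward law $\Pr(R_{n_i}\mid\bm{\bar{\alpha}})$ of $s$ but modifying the off-support utility so that (i) only nodes with $U_{n_i}>0$ exist, (ii) no delegation is profitable, and (iii) operating more than one node is not profitable --- for example, by adjoining a fictitious penalty that is inactive on the ``pure'' subspace in which each active player has exactly one node. By construction $s^\ast$ satisfies GR-$|\mathcal{N}^\ast|$, ND-$|\mathcal{P}^\ast_\alpha|$, and NS-$0$, so $s^\ast\in\mathcal{S}$. Moreover, because a permissionless blockchain without identity management sees only anonymous per-node resource power, the node-level dynamics of $\bm{\bar{\alpha}}^t$ under $s^\ast$ coincide with those of $s$ restricted to the pure subspace, and on that subspace $EP_{p_i}=\alpha_{n_i}$ so $EP_{\tt max}^\ast/EP_\delta^\ast=\alpha_{\tt max}^t/\alpha_\delta^t$. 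Coupling trajectories accordingly, any sample path of $s$ that reaches $(m,\varepsilon,\delta)$-decentralization corresponds to a sample path of $s^\ast$ that does as well (since $|\mathcal{P}^\ast|\geq|\mathcal{P}|$ and $EP_{\tt max}^\ast/EP_\delta^\ast\leq EP_{\tt max}/EP_\delta$ under the coupling), giving the probability inequality.

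The hard part is justifying the coupling rigorously, in particular the claim that the relabeling of Sybil-aggregated nodes into distinct single-node players preserves the joint distribution of $\bm{\bar{\alpha}}^t$ and that the extended utility in $s^\ast$ still satisfies Condition~\ref{con:coll} and Condition~\ref{con:sybil} for \emph{all} resource allocations (not merely the equilibrium one). The essential lever is the absence of identity management combined with $C=0$: the protocol's reward-assignment rule depends only on the vector $\bm{\bar{\alpha}}$, not on ownership, so the block-generation process is invariant under the relabeling. Once this invariance and the off-support extension are made precise, $s^\ast\in\mathcal{S}$ follows and the bound is obtained.
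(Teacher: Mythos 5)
Your central construction does not go through. You propose to build $s^\ast$ by ``adjoining a fictitious penalty'' that deters delegation and multi-node operation while being ``inactive on the pure subspace.'' But the theorem's hypothesis is $C=0$, and $\mathcal{S}$ lives inside the $C=0$ class: a charge levied exactly when one player operates several nodes, and not levied when different players operate the same node configuration, is by the paper's own definition (Section~\ref{sec:model}) a Sybil cost, so your $s^\ast$ is not an admissible system. With $C=0$ the utility can depend only on $\bm{\bar{\alpha}}$ and not on ownership; this is precisely why Lemma~\ref{lem:linear} shows that GR-$|\mathcal{N}|$, ND-$|\mathcal{P}_\alpha|$, and NS-$0$ together pin the utility down to the linear form $F(\sum_{n_j}\alpha_{n_j})\cdot\alpha_{n_i}$, leaving no room for an ``off-support extension.'' The coupling claim is also unjustified in the direction you need: splitting a Sybil-aggregating player into several single-node players does lower $EP_{\tt max}$, but it also inserts players at the bottom of the distribution and can therefore lower the $\delta$-th percentile $EP_\delta$, so the ratio $EP_{\tt max}/EP_\delta$ can move either way. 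Moreover, when NS-$0$ fails the rational players of $s$ actually leave the ``pure subspace,'' so the per-player reward (hence reinvestment) processes of $s$ and $s^\ast$ genuinely differ; node-level relabeling invariance of the block-generation rule does not yield the player-level pathwise domination you assert.

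The paper's proof takes a different route that your proposal is missing entirely: it is an observability argument rather than a domination-by-relabeling argument. It shows (Lemmas~\ref{lem:ep_delta} and~\ref{lem:empty}) that a mechanism can steer the state optimally toward $(m,\varepsilon,\delta)$-decentralization only if it can infer the effective-power distribution of the players above the $\delta$-th percentile from the observable node-level vector $\bm{\bar{\alpha}}$, and that under $C=0$ this inference is possible only when each such player runs a single node or when coalition is strictly profitable (the latter case being reduced to the former via a reward-sharing scheme over accounts). Lemma~\ref{lem:m2} then forces the linear utility, from which the three conditions and hence the stated upper bound follow. That information-theoretic step is the actual content of the theorem, and without it (or a correct substitute) your argument does not establish the bound.
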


\noindent 
GR-$|\mathcal{N}|$ means that all nodes can earn net profit, and the satisfaction of both ND-$|\mathcal{P}_\alpha|$ and NS-$0$ indicates that all players run only one node without delegating. 
\textbf{The above theorem implies that the maximum probability for a system, which satisfies GR-$|\mathcal{N}|$, ND-$|\mathcal{P}_\alpha|$, and NS-$0$, to reach $(m, \varepsilon, \delta)$-decentralization is equal to the global maximum probability.} 
Moreover, according to Thm.~\ref{thm:suff_nec}, there is a system satisfying GR-$|\mathcal{N}|$, ND-$|\mathcal{P}_\alpha|$, NS-$0$, and ED-$(\varepsilon, \delta)$ \textit{if and only if} there is a system that converges in probability to $(m, \varepsilon, \delta)$-decentralization. 
In other words, the fact that a system satisfying GR-$|\mathcal{N}|$, ND-$|\mathcal{P}_\alpha|$, NS-$0$, and ED-$(\varepsilon, \delta)$ should exist is \textbf{sufficient and necessary} to create a system converging in probability to $(m, \varepsilon, \delta)$-decentralization. 

The proof of Thm.~\ref{thm:suff_nec} is presented in Section~\ref{app:suff_nec}. 
In the proof, we use the fact that the system can optimally change the state (i.e., the effective power distribution among players above the $\delta$-th percentile) for $(m, \varepsilon, \delta)$-decentralization when the system can recognize the current state (i.e., the current effective power distribution among players above the $\delta$-th percentile). 
Then we prove that, to learn the current state, players above the $\delta$-th percentile should run only one node, or coalition of some players should be more profitable. 
In that case, to make a system most likely to reach $(m, \varepsilon, \delta)$-decentralization, resources of rich nodes should not increase through delegation of others. 
Considering this, we can derive Thm.~\ref{thm:suff_nec}.

According to Thm.~\ref{thm:suff_nec}, to find out if a system without Sybil costs can reach a high level of decentralization, it is sufficient to determine the maximum probability for a system satisfying GR-$|\mathcal{N}|$, ND-$|\mathcal{P}_\alpha|$, and NS-$0$ to reach $(m, \varepsilon, \delta)$-decentralization.
Therefore, we first find a utility function that satisfies GR-$|\mathcal{N}|$, ND-$|\mathcal{P}_\alpha|$, and NS-$0$ through the following lemma. 

\begin{lemma}
\label{lem:linear}
When the Sybil cost function $C$ is zero, GR-$|\mathcal{N}|$, ND-$|\mathcal{P}_\alpha|$, and NS-$0$ are met if and only if
\begin{equation}
U_{n_i}(\alpha_{n_i}, \bm{\bar{\alpha}_{-n_i}})=F\Bigl(\sum_{n_j\in\mathcal{N}}\alpha_{n_j}\Bigr)\cdot\alpha_{n_i}, \,\,\text{where}\,\, 
F: \mathbb{R}^+\mapsto\mathbb{R}^+. \label{eq:linear}
\end{equation}
\end{lemma}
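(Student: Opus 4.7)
The plan is to extract a Cauchy-type functional equation from the combined constraints ND-$|\mathcal{P}_\alpha|$ and NS-$0$ (with $C=0$), and then use positivity from GR-$|\mathcal{N}|$ to pin down linearity. The crucial observation is that, with $C=0$, NS-$0$ yields the upper bound $\sum U_{n_i}\le U(\sum\alpha_{n_i},\cdot)$ whenever one player regroups their nodes, while ND-$|\mathcal{P}_\alpha|$ supplies the matching lower bound whenever the combining reduces the number of node-running players (which is exactly the regime singled out by taking $m=|\mathcal{P}_\alpha|$). Hence, for any subset $\mathcal{S}\subseteq\mathcal{N}$ and any repartition of its total power $\alpha_{\mathcal{S}}=\sum_{n_i\in\mathcal{S}}\alpha_{n_i}$ into $\ell\le|\mathcal{S}|$ new node-powers (keeping the complement of $\mathcal{S}$ fixed), the sum of utilities over $\mathcal{S}$ is invariant.

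First I would apply this invariance with $\mathcal{S}=\mathcal{N}$ to conclude that the system-wide total $S(\bar{\bm{\alpha}})=\sum_{n_i\in\mathcal{N}}U_{n_i}$ is a function of $T=\sum_{n_j}\alpha_{n_j}$ alone. Second, I would apply it with $\mathcal{S}=\mathcal{N}\setminus\{n_i\}$ (collapsing every other node into a single node of power $T-\alpha_{n_i}$) and subtract to show that each $U_{n_i}(\alpha_{n_i},\bar{\bm{\alpha}}_{-n_i})$ depends only on $\alpha_{n_i}$ and $T$; write this common form as $V(\alpha_{n_i},T)$. Third, merging two nodes $n_1,n_2$ of powers $\alpha_1,\alpha_2$ into one while freezing every other node gives the Cauchy-type identity
\[
V(\alpha_1,T)+V(\alpha_2,T)=V(\alpha_1+\alpha_2,T)
\]
for all $\alpha_1,\alpha_2>0$ with $\alpha_1+\alpha_2\le T$.

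For the converse, assume $U_{n_i}=F(T)\,\alpha_{n_i}$ with $F>0$. GR-$|\mathcal{N}|$ is immediate since $F(T)\alpha_{n_i}>0$. For any subset of nodes being merged or split, the total $T$ is preserved and the subset's total utility equals $F(T)\cdot\alpha_{\mathcal{S}}$ both before and after, so ND-$|\mathcal{P}_\alpha|$ and NS-$0$ (the latter using $C=0$) hold with equality. So the only nontrivial direction is forward.

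The hard part is the jump from Cauchy additivity to genuine linearity, since Cauchy's equation admits pathological non-measurable solutions in general. I would bypass this by exploiting the strict positivity of $V$ supplied by GR-$|\mathcal{N}|$: additivity together with $V>0$ forces $V(\cdot,T)$ to be strictly increasing, and a monotone additive function on an interval is necessarily linear, yielding $V(\alpha,T)=\alpha\cdot F(T)$ with $F(T)>0$, which is exactly Eq.~\eqref{eq:linear}. A minor bookkeeping point is that the additive identity above only covers $\alpha_1+\alpha_2\le T$; this is resolved by noting that one can always enlarge $T$ by introducing dummy nodes of any positive power, so $F(T)=V(\alpha,T)/\alpha$ is well-defined and independent of the particular $\alpha\in(0,T]$ used to compute it.
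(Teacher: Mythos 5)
Your proof is correct, and its engine is the same as the paper's: with $C=0$, NS-$0$ bounds the total utility of a split configuration above by that of the merged one, ND-$|\mathcal{P}_\alpha|$ bounds it below, and together they force exact equality of group utilities under any merging or splitting; linearity is then a functional-equation consequence, with positivity of $F$ supplied by GR-$|\mathcal{N}|$. Where you genuinely diverge is in how the functional equation is solved. The paper splits a node of power $\alpha$ into $n$ equal pieces to get $U(\alpha/n)=U(\alpha)/n$, extends to rational multiples via the set $\mathcal{N}^{\mathbb{Q}}_\alpha$, and then appeals to ``the density of the rational numbers'' to pass to arbitrary real power fractions. Strictly speaking that last appeal is a gap: rational homogeneity (equivalently, the Cauchy equation) does not imply linearity without some regularity hypothesis, since pathological additive functions exist. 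You instead extract the two-variable additive identity $V(\alpha_1,T)+V(\alpha_2,T)=V(\alpha_1+\alpha_2,T)$ directly from merging two nodes, and observe that strict positivity of $V$ (from GR-$|\mathcal{N}|$) forces monotonicity in the first argument, and a monotone additive function on an interval is linear. This closes the regularity issue the paper leaves implicit, at no cost in machinery; your preliminary reductions (the system-wide total depends only on $T$; each $U_{n_i}$ depends only on $(\alpha_{n_i},T)$) make explicit what the paper folds into its notation. Your ``dummy node'' remark is unnecessary --- the dyadic splits $\alpha=\alpha/2+\alpha/2$ already give additivity on all of $(0,T]$ --- but it is harmless. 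The converse direction is handled identically in both proofs.
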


\noindent Eq.~\eqref{eq:linear} implies that the utility function is linear when the total resource power of all nodes is given.
Under this utility (i.e., net profit), a player would run one node with its own resource power because delegation of its resource and running multiple nodes are not more profitable than running one node with its resource power.
Lem.~\ref{lem:linear} is proven using a proof by induction, and it is presented in 
Section~\ref{app:1}. 

We then consider whether Eq.~\eqref{eq:linear} can satisfy ED-$(\varepsilon, \delta)$. 
Note that when ED-$(\varepsilon, \delta)$ is satisfied, the probability of achieving $(m, \varepsilon, \delta)$-decentralization is 1. 
Therefore, it is sufficient to answer the following question: ``What is the probability of a system defined by Eq.~\eqref{eq:linear} to reach $(m, \varepsilon, \delta)$-decentralization?"
Thm.~\ref{thm:impossible} gives the answer by providing the upper bound of the probability. 
Before describing the theorem, we introduce several notations. 
Given that players, in practice, start running their nodes in the consensus protocol at different times, $\mathcal{P}$ would differ depending on the time. 
Thus, we use notations $\mathcal{P}^t$ and $\mathcal{P}_{\delta}^t$ to reflect this, where $\mathcal{P}_{\delta}^t$ is defined as: 
$$\mathcal{P}_{\delta}^t=\{p_i\in \mathcal{P}^t| EP_{p_i}^t\geq EP_{\delta}^t\}.
$$
That is, $\mathcal{P}_{\delta}^t$ indicates the set of all players who have above the $\delta$-th percentile effective power at time $t.$
Moreover, we define $\alpha_{\tt MAX}$ and $f_\delta$ as 
\vspace{-2mm}
$$\alpha_{\tt MAX}=\max\left\{\alpha_{p_i}^{t^0_i}\big|\, p_i \in \lim_{t\rightarrow\infty}\mathcal{P}^t\right\},$$
$$f_\delta=\min\biggl\{\frac{\alpha^{t^0_{ij}}_{p_i}}{\alpha_{p_j}^{t^0_{ij}}}\Bigg |\,\, p_i, p_j \in \lim_{t\rightarrow\infty}\mathcal{P}_{\delta}^t,\, {t^0_{ij}}=\max\{t^0_i, t^0_j\}\biggr\}, \vspace{-1mm}
$$
where $t_i^0$ denotes the time at which player $p_i$ starts to participate in a consensus protocol. 
The parameter $\alpha_{\tt MAX}$ indicates the initial resource power of the richest player among the players who remain in the system for a long time. 
Furthermore, $f_\delta$ represents the ratio between the $\delta$-th percentile and the largest initial resource power of the players who remain in the system for a long time.
Taking these notations into consideration, we present the following theorem. 

\begin{theorem}
When the Sybil cost function $C$ is zero, the following holds for any incentive system that satisfies Eq.~\eqref{eq:linear}:
\begin{equation}
    \lim_{t\rightarrow\infty}\Pr\left[\frac{EP_{\tt max}^t}{EP^t_{\delta}}\leq 1+\varepsilon\right]<G^\varepsilon\left(f_\delta, \frac{rR_{\tt max}}{\alpha_{\tt MAX}}\right),
    \label{eq:impossible}
\end{equation}
where $\lim_{f_\delta\rightarrow 0}G^\varepsilon(f_\delta, \frac{rR_{\tt max}}{\alpha_{\tt MAX}})$ and 
$\lim_{\alpha_{\tt MAX}\rightarrow \infty}G^\varepsilon(f_\delta, \frac{rR_{\tt max}}{\alpha_{\tt MAX}})$ are 0. 
Specifically, the function $G^\varepsilon(f_\delta, \frac{rR_{\tt max}}{\alpha_{\tt MAX}})$ is defined by Eq.~\eqref{eq:final2}. 
\label{thm:impossible}
\end{theorem}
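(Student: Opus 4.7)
The plan is to reduce the statement to a two-player hitting-time problem for a martingale. First I invoke Eq.~\eqref{eq:linear} together with the hypothesis $C=0$: under a utility of the form $U_{n_i} = F(\sum_{j} \alpha_{n_j})\alpha_{n_i}$ no rational player gains from delegating or from splitting their resource across multiple nodes, so each player runs exactly one node and $EP_{p_i}^t = \alpha_{p_i}^t$ throughout. Fix the richest long-lived player $p^*$ (whose initial resource is $\alpha_{\tt MAX}$) and the $\delta$-th percentile player $p_\delta^*$ (whose initial resource is at most $f_\delta \cdot \alpha_{\tt MAX}$ by definition of $f_\delta$). The event $\{EP_{\tt max}^t/EP_\delta^t \leq 1+\varepsilon\}$ then implies $(1+\varepsilon)\alpha_{p_\delta^*}^t \geq \alpha_{p^*}^t$, so it suffices to bound the probability that this inequality ever holds.

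Next I extract the martingale structure. From $\alpha_{n_i}^{t+1} = \alpha_{n_i}^t + rR_{n_i}^t$ with $E[R_{n_i}^t \mid \mathcal{F}_t] = F_t \alpha_{n_i}^t$ (writing $F_t := F(\sum_j \alpha_{n_j}^t)$), I set $Y_t := (1+\varepsilon)\alpha_{p_\delta^*}^t - \alpha_{p^*}^t$ and compute
\begin{equation*}
E[Y_{t+1}\mid \mathcal{F}_t] \;=\; (1+rF_t)\,Y_t.
\end{equation*}
Hence $M_t := Y_t / \prod_{s<t}(1+rF_s)$ is an $\mathcal{F}_t$-martingale starting at $M_0 \leq \bigl((1+\varepsilon)f_\delta - 1\bigr)\alpha_{\tt MAX} < 0$, while the target event requires $M_t \geq 0$. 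The problem is now the classical one of estimating the probability that a martingale with a strictly negative initial value and bounded increments ever reaches the origin.

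The final step is a concentration/maximal-inequality estimate. Because $R_{n_i}^t \leq R_{\tt max}$, the per-step increments of $M_t$ are of magnitude at most $rR_{\tt max}/\prod_{s\leq t}(1+rF_s)$. Applying Azuma–Hoeffding to $M_t - M_0$ and then Doob's maximal inequality over $t$ (using that the denominator $\prod_s(1+rF_s)$ is nondecreasing so that the quadratic variation is summable) yields a bound of the form
\begin{equation*}
\Pr\bigl[\sup_{t} M_t \geq 0\bigr] \;\leq\; G^\varepsilon\!\left(f_\delta,\; \tfrac{rR_{\tt max}}{\alpha_{\tt MAX}}\right),
\end{equation*}
whose exponent scales as $\bigl((1-(1+\varepsilon)f_\delta)/(rR_{\tt max}/\alpha_{\tt MAX})\bigr)^2$ and is therefore driven to zero both as $f_\delta \to 0$ (the initial gap widens) and as $\alpha_{\tt MAX} \to \infty$ (the relative fluctuation shrinks). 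Passing to $t\to\infty$ is legitimate because $M_t$ is a bounded-variation martingale, hence converges almost surely by the martingale convergence theorem, and the supremum bound transfers to the limit.

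The main obstacle I foresee is controlling the correlation between $R_{p^*}^t$ and $R_{p_\delta^*}^t$: at most one block is produced per time unit, so the two players' rewards are negatively coupled, and the naive increment bound must be tightened to retain the quadratic exponent required for $G^\varepsilon$ to vanish in the stated limits. The clean linear structure handles the coupling in conditional expectation, but the variance estimate driving Azuma must be performed on the joint vector $(R_{p^*}^t, R_{p_\delta^*}^t)$ rather than on each coordinate separately. A subsidiary subtlety is that $F_t$ is itself random, so the denominator $\prod_{s<t}(1+rF_s)$ needs to be handled by a stopping-time argument (truncating on $\{F_t \geq \eta\}$ for a suitably chosen floor $\eta$); verifying that the resulting two-parameter bound matches the explicit form of $G^\varepsilon$ in Eq.~\eqref{eq:final2} is the last piece of bookkeeping.
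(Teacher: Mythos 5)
Your martingale setup captures the same structural fact the paper exploits: under the linear utility of Eq.~\eqref{eq:linear} the expected relative growth rates $\frac{U_{n_i}}{\alpha_{n_i}}$ of all nodes coincide, so the (normalized) signed gap between the $\delta$-th percentile and the richest player is a martingale. The paper uses exactly this identity, in the form $\frac{U_{n_j}(\beta)}{\beta}-\frac{U_{n_i}(\alpha)}{\alpha}=0$, to drive a Markov-type bound $\Pr(D\geq d)\leq \frac{R_{\tt max}}{R_{\tt max}+d\alpha_{n_i}}$ on the one-step change of the ratio. Where you diverge is the concentration tool, and this is where the proposal breaks.

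Azuma--Hoeffding cannot deliver the stated conclusion. First, the limit behavior is wrong: as $f_\delta\to 0$ your initial value satisfies $M_0=((1+\varepsilon)f_\delta-1)\alpha_{\tt MAX}\to-\alpha_{\tt MAX}$, which is \emph{bounded}, so a sub-Gaussian bound of the form $\exp\bigl(-c\,M_0^2/\sum_s c_s^2\bigr)$ tends to the strictly positive constant $\exp\bigl(-c\,\alpha_{\tt MAX}^2/\sum_s c_s^2\bigr)$ rather than to $0$; your own claimed exponent $\bigl((1-(1+\varepsilon)f_\delta)/(rR_{\tt max}/\alpha_{\tt MAX})\bigr)^2$ converges to a constant as $f_\delta\to 0$, so the requirement $\lim_{f_\delta\to 0}G^\varepsilon=0$ fails. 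Second, the quadratic variation need not be summable: $F$ is an arbitrary positive function, and for choices such as $F(x)=1/x$ the normalizer $\prod_{s\leq t}(1+rF_s)$ grows only polynomially, so $\sum_t c_t^2$ can diverge and the Azuma bound trivializes as $t\to\infty$; truncating on $\{F_t\geq\eta\}$ does not repair this because the event of interest can occur entirely on the complementary regime. The correct scaling is \emph{linear} in $f_\delta$, not sub-Gaussian: the paper's dominant term is $P_0^\varepsilon(\alpha,\beta)\leq(1+\varepsilon)\frac{\beta}{\alpha}$, obtained by telescoping the one-step bounds along the multiplicative walk (a gambler's-ruin/nonnegative-martingale maximal inequality, $\Pr[\sup_t X_t\geq\lambda]\leq E[X_0]/\lambda$, in disguise), and the full $G^\varepsilon$ is assembled by decomposing over the number $k$ of times the rich node's resource increases by $rR_{\tt max}$ and maximizing the resulting hitting probabilities $H_k$ over all admissible walk structures. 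You also omit the preliminary case analysis on whether $xF(x)$ attains its maximum at the boundary, which the paper needs in order to conclude that players actually reinvest and the walk is the relevant dynamics.
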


This theorem implies that the probability of achieving $(m, \varepsilon, \delta)$-decentralization is less than $G^\varepsilon(f_\delta, \frac{rR_{\tt max}}{\alpha_{\tt MAX}})$. 
Here, note that $rR_{\tt max}$ represents the maximum resource power that can be increased by a player per time unit.
Given that $\lim_{f_\delta\rightarrow 0}G^\varepsilon(f_\delta, \frac{rR_{\tt max}}{\alpha_{\tt MAX}})=0,$ \textit{the upper bound would be smaller when the rich-poor gap in the current state is larger.}  
In addition, the fact that $\lim_{\alpha_{\tt MAX}\rightarrow \infty}G^\varepsilon(f_\delta, \frac{rR_{\tt max}}{\alpha_{\tt MAX}})$ implies that 
the greater the difference between the resource power of the richest player and the maximum value that can be increased by a player per time unit, the smaller the upper bound. 

In fact, to make a system more likely to reduce the rich-poor gap, poor nodes should earn a small reward with a high probability for some time, while rich nodes should get the reward $R_{\tt max}$ with a small probability. 
This is proved in the proof of Thm.~\ref{thm:impossible}, which is presented in Section~\ref{app:3}. 
Note that, in that case, rich nodes would rarely increase their resources, but poor nodes would often increase their resources.

To determine how small $G^\varepsilon(f_\delta, \frac{rR_{\tt max}}{\alpha_{\tt MAX}})$ is for a small value of $f_\delta$, we adopt a Monte Carlo method. 
This is because a large degree of complexity is required to compute a value of $G^\varepsilon(f_\delta, \frac{rR_{\tt max}}{\alpha_{\tt MAX}})$ directly. 
Fig.~\ref{fig:sim1} displays the value of $G^\varepsilon(f_\delta, \frac{rR_{\tt max}}{\alpha_{\tt MAX}})$ with respect to $f_\delta$ and $\varepsilon$ when $\frac{rR_{\tt max}}{\alpha_{\tt MAX}}$ is 0.1. 
For example, we can see that $G^0(10^{-4}, 0.1)$ is about $10^{-5},$ and this implies that a state where the ratio between resource power of the $\delta$-th percentile player and the richest player is $10^{-4}$ can reach $(m,0,\delta)$-decentralization with a probability less than $10^{-5}$ even if infinite time is given.  
Note that $\varepsilon=9, 99,$ and $999$ indicate that the effective power of the richest player is 10 times, 100 times, and 1000 times that of the $\delta$-th percentile player in $(m,\varepsilon,\delta)$-decentralization, respectively. 

Fig.~\ref{fig:sim1} shows that the probability of achieving $(m,\varepsilon,\delta)$-decentralization is smaller when $f_\delta$ and $\varepsilon$ are smaller.
From Fig.~\ref{fig:sim1}, one can see that the value of $G^\varepsilon(f_\delta, \frac{rR_{\tt max}}{\alpha_{\tt MAX}})$ is significantly small for a small value of $f_\delta$.
This result means that the probability of achieving good decentralization is close to 0 if there is a large gap between the rich and poor, and the resource power of the richest player is large (i.e., the ratio $\frac{rR_{\tt max}}{\alpha_{\tt MAX}}$ is not large\footnote{The ratio $\frac{rR_{\tt max}}{\alpha_{\tt MAX}}$ does not need to be small.}).
The values of $G^\varepsilon(f_\delta, \frac{rR_{\tt max}}{\alpha_{\tt MAX}})$ when $\frac{rR_{\tt max}}{\alpha_{\tt MAX}}$ is $10^{-2}$ and $10^{-4}$ are presented in Section~\ref{app:sim},  
and the values are certainly smaller than those presented in Fig.~\ref{fig:sim1}. 

\begin{figure}[ht]
    \centering
    \includegraphics[width=\columnwidth]{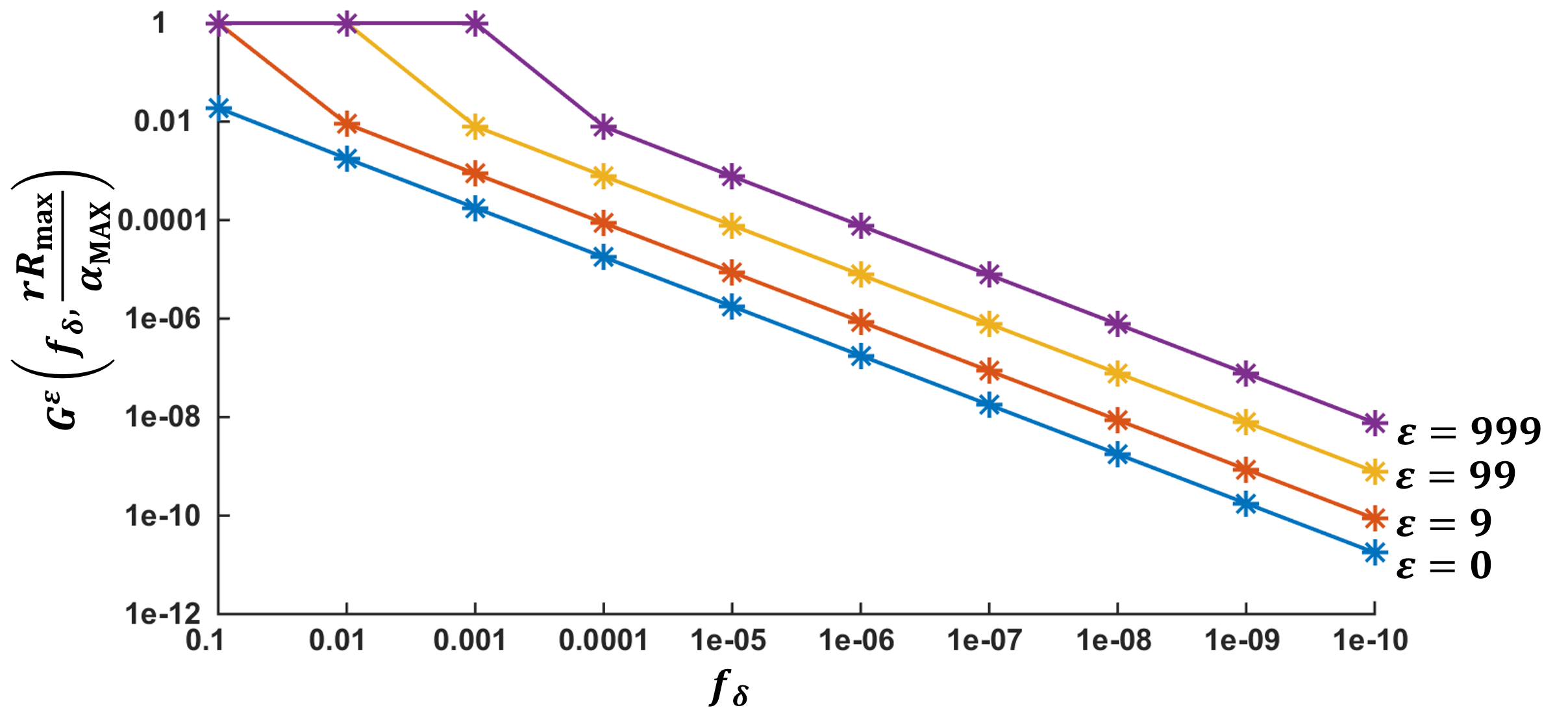}
    \vspace{-3mm}
    \caption{In this figure, when $\frac{rR_{\tt max}}{\alpha_{\tt MAX}}$ is 0.1, $G^\varepsilon(f_\delta, \frac{rR_{\tt max}}{\alpha_{\tt MAX}})$ ($y$-axis) is presented with respect to $f_\delta$ ($x$-axis) and $\varepsilon$. }
    \label{fig:sim1}
\end{figure}

To determine how small the ratio $f_\delta$ is at present, we use the hash rate of all users in Slush mining pool~\cite{slush} in Bitcoin as an example. 
We find miners with hash rates lower than 3.061 GH/s and greater than 404.0 PH/s at the time of writing.
Referring to these data, we can see that the ratio $f_0$ (i.e., the ratio between the resource power of the poorest and richest players) is less than $\frac{3.061\times 10^9}{404.0\times10^{15}}$ $(\approx 7.58\times10^{-9}).$
We also observe that the 15-th percentile and 50-th percentile hash rates are less than 5.832 TH/s and 25.33 TH/s, respectively. 
Therefore, the ratios $f_{15}$ and $f_{50}$ are less than approximately $1.44\times10^{-5}$ and $6.27\times10^{-5},$ respectively. 
This example indicates that the rich-poor gap is significantly large.
Moreover, we observe an upper bound of $\frac{rR_{\tt max}}{\alpha_{\tt MAX}}$ in the Bitcoin system. 
Given that the block reward is 12.5 BTC ($\approx \$65,504$), the maximum value of $rR_{\tt max}$ is approximately 384 TH. 
This maximum value can be derived, assuming that a player reinvests all earned rewards to increase their hash power. 
Then, an upper bound of $\frac{rR_{\tt max}}{\alpha_{\tt MAX}}$ would be $9.5\times 10^{-4}$, which is certainly less than the value of 0.1 used in Fig.~\ref{fig:sim1}. 
\textbf{As a result, Thm.~\ref{thm:impossible} implies that, currently, it is almost impossible for a system without Sybil costs to achieve good decentralization. 
In other words, the achievement of good decentralization in the consensus protocol and a non-reliance on a TTP, which are required for good decentralization of systems, contradict each other.}

\subsection{Intuition and Implication}

Here, we describe intuitively why a permissionless blockchain, which does not rely on any TTP, cannot reach good decentralization. 
Because a player with great wealth can possess more resources, the initial distribution of the resource power in a system depends significantly on the distribution of wealth in the real world when the system does not have any constraint of participation and can attract many players. 
Therefore, if wealth is equally distributed in the real world and many players are incentivized to participate in the consensus protocol, full decentralization can be easily achieved, even in permissionless blockchains where anyone can join without any permission processes.
However, according to many research papers and statistics, the rich-poor gap is significant in the real world~\cite{inequality,stone2015guide,sikorski2015rich}.
In addition, the wealth inequality is well known as one of the most glaring deficiencies in today's capitalist society, and resolving this problem is difficult.  

In a permissionless blockchain, players can freely participate without any restrictions, and large wealth inequality would appear initially. 
Therefore, for the system to achieve good decentralization, its incentive system should be designed to gradually narrow the rich-poor gap. 
To this end, we can consider the following incentive system: Nodes receive net profit in proportion to the square root of their resource power on average (e.g., Eq.~\eqref{eq:example}). 
This incentive system can result in the resource power distribution among nodes being more even (see Section~\ref{subsec:possible}). 
However, this alone cannot satisfy NS-$\delta$ when there is no Sybil cost (i.e., $C=0$). 
Therefore, to satisfy NS-$\delta$, we can establish that the expected net profit decreases when the number of existing nodes increases. 
For example, $B_r$ in Eq.~\eqref{eq:example} can be a decreasing function of the number of existing nodes. 
In this case, players with large resources would not run Sybil nodes because when they do so, their utilities decrease with the increase in the number of nodes.
However, this approach has a side effect in that players ultimately delegate their power to a few other players in order to earn higher profits. 
This is because this rational behavior on the part of the players decreases the number of nodes.
As a result, the above example intuitively describes that \textit{the four conditions are contradictory when a Sybil cost does not exist}\footnote{This does not imply the impossibility of full decentralization. It only implies that the probability of achieving full decentralization is less than 1.}, and whether a permissionless blockchain can achieve good decentralization depends completely on how wide the gap is between the rich and the poor in the real world.
This finding is supported by Thm.~\ref{thm:impossible}.

Conversely, if we can establish a method of implementing Sybil costs without relying on a TTP in blockchains, we would be able to resolve the contradiction between achieving good decentralization in the consensus protocol and non-reliance on a TTP. 
This allows for designing a blockchain that achieves good decentralization. 
We leave this as an open problem. 

\subsection{Question and Answer}

In this section, to further clarify the implications of our results, we present questions that academic reviewers or blockchain engineers have considered in the past and provide answers to them.

\smallskip
\noindent\textbf{[Q1] ``Creating more nodes does not increase your mining power, so why is this a problem?" }
Firstly, note that decentralization is significantly related to \textit{real identities}. 
That is, when the number of independent players is large and the power distribution among them is even, the system has good decentralization. 
In this paper, we \textit{do not claim} that the higher the number of Sybil nodes, the lower the level of decentralization.
We simply assert that a system should have knowledge of the current power distribution among players to achieve good decentralization, 
and a system without real identity management can know the distribution when each player runs only one node. 
Moreover, we prove that, to achieve good decentralization as far as possible, all players should run only one node (Thm.~\ref{thm:suff_nec}). 

\smallskip
\noindent\textbf{[Q2] ``Would a simple puzzle for registering as a block-submitter not be a possible Sybil cost, without identity management?" }
According to the definition of Sybil cost (Section~\ref{sec:model}), the cost to run one node should depend on whether a player runs another node.
More specifically, the cost to run one node for a player who has other nodes should be greater than that for a player with no other nodes. 
Therefore, the proposed scheme cannot constitute a Sybil cost. 
Again, note that the Sybil cost described in this paper is different from that usually mentioned in PoW and PoS systems~\cite{sybil}.

\smallskip
\noindent\textbf{[Q3] ``If mining power is delivered in proportion to the resources one has available (which would be an ideal situation in permissionless systems), achievement of good decentralization is clearly an impossibility. This seems rather self-evident." }
Naturally, a system would be centralized in its initial state because the rich-poor gap is large in the real world and only a few players may be interested in the system in the early stages.  
Considering this, our work investigates \textit{whether there is a mechanism to achieve good decentralization.}
Note that our goal is to reduce the gap between the effective power of the rich and poor, not the gap between their resource power. 
In other words, even if the rich possess significantly large resource power, the decentralization level can still be high if the rich participate in the consensus protocol with only part of their resource power and so not large effective power. 
To this end, we can consider a utility function, which is a decreasing function for a large input (e.g., a concave function). 
However, this function cannot still achieve good decentralization because it does not satisfy NS-$\delta.$
Note that, with a mechanism satisfying the four conditions, a system can \textit{always} reach good decentralization regardless of the initial state. 
Unfortunately, our finding is that there is no mechanism satisfying the four conditions, which implies that the probability of achieving good decentralization is less than 1. 
To make matters worse, Thm.~\ref{thm:impossible} states that the probability is bounded above by a value close to 0.
\textit{As a result, this implies that it is almost impossible for us to create a system with good decentralization without any Sybil cost, even if infinite time is given.}

\smallskip
\noindent\textbf{[Q4] ``I think when the rich invest a lot of money in a system, the system can become popular. So, if the large power of the rich is not involved in the system, can it become popular?"}
In this paper, we focus on the decentralization level in a consensus protocol, which performs a role as the government of a system. 
Therefore, good decentralization addressed in this paper implies a fair government rather than indicating that there are no rich or poor in the entire system. 
If the rich invest a lot of money in business (e.g., an application based on the smart contract) running on the system instead of the consensus protocol, the system may have a fair government and become popular. 
Indeed, the efforts to make a fair government also appear in the real world since people are extremely afraid of an unfair system in which the rich influence the government through bribes. 


\section{Protocol Analysis}
\label{app:protocol}

In this section, to determine if what condition each system satisfies or not, we analyze the incentive systems of the top 100 coins extensively according to the four conditions.
Based on this analysis, we can determine whether each system has a sufficient number of independent players and an even distribution of effective power among the players.
This analysis also describes what each blockchain system requires in order to achieve good decentralization. 

\subsection{Top 100 Coins}

Before analyzing the incentive systems based on the four conditions, we classified the top 100 coins 
in CoinMarketCap~\cite{coinmarket} according to their consensus protocols. 
Most of them use one of the following three consensus protocols: PoW, PoS, and DPoS.
Specifically, there exist 44 PoW, 22 PoS, and 11 DPoS coins. 
In addition, there are 15 coins that use other consensus protocols such as Federated Byzantine Agreement (FBA), Proof of Importance, Proof of Stake and Velocity~\cite{ren2014proof}, and hybrid.  
Furthermore, we classify five coins including XRP~\cite{schwartz2014ripple}, NEO~\cite{neo}, VeChain~\cite{vechain}, Ontology~\cite{ontology}, and GoChain~\cite{gochain} into permissioned systems. 
This is because in these systems, only players that are chosen by the coin foundation can run nodes in the consensus protocol. 
Finally, there exist one token, Huobi Token, and two cryptocurrencies that are non-operational, i.e., BitcoinDark and Boscoin. 
Table~\ref{tab:classification} summarizes the classification of the aforementioned top 100 coins.

\begin{small}
\begin{table*}[ht]
    \renewcommand{\tabcolsep}{1pt}
    \centering
    \caption{Classification of top 100 coins (Sep. 11, 2018)}
    \begin{tabular}{|>{\centering\arraybackslash}m{2cm}|m{14cm}|c|}
         \hline
         \cline{1-3} Consensus & \multicolumn{1}{c|}{Coins} & Count \\
         \hline
         \hline
         \cline{1-3} PoW & Bitcoin (1)~\cite{nakamoto2008bitcoin}, 
         Ethereum (2)~\cite{wood2014ethereum}, Bitcoin Cash (4)~\cite{bch}, Litecoin (7)~\cite{litecoin}, Monero (9)~\cite{monero}, Dash (10)~\cite{duffield2015dash}, IOTA (11)~\cite{popov2016tangle}, Ethereum Classic (13)~\cite{etc}, Dogecoin (18)~\cite{dogecoin}, Zcash (19)~\cite{zcash}, Bytecoin (21)~\cite{bytecoin}, Bitcoin Gold (22)~\cite{bitcoingold}, Decred (25)~\cite{decred}, Bitcoin Diamond (26)~\cite{diamond}, DigiByte (28)~\cite{digibyte}, Siacoin (33)~\cite{vorick2014sia}, Verge (34)~\cite{verge},
         Metaverse ETP (35)~\cite{metaverse}, Bytom (36)~\cite{bytom}, 
         MOAC (43)~\cite{moac}, Horizen (47)~\cite{viglione2017zen}, MonaCoin (51)~\cite{monacoin}, Bitcoin Private (52)~\cite{bitcoinprivate}, ZCoin (56)~\cite{zcoin}, Syscoin (60)~\cite{sidhu2017syscoin}, Electroneum (61)~\cite{electroneum}, Groestlcoin (64)~\cite{groestl}, Bitcoin Interest (67)~\cite{interest}, Vertcoin (70)~\cite{vertcoin}, Ravencoin (71)~\cite{fenton2018ravencoin}, Namecoin (72)~\cite{namecoin}, BridgeCoin (74)~\cite{bridgecoin}, SmartCash (75)~\cite{smart}, Ubiq (77)~\cite{ubiq}, DigitalNote (82)~\cite{digitalnote}, ZClassic (83)~\cite{zclassic}, Burst (85)~\cite{burst}, Primecoin (86)~\cite{king2013primecoin}, Litecoin Cash (90)~\cite{litecoincash}, Unobtanium (91)~\cite{unobtanium}, Electra (92)~\cite{electra}, Pura (96)~\cite{pura}, Viacoin (97)~\cite{viacoin}, Bitcore (100)~\cite{bitcore}
         & 44 \\
         \hline
         \cline{1-3} PoS & Cardano (8)~\cite{kiayias2017ouroboros}, Tezos (15)~\cite{goodman2014tezos}, Qtum (24)~\cite{dai2017smart}, Nano (29)~\cite{lemahieu2018nano}, Waves (31)~\cite{waves}, Stratis (37)~\cite{stratis}, Cryptonex (38)~\cite{cryptonex}, Ardor (42)~\cite{ardor}, Wanchain (44)~\cite{wanchain}, Nxt (50)~\cite{nxt}, PIVX (57)~\cite{pivx}, PRIZM (63)~\cite{prizm}, WhiteCoin (76)~\cite{whitecoin}, Blocknet (79)~\cite{blocknet}, Particl (80)~\cite{particl}, Neblio (81)~\cite{neblio}, BitBay (87)~\cite{bitbay}, GCR (89)~\cite{gcr}, NIX (93)~\cite{nix}, SaluS (94)~\cite{salus}, LEO (98)~\cite{leo}, ION (99)~\cite{ion}
          & 22  \\
         \hline
         \cline{1-3} DPoS & EOS (5)~\cite{eos}, TRON (12)~\cite{tron}, Lisk (20)~\cite{lisk}, BitShare (27)~\cite{bitshare}, Steem (32)~\cite{steem}, GXChain (48)~\cite{gxchain}, Ark (49)~\cite{ark}, WaykiChain (68)~\cite{waykichain}, Achain (84)~\cite{achain}, Asch (88)~\cite{asch}, Steem Dollars (95)~\cite{steem}
          & 11 \\
         \hline
         \cline{1-3} Others & Stellar (6)~\cite{mazieres2015stellar}, NEM (16)~\cite{nem}, ICON (30)~\cite{icon}, Komodo (39)~\cite{komodo}, ReddCoin (40)~\cite{ren2014proof}, Hshare (41)~\cite{hshare}, Nebulas (53)~\cite{nebulas}, Emercoin (54)~\cite{emercoin}, Elastos (55)~\cite{elastos}, Nexus (58)~\cite{nexus}, Byteball Bytes (59)~\cite{churyumov2016byteball}, Factom (62)~\cite{snow2015factom}, Skycoin (69)~\cite{skycoin}, Nexty (66)~\cite{nexty}, Peercoin (73)~\cite{peercoin}
          & 15 \\
         \hline
         \cline{1-3} Permissioned & XRP (3)~\cite{schwartz2014ripple}, NEO (14)~\cite{neo}, VeChain (17)~\cite{vechain}, Ontology (23)~\cite{ontology}, GoChain (65)~\cite{gochain}
         & 5 \\
         \hline
        \cline{1-3} Token & Huobi Token (45)
         & 1 \\
        \hline
         \cline{1-3} Non-operational & BitcoinDark (46), Boscoin (78)
          & 2 \\
        \hline
    \end{tabular}
    \label{tab:classification}
\end{table*}
\end{small}

\subsection{Analysis}

Next, we analyze the blockchain systems of the top 100 coins according to the four sufficient conditions. 
In this study, we focus on the analysis of the coins that use PoW, PoS, and DPoS mechanisms, which are the major consensus mechanisms of non-permissioned blockchains,
to identify which conditions are not currently satisfied in each system. 
If a system satisfies both GR-$m$ and ND-$m$, we can expect that many players participate in its consensus protocol and run nodes.
In addition, if the system satisfies both NS-$\delta$ and ED-$(\varepsilon, \delta)$, the effective power would be more evenly distributed among the players. 
Table~\ref{tab:incentive_analysis} presents the results of the analysis, where the black circle (\blackcircle) and the half-filled circle (\semifilled) indicate the full and partial satisfaction of the corresponding condition, respectively. 
The empty circle (\emptycircle) indicates that the corresponding condition is not satisfied at all.
In addition, we mark each coin system with a triangle (\tri) or an X (\xmark) depending on whether it partially implements or does not implement a Sybil cost, respectively. 
Here, partial Sybil cost means that the payment of the Sybil cost can be avoided by pretending that the multiple nodes run by one player are run by different players (i.e., players who have different real identities). 
Note that PoW, PoS, and DPoS coins cannot have perfect Sybil costs because they are non-permissioned blockchains. 
Even it is currently unknown as to how Sybil costs are implemented in blockchain systems without real identity management. 
We present detailed analysis results in the following sections.

\subsubsection{Proof of Work}
\label{subsubsec:pow}
Most PoW systems are designed to give nodes a block reward proportional to the ratio of the computational power of each node to the total power. 
In addition, there are electric bills that are dependent on the computational power, as well as the other costs associated with running a node, such as a large memory for the storage of blockchain data.
The cost required to run a node is, therefore, independent of the computational power. 
Considering this, we can express a utility (i.e., an expected net profit) $U_{n_i}(\alpha_{n_i}, \bm{\bar{\alpha}_{-n_i}})$ of node $n_i$ as follows: 

\begin{equation}
U_{n_i}(\alpha_{n_i}, \bm{\bar{\alpha}_{-n_i}})=B_r\cdot\frac{\alpha_{n_i}}{\sum_{n_j}\alpha_{n_j}}-c_1\cdot\alpha_{n_i}-c_2. \label{eq:pow}
\end{equation}
In Eq.~\eqref{eq:pow}, $B_r$ represents the block reward (e.g., 12.5 BTC in the Bitcoin system) that a node can earn for a time unit, and $c_1 (>0)$ and $c_2 (>0)$ represent the electric bill per computational power and the other costs incurred during the time unit, respectively.  
In particular, the cost $c_2$ is independent of the computational power. 
The values of the three coefficients, $B_r, c_1,$ and $c_2,$ determine whether the four conditions are satisfied. 

\begin{small}
\begin{table}[t]
\renewcommand{\tabcolsep}{2pt}
  \centering
  \caption{Analysis of incentive systems}
  \label{tab:incentive_analysis}
  \begin{tabular}{|c|c|c|c|c|c|c|}
    \hline
    \cline{1-7}
    Coin name&Con 1&Con 2&Con 3& Con 4& $N_{\tt dpos}$ & \makecell{Sybil cost}\\
    \hline
    \hline
    \multicolumn{7}{|c|}{\makecell{\textbf{PoW \& PoS coins}}} \\ 
    \hline
    \cline{1-7} All PoW\&PoS$\dagger$ & \semifilled & \emptycircle & \blackcircle & \emptycircle & $-$ & \xmark \\ 
    \hline
    \cline{1-7} IOTA & \emptycircle & \emptycircle & \blackcircle & \blackcircle & $-$ & \xmark \\ 
    \hline
    \cline{1-7} BridgeCoin & \emptycircle & \emptycircle & \blackcircle & \blackcircle & $-$ & \xmark \\
    \hline
    \cline{1-7} Nano & \emptycircle & \emptycircle & \blackcircle & \blackcircle & $-$ & \xmark \\ 
    \hline
    \cline{1-7} Cardano & \semifilled & \semifilled & \semifilled & \semifilled & $-$ & \xmark \\ 
    \hline
    \hline
    \multicolumn{7}{|c|}{\makecell{\textbf{DPoS coins}}} \\ \hline
    \cline{1-7} EOS & \semifilled & \semifilled & \hspace{1.5mm}\semifilled$^\star$ & \semifilled & 21 & \tri \\ 
     \hline
     \cline{1-7} TRON & \semifilled & \semifilled & \hspace{1.5mm}\semifilled$^\star$ & \semifilled & 27 & \tri \\
    \hline 
    \cline{1-7} Lisk & \semifilled & \semifilled & \semifilled & \semifilled & 101 & \xmark \\ 
    \hline 
    \cline{1-7} BitShare & \semifilled & \semifilled & \semifilled & \semifilled & 27& \xmark \\ 
    \hline 
    \cline{1-7} Steem & \semifilled & \semifilled & \hspace{1.5mm}\semifilled$^\star$ & \semifilled & 20 & \tri \\ 
    \hline
    \cline{1-7} GXChain & \semifilled & \semifilled & \semifilled & \semifilled & 21 & \xmark \\ 
    \hline
    \cline{1-7} Ark & \semifilled & \semifilled & \semifilled & \semifilled & 51 & \xmark \\ 
    \hline
    \cline{1-7} WaykiChain & \semifilled & \semifilled & \semifilled & \semifilled & 11 & \xmark \\ 
    \hline
    \cline{1-7} Achain & \semifilled & \semifilled & \semifilled & \semifilled & 99 & \xmark \\
    \hline
    \cline{1-7} Asch & \semifilled & \semifilled & \semifilled & \semifilled & 91 & \xmark \\ 
    \hline
    \cline{1-7} Steem Dollars & \semifilled & \semifilled & \hspace{1.5mm}\semifilled$^\star$ & \semifilled & 20 & \tri \\ 
    \hline
    \end{tabular}
    \begin{tabular}{p{\columnwidth}}
    \\
      $\dagger=$ except for IOTA, BridgeCoin, Cardano, and Nano; 
      \blackcircle $=$ fully satisfies the condition; \semifilled $=$ partially satisfies the condition; 
      \emptycircle $=$ does not satisfy the condition; 
    \tri $=$ has partial Sybil costs; \xmark $=$ does not have Sybil costs;
    \end{tabular}
      \vspace*{-4mm}
\end{table}
\end{small}

Firstly, for the system to satisfy GR-$m$ for any $m$, it should be able to assign rewards to nodes with small computational power.
Considering Eq.~\eqref{eq:pow} for appropriate values of $B_r,$ there is $\bm{\bar{\alpha}}=(\alpha_{n_i})_{n_i\in\mathcal{N}}$ such that $U_{n_i}(\alpha_{n_i}, \bm{\bar{\alpha}_{-n_i}})>0$ for all nodes $n_i.$  
However, there also exists $\alpha_{n_i}$ such that $U_{n_i}(\alpha_{n_i}, \bm{\bar{\alpha}_{-n_i}})<0$ for a given $\bm{\bar{\alpha}_{-n_i}}$, which implies that the PoW system cannot satisfy GR-$m$ for some values of $m$. 
For example, if $\sum_{n_j}\alpha_{n_j}$ is significantly large and $\alpha_{n_i}$ is small enough, Eq.~\eqref{eq:pow} would be negative because the first term of the right-hand side of Eq.~\eqref{eq:pow} is close to 0. 

We can observe this situation in practical PoW systems. 
In these systems, nodes can generate blocks using CPUs, GPUs, FPGAs, and ASICs, with computational power ranging from low at the CPU level to high at the ASIC level.
In particular, the value of $c_1$ decreases from CPUs to ASICs. 
In other words, ASICs have better efficiency than the others. 
Currently, PoW coins can be divided into ASIC-resistant coins and coins that allow ASIC miners. 
The latter (e.g., Bitcoin and Litecoin) allow miners to use ASIC hardware, which has rapidly increased their total computational power.  
However, as a side effect, CPU mining has become unprofitable because the electric bill for CPU miners is larger than their earned rewards. 
For this reason, several coins, such as Ethereum, were developed to resist ASIC miners; however, ASIC-resistant algorithms cannot be a fundamental solution. 
These algorithms only prevent the rapid growth of the total computational power; nodes with small computational power can still suffer losses.
For example, even though Ethereum has the ASIC-resistant algorithm, Ethash~\cite{ethash}, CPU miners cannot earn net profit by mining Ethereum~\cite{cpuminingeth}. 
Therefore, these PoW coins only partially satisfy GR-$m$ because there exists $\bm{\bar{\alpha}}$ such that $U_{n_i}(\alpha_{n_i}, \bm{\bar{\alpha}_{-n_i}})<0$ for some nodes $n_i$. 
As special cases, we consider IOTA and BridgeCoin, where there is no block reward because coin mining does not exist or has already been completed. 
These systems do not satisfy GR-$m$ at all because the utility $U_{n_i}$ is negative for all $\bm{\bar{\alpha}}$. 

In addition, PoW systems cannot satisfy ND-$m$. 
This is because when $m$ players run their own nodes, they must pay the additional cost of $(m-1)\cdot c_2$ as compared to the case in which they run only one node by cooperating with one another. 
This cooperation is commonly observed in the form of centralized mining pools. 
Of course, the variance of rewards can decrease when players join the mining pools, which may be another reason that many of them join these pools. 
However, although there are decentralized pools (e.g., P2Pool~\cite{p2pool} and SMARTPOOL~\cite{luu2017smart}) in which players can reduce the variance of rewards and run a full node, 
most players do not join these decentralized pools owing to the cost of running a full node\footnote{One can see that the percentage of resource power possessed by the decentralized pools is significantly small.}.

Meanwhile, for the aforementioned reason, the systems can satisfy NS-$\delta$. 
Finally, ED-$(\varepsilon, \delta)$ cannot be satisfied in PoW systems. 
Firstly, Eq.~\eqref{eq:pow} is a strictly increasing function of $\alpha_{n_i}$ for a proper value of $\sum_{n_j}\alpha_{n_j}$ and does not satisfy Eq.~\eqref{eq:density2}. 
Thus, according to Thm.~\ref{thm:example}, ED-$(\varepsilon, \delta)$ cannot be satisfied for the proper range of $\sum_{n_j}\alpha_{n_j}.$
In addition, for a significantly large value of $\sum_{n_j}\alpha_{n_j},$ all nodes would reduce their resource power since all of them suffer a loss regardless of their resource power. 
Note that this behavior does not affect the power distribution, which represents relative resource power. 
As a result, PoW systems with an incentive system defined by Eq.~\eqref{eq:pow} cannot satisfy ED-$(\varepsilon, \delta)$.
\textbf{Through this analysis of PoW systems, we expect that the current PoW systems have neither a sufficient number of independent players nor an even power distribution among the players.} 

Meanwhile, IOTA and Bridgecoin, which do not have any incentives, satisfy both NS-$\delta$ and ED-$(\varepsilon, \delta)$ as trivial cases because rational players would not run nodes.

\subsubsection{Proof of Stake}
In PoS systems, nodes receive block rewards proportional to their stake. 
Therefore, in these systems, we can express the utility $U_{n_i}$ as follows: 
\begin{equation}
\label{eq:pos}
U_{n_i}(\alpha_{n_i}, \bm{\bar{\alpha}_{-n_i}})=
B_r\cdot\frac{\alpha_{n_i}}{\sum_{j}\alpha_{n_j}}-c \quad\text{if } \alpha_{n_i} \geq S_b.
\end{equation}
$B_r$ and $c$ in Eq.~\eqref{eq:pos} represent the block reward that a node can earn for a time unit and the cost required to run one node, respectively. 
$S_b$ indicates the least amount of stakes required to run one node. 
Therefore, Eq.~\eqref{eq:pos} implies that only nodes with stakes above $S_b$ can be run and earn a reward proportional to their stake fraction. 

Similar to PoW systems, the systems only satisfy GR-$m$ for some $m$ (i.e., partially satisfy GR-$m$) because there exists a large value of $\sum\alpha_{n_j}$ such that $U_{n_i}(\alpha_{n_i}, \bm{\bar{\alpha}_{-n_i}})<0$ in PoS systems.
In addition, it is more profitable for multiple players to run one node when compared to running each different node.
For example, if a player has a stake below $S_b,$ rewards cannot be earned by running nodes in the consensus protocol. 
However, the player can receive a reward by delegating their stake to others. 
In addition, if multiple players run only one node, they can reduce the cost required to run nodes. 
Therefore, PoS systems do not satisfy ND-$m$. 
These behaviors are observed through PoS pools~\cite{pospools, stakeminers} or leased PoS~\cite{leased_pos} in practice.
This fact also implies that it is less profitable for one player to run multiple nodes than it is to run one node; thus, PoS systems satisfy NS-$\delta$. 
Finally, the system cannot satisfy ED-$(\varepsilon, \delta)$.  
To explain this, we should consider when $B_r$ is a constant and when it is not, where PIVX~\cite{pivx} is associated with the latter. 
If $B_r$ is a constant, the utility $U_{n_i}$ is a strictly increasing function of $\alpha_{n_i}.$ Because Eq.~\eqref{eq:density2} is not met, according to Thm.~\ref{thm:example}, this case cannot satisfy ED-$(\varepsilon, \delta)$. 
Meanwhile, in the PIVX system, $B_r$ is a decreasing function of  $\sum_{n_j}\alpha_{n_j}$ owing to the \textit{seesaw effect}~\cite{pivx}. 
Therefore, for a large value of $\sum_{n_j}\alpha_{n_j},$ nodes earn fewer rewards compared to the case when $\sum_{n_j}\alpha_{n_j}$ is small. 
In this case, there is an equilibrium where all nodes reduce their resource power for higher profits and, in addition, a strategy that allows a state to reach the equilibrium exists. 
This does not change the power distribution among nodes, which is only related to the relative resource power of the nodes. 
As a result, PIVX also does not satisfy ED-$(\varepsilon, \delta)$. 

As shown in Table~\ref{tab:incentive_analysis}, the results are similar to those for PoW coins. \textbf{Therefore, as with PoW coins, PoS coins would have a restricted number of independent players and a biased power distribution among them.} 
Note that we excluded Wanchain in this analysis because the specifications of its PoS protocol had not yet been provided at the time of writing~\cite{wanchain_announce}.  
Similar to IOTA and BridgeCoin, Nano does not provide incentives to run nodes. 
Therefore, the result of Nano is the same with IOTA and BridgeCoin.
In addition, Cardano is planning to implement an incentive system different from that of the usual PoS systems~\cite{brunjes2018reward}. 
The system has the goal that there should be $k$ nodes with similar resource power for a given $k$. 
In fact, this incentive system has a similar property to DPoS systems, which will be described in the following section.

\subsubsection{Delegated Proof of Stake}
\label{subsubsec:dpos}
DPoS systems are significantly different from PoW and PoS systems.
Unlike these systems, DPoS systems do not give nodes block rewards proportional to their resource power.
Instead, stake holders elect block generators through a voting process, where the voting power is proportional to the stake owned by the stake holders (i.e., voters). 
Then, the block generators have an equal opportunity to generate blocks and earn the same block rewards. 
Therefore, when we arrange $\bm{\bar{\alpha}}=\{\alpha_{n_i}|\,1\leq i \leq n\}$ in descending order, we can express the utility $U_{n_i}$ in DPoS systems as follows: 

\begin{equation}
\label{eq:dpos}
U_{n_i}(\alpha_{n_i}, \bm{\bar{\alpha}_{-n_i}})=
\begin{cases}
B_r-c & \text{ if } i\leq N_{\tt dpos}\\
\,-c & \text{ otherwise }
\end{cases},
\end{equation}
where $B_r$ is a block reward that a node can earn on average per a time unit, and $c$ represents the cost associated with running one node. 
In addition, $N_{\tt dpos}$ is a constant number given by the DPoS system. 
Eq.~\eqref{eq:dpos} implies that only $N_{\tt dpos}$ nodes with the most votes can earn rewards by generating blocks. 
However, not all DPoS systems have the same incentive scheme as Eq.~\eqref{eq:dpos}. 
For example, EOS with $N_{\tt dpos}=21$ gives small rewards to nodes ranked within the 100-th place~\cite{eos_incentive}. 
In addition, Steem with $N_{\tt dpos}=20$ randomly chooses one node, ranked outside the 20-th place, as a block generator~\cite{steem}. Thus, the system also gives small rewards to nodes ranked outside the 20-th position. 
In WaykiChain, the incentive system is significantly different from the typical incentive scheme used in DPoS systems because nodes with small votes can also earn non-negligible rewards~\cite{waykichain_incentive}. 
Although incentive systems different from Eq.~\eqref{eq:dpos} exist, we describe the analysis results of the DPoS coins with respect to Eq.~\eqref{eq:dpos} because their properties are similar. 

Firstly, the DPoS system attracts players who can obtain high voting power because it provides them with a block reward. 
Meanwhile, rational players who are unable to obtain high voting power cannot earn any rewards. 
Therefore, the system partially satisfies GR-$m$. 
Moreover, it is rational for multiple players with small stakes to delegate their stakes to one player by voting for that player, which is why this system is called a \textit{delegated} PoS system. 
Meanwhile, rational players with high stakes would run their own nodes by voting for themselves. 
For example, if two players have sufficiently high stakes and run two nodes, they can earn a total value of $2(B_r-c)$ as net profit. 
However, when they run only one node, they earn only $B_r-c.$
As a result, it is rational only for those players with small stakes to delegate all their resource power to others, and ND-$m$ is partially satisfied. 

Next, we consider NS-$\delta$. As described above, a player with small stakes would not run multiple nodes, but instead would delegate their stakes to others. 
However, for a player with high stakes, this is divided into two cases: when weak identity management exists and when it does not. 
Weak identity management implies that nodes should reveal a pseudo-identity such as a public URL or a social ID.
Firstly, in the latter case, the player with high stakes can earn a higher profit by running multiple nodes because there is no Sybil cost. 
Therefore, a DPoS system in which identity management does not exist partially satisfies NS-$\delta$ because only players with high stakes would run multiple nodes. 
Meanwhile, when the system includes weak identity management, voters can partially recognize whether different nodes are run by the same player. 
Therefore, the voters can avoid voting for these multiple nodes run by the same player.
This is because they may want to achieve good decentralization in the system, and recognize that the system can be centralized towards a few players when they vote for the nodes controlled by the same player. 
This means that it is not more profitable for one player to run multiple nodes than it is to run one node (i.e., Sybil costs exist), and these DPoS systems satisfy NS-$\delta$. 
Note that because the identity management is not perfect, a rich player can still run multiple nodes by \textit{creating multiple pseudo-identities.} 
Thus, strictly speaking, systems with weak identity management still do not fully satisfy NS-$\delta$.
However, because it is certainly more expensive for a rich player to run multiple nodes in systems with weak identity management when compared to systems without identity management, we mark such systems with \semifilled$^\star$ for NS-$\delta$ in Table~\ref{tab:incentive_analysis} to distinguish them from systems with no identity management. 

Currently, EOS, TRON, Steem, and Steem Dollars have weak identity management. 
EOS and TRON propose some requirements in order for a player to register as a delegate, even though the requirements are not official~\cite{eos_producer, eos_requirement, tron_voter}.
These requirements include a public website, technical specifications, and team members, which can be regarded as pseudo-identities. 
Steem and Steem Dollars provide the information for activities in Steemit 
~\cite{steemit, steem_witness, steem_witness2}.
Note that Steem and Steem Dollars are transacted under the same consensus protocol. 

Finally, we examine whether the DPoS system satisfies ED-$(\varepsilon, \delta)$. 
To this end, we consider two cases: when a delegate shares the block reward with voters (e.g., TRON~\cite{tron_share} and Lisk~\cite{lisk_share}), and when they do not share (e.g., EOS\footnote{A debate exists as to whether delegates should share their rewards with voters or not. Currently, some delegates have announced that they will share the rewards~\cite{eos_share1, eos_share2}.}). 
In the former case, if a delegator receives $V$ votes, the voters who voted for the delegator can, in general, earn reward $\frac{B_r}{V}-f$ per vote, where $f$ represents a fee per vote paid to the delegator. 
Here, note that the larger $V$ is, the smaller the reward is that the voters earn.
Therefore, when voters are biased towards a delegator, some voters can move their vote to other delegators for higher profits. 
In the latter case, delegators would increase their effective power by voting for themselves with more stakes to maintain or increase their ranking, and Eq.~\eqref{eq:density2} is met in the DPoS system. 
This allows for a more even power distribution among the delegators. 
Therefore, in the two cases, the power distribution among delegators can converge to an even distribution. 
However, the wealth gap between nodes obtaining small voting power and nodes obtaining high voting power would increase, thus implying that the probability of poor nodes generating blocks becomes smaller gradually.
Consequently, the DPoS system partially satisfies ED-$(\varepsilon, \delta)$. 

Table~\ref{tab:incentive_analysis} presents the analysis result for the DPoS coins according to the four conditions. 
\textbf{DPoS systems may potentially ensure even power distribution among a limited number of players when weak identity management exists.
However, the system has a limited number of players running nodes in the consensus protocol, 
which implies that they cannot have good decentralization.}

\section{Empirical Study}
\label{app:level}
In this section, we extensively collect and quantitatively analyze the data for the PoW, PoS, and DPoS coins not only to establish the degree to which they are currently centralized, but also to validate the protocol analysis result and four conditions. 
Through this study, we empirically observe rational behaviors, such as the delegation of resources to a few players and the running of multiple nodes, which eventually hinder full decentralization.

\subsection{Methodology}

We considered the past 10,000 blocks before Oct. 15, 2018, for PoW and PoS systems and the past 100,000 blocks before Oct. 15, 2018, for DPoS systems since some DPoS systems do not renew the list of block generators within 10,000 blocks. 
We parsed addresses of block generators from each blockchain explorer for 68 coins. 
Because IOTA and Nano are based on DAG technology instead of blockchain technology, the analysis of these two systems will be presented in Section~\ref{subsubsec:DAG}.

We determined the number $NB_{A_i}$ of blocks generated by each address $A_i$, where the set of addresses is denoted by $\mathcal{A}.$
We then constructed a dataset $\mathcal{NB}=\left\{NB_{A_i}| A_i\in\mathcal{A}\right\}$ and rearranged $\mathcal{NB}$ and $\mathcal{A}$ in descending order of $NB_{A_i}$. 
Then, we analyzed the dataset using three metrics: the total number of addresses ($|\mathcal{A}|$), the Gini coefficient, and the entropy ($H$), where the Gini coefficient is the most commonly used term to measure wealth distribution in economics. 
Regarding the security in blockchain systems, it is meaningful to analyze not only how evenly the total power is distributed but also how evenly 50\% and 33\% of the power are distributed, since a player who possesses above 50\% or 33\% power can execute attacks as described in Section~\ref{sec:background}.
Therefore, we also measure the level of decentralization for 50\% and 33\% power in the systems using the three metrics. 
To do this, we first define subset $\mathcal{A}^{x}$ of the address set $\mathcal{A},$ and subset $\mathcal{NB}^x$ of the data set $\mathcal{NB}$ as follows:
\begin{equation*}
    \begin{aligned}
    \mathcal{A}^{x}&=\Bigl\{A_i\in \mathcal{A}\,\Big|\,\,\frac{\sum_{j=1}^{i-1}NB_{A_i}}{\sum_{A_i\in\mathcal{A}}NB_{A_i}}<x\Bigr\},\\
\mathcal{NB}^{x}&=\{NB_{A_i}|\, A_i \in \mathcal{A}^x\},
    \end{aligned}
\end{equation*}
where $0\leq x \leq 1$. Here, note that if $x$ is 0, the two sets are empty, and if $x$ is 1, they are equal to $\mathcal{A}$ and $\mathcal{NB},$ respectively.
The Gini coefficient and the entropy ($H$) are then defined as:

$$Gini(\mathcal{NB}^x)= \frac{\sum_{A_i, A_j\in\mathcal{A}^x}|NB_{A_i}-NB_{A_j}|}{2|\mathcal{A}|\sum_{A\in\mathcal{A}^x}NB_{A_i}}, $$

$$H(\mathcal{NB}^x)=-\sum_{A_i\in\mathcal{A}^x}\frac{NB_{A_i}}{\sum_{A_i\in\mathcal{A}^x}NB_{A_i}}\log_2 \Bigl(\frac{NB_{A_i}}{\sum_{A_i\in\mathcal{A}^x}NB_{A_i}}\Bigr).
$$
The Gini coefficient measures the spread of the data set $\mathcal{NB}^x.$ If the deviation of $\mathcal{NB}^x$ is small, its value is close to 0. Otherwise, the coefficient is close to 1.  
The entropy depends on both $|\mathcal{A}^x|$ and the Gini coefficient. 
As $|\mathcal{A}^x|$ gets larger and the Gini coefficient gets smaller, the entropy gets larger. 
Therefore, entropy implicitly represents the level of decentralization, and large entropy implies a high level of decentralization.  
In fact, because a player can have multiple addresses, the measured values may not accurately represent the actual level of decentralization.
However, since entropy is a concave function of the relative ratio of $NB_{A_i}$ to the total number of generated blocks  (i.e., $\frac{NB_{A_i}}{\sum_{A_i\in\mathcal{A}^x}NB_{A_i}}$), the results show an upper bound of the current level of decentralization. 
Therefore, if the measured values of entropy are low, the current systems do not have good decentralization.

\begin{small}
\begin{table}[ht]
\renewcommand{\tabcolsep}{1.5pt}
  \centering
  \caption{PoW Coins}
  \label{tab:pow_level}
  \begin{tabular}{|c|c|c|c|c|c|c|c|c|c|c|}
    \hline
    \multicolumn{1}{|c|}{}&\multicolumn{3}{c|}{100 \%}&\multicolumn{3}{c|}{50\%}&\multicolumn{3}{c|}{33\%}\\
    \cline{2-10}
    Coin name&$|\mathcal{A}|$&Gini&H& $|\mathcal{A}^{\frac{1}{2}}|$&Gini$^{\frac{1}{2}}$ &H$^{\frac{1}{2}}$& $|\mathcal{A}^{\frac{1}{3}}|$&Gini$^{\frac{1}{3}}$ &H$^{\frac{1}{3}}$\\
    \hline
    \hline
    \cline{1-10} Bitcoin & 62 & 0.8192 & 3.89 & 4 & 0.1143 & 1.98 & 3 & 0.1103 & 1.57 \\ 
     \hline
     \cline{1-10} Ethereum & 65 & 0.8634 & 3.38 & 3 & 0.1402 & 1.53 & 2 & 0.0415 & 1.00 \\ 
    \hline
    \cline{1-10} Bitcoin Cash & 15& 0.5729 & 3.06 & 3 & 0.2572 & 1.51 & 2 & 0.0859 & 0.12 \\ 
    \hline 
    \cline{1-10} Litecoin & 35 & 0.8094 & 3.10 & 3 & 0.0176 & 1.58 & 2 & 0.0146 & 1.00 \\ 
    \hline 
    \cline{1-10} Dash & 109 & 0.9005 & 3.79 & 4 & 0.2050 & 1.90 & 2 & 0.0770 & 0.98 \\ 
    \hline 
    \cline{1-10} Ethereum Classic & 83 & 0.8916 & 3.17 & 2 & 0.1538 & 0.93 & 1 & 0 & 0 \\ 
    \hline
    \cline{1-10} Dogecoin & 400 & 0.8686 & 4.95 & 4 & 0.2123 & 1.89 & 2 & 0.1098 & 0.96 \\ 
    \hline
    \cline{1-10} Zcash & 75 & 0.8932 & 3.36 & 3 & 0.0615 & 1.52 & 2 & 0.0546 & 0.15 \\ 
    \hline
    \cline{1-10} Bitcoin Gold & 29 & 0.8585 & 2.36 & 1 & 0 & 0 & 1 & 0 & 0 \\ 
    \hline
    \cline{1-10} Decred & 17 & 0.7751 & 2.33 & 2 & 0.1471 & 0.35 & 2 & 0.1471 & 0.35 \\ 
    \hline
    \cline{1-10} Bitcoin Diamond & 16 & 0.7401 & 2.44 & 2 & 0.0707 & 0.99 & 2 & 0.0707 & 0.99 \\ 
    \hline
    \cline{1-10}  DigiByte & 125 & 0.7791 & 5.09 & 7 & 0.2724 & 2.63 & 4 & 0.1879 & 1.90 \\ 
    \hline
    \cline{1-10} \rowcolor[gray]{0.8} Siacoin & 1406 & 0.8582 & 3.02 & 2 & 0.1551 & 0.98 & 2 & 0.1551 & 0.98 \\ 
    \hline
    \cline{1-10} Verge & 82 & 0.7261 & 4.92 & 8 & 0.1762 & 3.03 & 5 & 0.0820 & 2.46 \\ 
    \hline
    \cline{1-10} Metaverse ETP & 36 & 0.7964 & 3.25 & 3 & 0.2914 & 1.49 & 2 & 0.1927 & 0.97 \\ 
    \hline
    \cline{1-10} Bytom & 12 & 0.7978 & 1.54 & 1 & 0 & 0 & 1 & 0 & 0 \\ 
    \hline
    \cline{1-10} MOAC & 28 & 0.7067 & 3.46 & 3 & 0.2330 & 1.53 & 2 &  0.1615 & 0.98 \\ 
    \hline
    \cline{1-10} Horizen & 96 & 0.9109 & 3.39 & 3 & 0.0882 & 1.56 & 2 & 0.0189 & 1.00 \\ 
    \hline
    \cline{1-10} MonaCoin & 44 & 0.8185 & 3.39 & 3 & 0.1373 & 1.56 & 2 & 0.0920 & 0.99 \\ 
    \hline
    \cline{1-10} Bitcoin Private & 135 & 0.8557 & 4.48 & 5 & 0.1260 & 2.28 & 3 & 0.0766 & 1.57 \\ 
    \hline
    \cline{1-10} Zcoin & 361 & 0.9562 & 1.75 & 1 & 0 & 0 & 1 & 0 & 0 \\ 
    \hline
    \cline{1-10} \rowcolor[gray]{0.8} Syscoin & 5979 & 0.2529 & 10.37 & 1978 & 0.5055 & 6.78 & 644 & 0.7571 & 3.61 \\ 
    \hline
    \cline{1-10} Groestlcoin  & 10 & 0.4969 & 2.67 & 3 & 0.3408 & 1.47 & 2 & 0.4110 & 0.45 \\ 
    \hline
    \cline{1-10} Bitcoin Interest & 19 & 0.7267 & 2.66 & 2 & 0.3109 & 0.70 & 1 & 0 & 0 \\ 
    \hline
    \cline{1-10} Vertcoin & 60 & 0.8390 & 3.61 & 3 & 0.2639 & 1.40 & 2 & 0.2064 & 0.87 \\
    \hline
    \cline{1-10} Ravencoin & 71 & 0.8014 & 4.12 & 4 & 0.2057 & 1.90 & 2 & 0.0488 & 0.99 \\ 
    \hline
    \cline{1-10} \rowcolor[gray]{0.8} Namecoin & 3390 & 0.5693 & 8.00 & 49 & 0.8613 & 2.52 & 3 & 0.1913 & 1.48 \\ 
    \hline
    \cline{1-10} BridgeCoin & 1 & 0 & 0 & 1 & 0 & 0 & 1 & 0 & 0 \\ 
    \hline
    \cline{1-10} SmartCash & 7& 0.6885 & 1.47 & 1 & 0 & 0 &1 & 0 & 0\\ 
    \hline
    \cline{1-10} Ubiq & 34 & 0.8440 & 2.58 & 1 & 0 &0 &1 & 0 &0 \\ 
    \hline
    \cline{1-10} Zclassic & 41 & 0.7762 & 3.54 & 3 & 0.2394 & 1.43 & 2 & 0.0899 & 0.98 \\
    \hline
    \cline{1-10} Burst & 143 & 0.9054 & 3.45 & 2 & 0.2473 & 0.82 &1 & 0 &0 \\ 
    \hline
    \cline{1-10}\rowcolor[gray]{0.8} Prime & 7477 & 0.2525 & 10.46 & 2476 & 0.5048 & 6.63 & 809 & 0.7565 & 3.22 \\ 
    \hline
    \cline{1-10} Litecoin Cash & 33 & 0.6788 & 3.78 & 5 & 0.0711 & 2.31 & 3 & 0.0557 & 1.58 \\ 
    \cline{1-10} Unobtanium & 30 & 0.9463 & 0.89 & 1& 0 &0 &1 & 0 &0 \\ 
    \hline
    \cline{1-10} \rowcolor[gray]{0.8} Electra & 1268  & 0.6608 & 8.34 & 46 & 0.5262 &4.87 &12 & 0.2622 & 3.53 \\ 
    \hline
    \cline{1-10} Pura & 19 & 0.6521 & 3.08 & 3 & 0.0778 & 1.58 & 2 & 0.0905 & 0.99 \\ 
    \hline
    \cline{1-10} Viacoin & 33 & 0.9141 & 1.78 & 1 & 0 & 0 & 1 & 0 & 0 \\ 
    \hline
    \cline{1-10} Bitcore & 116 & 0.9337 & 3.11 & 2 & 0.0956 & 0.97 & 2 & 0.0956 & 0.97 \\ 
    \hline
  \end{tabular}
 \end{table}
\end{small}

\subsection{Data Analysis}

\subsubsection{Quantitative analysis}

Tables~\ref{tab:pow_level}, \ref{tab:pos_level}, and \ref{tab:dpos_level} represent the results for PoW, PoS, and DPoS coins, respectively. 
Coins such as Monero~\cite{monero}, Bytecoin (21)~\cite{bytecoin}, Electroneum~\cite{electroneum}, DigitalNote~\cite{digitalnote}, and PIVX~\cite{pivx} include \textit{stealth} or \textit{anonymous} addresses that cannot be traced. Therefore, we excluded them from this data analysis. 
As such, we conduct the data analysis for 39 PoW, 19 PoS, and 10 DPoS coins in this section. 
In addition, the datasets for certain coins have \textit{too much noise} to establish their actual level of decentralization because they include \textit{short-lived addresses}, which are only used for a short time and discarded later. 
We shaded these coins in gray in the tables. 
Moreover, in the case of Cardano and WaykiChain, only trusted nodes are allowed to participate in the protocol at the time of writing since they have not yet implemented their public consensus protocols~\cite{kiayias2017ouroboros, cardano_notopen, waykichain_plan}. 
In the tables, we shaded to these coins in blue. 
We do not consider these shaded coins when interpreting the results below.

\begin{small}
\begin{table}[t]
\renewcommand{\tabcolsep}{0.5pt}
  \centering
  \caption{PoS Coins}
  \label{tab:pos_level}
  \begin{tabular}{|c|c|c|c|c|c|c|c|c|c|c|}
    \hline
    \multicolumn{1}{|c|}{}&\multicolumn{3}{c|}{100 \%}&\multicolumn{3}{c|}{50\%}&\multicolumn{3}{c|}{33\%}\\
    \cline{2-10}
    Coin name&$|\mathcal{A}|$&Gini&H& $|\mathcal{A}^{\frac{1}{2}}|$&Gini$^{\frac{1}{2}}$ &H$^{\frac{1}{2}}$& $|\mathcal{A}^{\frac{1}{3}}|$&Gini$^{\frac{1}{3}}$ &H$^{\frac{1}{3}}$\\
    \hline
    \hline
    \cline{1-10} \rowcolor{blue!20} Cardano & 7& 0.0039 & 2.81 & 3 & 0.0083 & 2.11 & 2 & 0.0111 & 1.50 \\ 
    \hline
    \cline{1-10} Tezos & 245 & 0.8391 & 5.54 & 9 & 0.1061 & 3.13 & 6 & 0.1168 & 2.55 \\ 
     \hline
     \cline{1-10} Qtum & 1853 & 0.7404 & 8.07 & 32 & 0.5923 & 4.12 & 7 & 0.2512 & 2.69 \\ 
    \hline 
    \cline{1-10} Waves & 110 & 0.8606 & 4.24 & 4 & 0.1545 & 1.93 & 3 & 0.1628 & 1.51 \\ 
    \hline 
    \cline{1-10} Stratis & 527 & 0.8113 & 6.78 & 20 & 0.2626 & 4.15 & 10 & 0.2007 & 3.23 \\ 
    \hline
    \cline{1-10} Cryptonex & 122 & 0.9231 & 3.30 & 4 & 0.0103 & 2.00 & 3 & 0.0078 & 1.58 \\ 
    \hline 
    \cline{1-10} Ardor & 247 & 0.8623 & 4.91 & 8 & 0.5376 & 2.20 & 6 & 0.4554 & 1.95 \\ 
    \hline
    \cline{1-10} Nxt & 165 & 0.9150 & 3.30 & 2 & 0.0326 & 1.00 & 2 & 0.0326 & 1.00 \\ 
    \hline
    \cline{1-10} PRIZM & 82 & 0.8672 & 3.68 & 4 & 0.0053 & 2.00 & 3 & 0.0022 & 1.58 \\ 
    \hline 
    \cline{1-10} Whitecoin & 239 & 0.6273 & 6.84 & 32 & 0.2954 & 4.75 & 15 & 0.2740 & 3.71 \\ 
    \hline
    \cline{1-10} Blocknet & 584 & 0.7965 & 6.54 & 10 & 0.3891 & 2.96 & 4 & 0.1778 & 1.92 \\ 
    \hline
    \cline{1-10} \rowcolor[gray]{0.8} Particl & 1801 & 0.5989 & 9.48 & 141 & 0.4436 & 6.56 & 48 & 0.3713 & 5.21 \\ 
    \hline
    \cline{1-10} Neblio & 1177 & 0.8258 & 6.00 & 5 & 0.4523 & 1.74 & 2 & 0.3123 & 0.70 \\ 
    \hline
    \cline{1-10} Bitbay & 313 & 0.7839 & 6.02 & 9 & 0.3075 & 2.94 & 4 & 0.0890 & 1.97 \\ 
    \hline
    \cline{1-10} GCR & 263 & 0.8192 & 5.84 & 11 & 0.2515 & 3.43 & 6 & 0.1779 & 2.68 \\ 
    \hline
    \cline{1-10} \rowcolor[gray]{0.8} NIX & 1130 & 0.4520 & 9.62 & 255 & 0.2224 & 7.86 & 135 & 0.2180 & 6.96 \\ 
    \hline
    \cline{1-10} SaluS & 27 & 0.6974 & 3.41 & 4 & 0.1577 & 1.97 & 3 & 0.1342 & 1.56 \\ 
    \hline
    \cline{1-10} \rowcolor[gray]{0.8} Leocoin & 879 & 0.5988 & 8.72 & 106 & 0.3639 & 6.33 & 44 & 0.3268 & 5.16 \\ 
    \hline
    \cline{1-10} ION & 287 & 0.8998 & 4.24 & 2 & 0.0335 & 1.00 & 2 & 0.0335 & 1.00 \\ 
    \hline
  \end{tabular}
\end{table}
\end{small}

\begin{small}
\begin{table}[t]
\renewcommand{\tabcolsep}{0.5pt}
  \centering
  \caption{DPoS Coins}
  \label{tab:dpos_level}
  \begin{tabular}{|c|c|c|c|c|c|c|c|c|c|c|}
    \hline
    \multicolumn{1}{|c|}{}&\multicolumn{3}{c|}{100 \%}&\multicolumn{3}{c|}{50\%}&\multicolumn{3}{c|}{33\%}\\
    \cline{2-10}
    Coin name&$|\mathcal{A}|$&Gini&H& $|\mathcal{A}^{\frac{1}{2}}|$&Gini$^{\frac{1}{2}}$ &H$^{\frac{1}{2}}$& $|\mathcal{A}^{\frac{1}{3}}|$&Gini$^{\frac{1}{3}}$ &H$^{\frac{1}{3}}$\\
    \hline
    \hline
    \cline{1-10} EOS & 22 & 0.0447 & 4.43 & 11 & 0.0002 & 3.46 & 7 & 0.0003 & 2.81 \\ 
     \hline
     \cline{1-10} TRON & 28& 0.0358 & 4.79 & 14 & 0.0009 & 3.81 & 9 & 0.0008 & 3.17 \\ 
    \hline 
    \cline{1-10} Lisk & 101& 0.0023 & 6.66 & 51 & 0.0011 & 5.67 & 34 & 0.0010 & 5.09 \\ 
    \hline 
    \cline{1-10} BitShare & 27& 0.0009 & 4.75 & 14 & 0.0007 & 3.81 & 9 & 0.0003 & 3.17 \\ 
    \hline 
    \cline{1-10} Steem & 140 & 0.8324 & 4.68 & 11 & 0.0002 & 3.46 & 7 & 0.0002 & 2.81 \\ 
    \hline
    \cline{1-10} GXChain & 21& 0.0328 & 4.39 & 10 & 0.0016 & 3.32 & 7 & 0.0013 & 2.81 \\ 
    \hline
    \cline{1-10} Ark & 52 & 0.0200 & 5.69 & 25 & 0.0005 & 4.64 & 16 & 0.0003 & 4.00 \\ 
    \hline
    \cline{1-10} \rowcolor{blue!20} WaykiChain & 11 & 0.1688 & 3.27 & 5 & 0.0021 & 2.32 & 4 & 0.0022 & 2.00 \\ 
    \hline
    \cline{1-10} Achain & 99 & 0.0018 & 6.63 & 49 & 0.0009 & 5.61 & 32 & 0.0008 & 5.00 \\ 
    \hline
    \cline{1-10} Asch & 92 & 0.0769 & 6.50 & 42 & 0.0267 & 5.39 & 27 & 0.0184 & 4.75 \\
    \hline
  \end{tabular}
\end{table}
\end{small}

Firstly, one can see that there is an insufficient number of block generators in PoW, PoS, and DPoS coins. 
In particular, $|\mathcal{A}^{\frac{1}{2}}|$ and $|\mathcal{A}^{\frac{1}{3}}|$ in PoW and PoS are quite small. 
However, PoS systems generally have more block generators than PoW systems. 
This may be because the pool concept is more common in PoW systems. 
Indeed, most PoS systems are currently in an early stage, and some of them do not have staking pools yet. 
For example, Qtum does not have staking pools at the time of writing and has a relatively large number of block generators compared to others\footnote{Note that the value of $|\mathcal{A}|$ in Table~\ref{tab:pos_level} does not accurately represent the number of block generators because a player can create multiple addresses.}. 
This fact certainly allows the level of decentralization in Qtum to increase. 
However, we cannot assure that this situation will continue. There have already been some requests for pools and intentions to run a business for Qtum staking pools~\cite{qtum_request, qtum_request2, qtum_pool, qtum_pool2}. 
Considering this, we expect that staking pools will become more popular in PoS systems. 
Note that Tezos and Waves, already allowing the delegation of stakes, have a smaller number of block generators. 
PoW protocols also did not originally have a pool concept. However, mining pools have become significantly popular, and most miners currently join mining pools. 
As a special case, BridgeCoin, which does not satisfy GR-$m$ at all, has only one player. This implies that it cannot attract the participation of players. 
For the case of DPoS systems, they (except for Steem) have $|\mathcal{A}|$ similar to $N_\tts{dpos}.$ 
The reason for $|\mathcal{A}|$ in Steem being relatively large when compared to $N_\tts{dpos}=20$ is that one block generator is randomly chosen among all nodes as described in Section~\ref{subsubsec:dpos}. 
However, in all DPoS systems, $|\mathcal{A}^{\frac{1}{2}}|$ and $|\mathcal{A}^{\frac{1}{3}}|$ are close to $\frac{N_\tts{dpos}}{2}$ and $\frac{N_\tts{dpos}}{3}$, respectively. 
This indicates that only a small number of players have been block generators even though block generators are frequently elected, implying that the barriers to becoming a block generator are quite high. 

\begin{small}
\begin{table}[t]
\renewcommand{\tabcolsep}{0.5pt}
  \centering
  \caption{Resource Power in DPoS Coins}
  \label{tab:dpos_power}
  \resizebox{\columnwidth}{!}{%
  \begin{tabular}{|c|c|c|c|c|c|c|c|c|c|c|c|c|c|}
    \hline
    \multicolumn{1}{|c|}{}&\multicolumn{3}{c|}{Delegates}&\multicolumn{3}{c|}{100 \%}&\multicolumn{3}{c|}{50\%}&\multicolumn{3}{c|}{33\%}\\
    \cline{2-13}
    Coin name&$|\mathcal{N}^{\tt D}|$&Gini$^{\tt D}$&H$^{\tt D}$&$|\mathcal{N}|$&Gini&H& $|\mathcal{N}^{\frac{1}{2}}|$&Gini$^{\frac{1}{2}}$ &H$^{\frac{1}{2}}$& $|\mathcal{N}^{\frac{1}{3}}|$&Gini$^{\frac{1}{3}}$ & H$^{\frac{1}{3}}$ \\
    \hline
    \hline
    \cline{1-13} EOS & 21 & 0.048 & 4.39 & 439 & 0.846 & 6.47 & 28 & 0.063 & 4.80 & 18 & 0.047 & 4.16 \\ 
     \hline
     \cline{1-13} TRON & 27& 0.198 & 4.54 & 165 & 0.849 & 4.84 & 12 & 0.258 & 3.29 & 6 & 0.324 & 2.23 \\ 
    \hline 
    \cline{1-13} Lisk & 101& 0.031 & 6.65 & 1179 & 0.907 & 6.99 & 52 & 0.013 & 5.70 & 35 & 0.011 & 5.13 \\ 
    \hline 
    \cline{1-13} BitShare & 27 & 0.070 & 4.74 & 140 & 0.550 & 6.35 & 21 & 0.051 & 4.34 & 14 & 0.038 & 3.80 \\ 
    \hline 
    \cline{1-13} Steem & 20 & 0.052 & 4.32 & 150 & 0.588 & 6.37 & 23 & 0.061 & 4.52 & 15 & 0.042 & 3.90 \\ 
    \hline
    \cline{1-13} GXChain & 21& 0.000 & 4.39 & $-$ & $-$ & $-$ & $-$ & $-$ & $-$ & $-$ & $-$ & $-$ \\ 
    \hline
    \cline{1-13} Ark & 51 & 0.053 & 5.66 & 196 & 0.734 & 5.86 & 26 & 0.054 &4.69 & 17 & 0.055 & 4.08 \\ 
    \hline
    \cline{1-13} WaykiChain & $-$ & $-$ & $-$ & $-$ & $-$ & $-$ & $-$ & $-$ & $-$ & $-$ & $-$ & $-$ \\
    \hline
    \cline{1-13} Achain & $-$ & $-$ & $-$ & $-$ & $-$ & $-$ & $-$ & $-$ & $-$ & $-$ & $-$ & $-$ \\ 
    \hline
    \cline{1-13} Asch & 91 & 0.041 & 6.49 & 633 & 0.745 & 7.63 & 71 & 0.028 & 6.15 & 46 & 0.032 & 5.52 \\ 
    \hline
  \end{tabular}}
\end{table}
\end{small}

Next, we describe the power distribution among nodes. 
As shown in Tables~\ref{tab:pow_level} and \ref{tab:pos_level}, PoW and PoS coins certainly have a high value of the Gini coefficient, which implies that they have a significantly biased power distribution. 
Meanwhile, DPoS coins, except for Steem, have a low Gini coefficient, and all DPoS coins have low values of Gini$^{\frac{1}{2}}$ and Gini$^{\frac{1}{3}}.$ 
This is because the elected block generators have the same opportunity to generate blocks in DPoS systems. 
Again, note that in Steem, one block generator is randomly chosen among all nodes, which makes the Gini coefficient for \textit{all} block generators in Steem high. 

Unlike Table~\ref{tab:pow_level} and \ref{tab:pos_level}, Table~\ref{tab:dpos_level} does not present the resource power of the nodes, where the resource power indicates the number of stakes delegated to each node, because the probability of generating blocks is not proportional to the resource power in DPoS systems. 
Thus, to present the distribution of resource power among nodes, we analyze the instantaneous number of stakes delegated to each node through block explorers. 
Table~\ref{tab:dpos_power} represents the distribution of stakes used to vote for nodes as of Nov. 19, 2018, where we mark with ``$-$" the values that cannot be determined in the block explorer for the corresponding coin. 
In particular, the voting process in WaykiChain has not yet been implemented at the time of writing~\cite{waykichain_plan}. 

In Table~\ref{tab:dpos_power}, $|\mathcal{N}^{\tt x}|,$ Gini$^{\tt x}$, and $H^{\tt x}$ represent the size of $\mathcal{N}^{\tt x}$, Gini coefficient, and entropy for $\mathcal{N}^{\tt x},$ respectively. 
The columns labeled \textit{Delegates}, \textit{100\%}, \textit{50\%}, and \textit{33\%} provide information regarding the number of nodes, the Gini coefficient, and the entropy for the delegates ($\mathcal{N}^D$), and for the nodes whose total resource power is 100\% ($\mathcal{N}$),  50\% ($\mathcal{N}^\frac{1}{2}$), and 33\% ($\mathcal{N}^\frac{1}{3}$), respectively. 
Gini$^{\tt D}$ is low for all DPoS systems, indicating that delegates possess similar resource power. 
In Section~\ref{subsubsec:dpos}, we explained that DPoS systems can converge in probability to the state where delegates have similar resource power. 
Here, the reason Gini$^{\tt D}$ of TRON is relatively high compared to the others is that the node~\cite{tron_node} operated by the TRON foundation is ranked in the first place by a relatively large margin. 
However, we observe that delegates, except for this node, possess almost the same resource power in TRON. 
Conversely, the value of Gini for all nodes is high, implying a large gap between the rich and the poor players. 
Moreover, Table~\ref{tab:dpos_power} shows that the resource power is significantly biased toward the delegates. 

\textit{As a result, the quantitative data analysis validates our theory and the analysis result of the incentive systems in Section~\ref{app:protocol}. }

\subsubsection{Multiple nodes run by the same player}
In DPoS systems that do not have weak identity management, 
a rich player can easily earn a higher profit by running multiple nodes. 
However, because they do not have any real identity management, it can be difficult to detect this rational behavior in practice.
Nevertheless, we show that one player runs multiple nodes in several coins: GXChain, Ark, and Asch.

\para{GXChain. }
GXChain has 21 delegates in the consensus protocol. 
We can see the activities of the delegates via the official GXChain block explorer~\cite{gxchain_explorer}, including their creator.
At the time of writing, we observed that two players with \tts{nathan} and \tts{opengate} accounts run 16 and 5 active delegates, respectively. 
More specifically, the \tts{nathan} account created the delegates \tts{aaron, caitlin, kairos, sakura, taffy,} and \tts{miner1$\sim$11,} and the \tts{opengate} account created the delegates \tts{hrrs, dennis1, david12, marks-lee,} and \tts{robin-red.}
This implies that the system is currently controlled by at most only two players. 

\para{Ark. }
We discover that two nodes, \tts{biz\_classic} and \tts{biz\_private}, are run by the same player. 
Firstly, we can see that a player who has address \tts{AHsuUuhTNCGCbnPNkwJbeH27E4sDdcnmgp} votes for \tts{biz\_classic}, and the delegate \tts{biz\_classic} share rewards with the voter by issuing transactions.
Because transactions issued in the Ark system include some messages, we were able to observe the following two messages sent from \tts{biz\_classic} to the voter~\cite{ark_mult1, ark_mult2}: 
\begin{enumerate}
    \item You meet the minimum for \tts{biz\_private}. Switch for higher payouts.
    \item FYI: Change your vote to \tts{biz\_private} for higher payouts :)
\end{enumerate}
Therefore, we can speculate that \tts{biz\_classic} and \tts{biz\_private} are owned by the same player. 

\para{Asch. }
There are 87 active delegates, and we were able to find 30 and 50 delegates with names such as \tts{asch\_team\_i} and \tts{at i}, respectively, where \tts{i} is replaced by a number. 
For example, there exist delegate nodes with the names \tts{asch\_team\_1} or \tts{at5}.
Even though these names are quite similar, this is not enough to suspect that these nodes are controlled by the same player.
To determine whether the 80 nodes are owned by one player, we must investigate their activities.

Firstly, we determine when they became delegates.
Based on the transaction history, we can observe that the nodes named \tts{asch\_team\_1}$\sim${\tts 5} have simultaneously participated in the consensus protocol as delegates since Sep. 11, 2017.
Moreover, nodes named \tts{asch\_team\_6}$\sim$\tts{15} and those named \tts{asch\_team\_16}$\sim$\tts{35} simultaneously became delegates on Apr. 11, 2018, and Jun. 11, 2018, respectively.
Among these nodes, \tts{asch\_team\_31}$\sim$\tts{35} were inactive at the time of writing (Oct. 2018). 
In addition, all 50 nodes named \tts{at i} have become delegators simultaneously since Jul. 6, 2018. 

Secondly, all these nodes received 100 XAS (i.e., a unit of the Asch coin) from an address just before they became delegates. 
Even the address, which sent 100 XAS to \tts{asch\_team\_1$\sim$5}, is the same, and addresses for \tts{asch\_team\_6$\sim$15} and \tts{asch\_team\_16$\sim$34} are also the same, respectively. 
Furthermore, \tts{asch\_team\_35} and all nodes named \tts{at i} received 100 XAS from the same address.
Finally, these 80 nodes sent currencies to the address \tts{GADQ2bozmxjBfYHDQx3uwtpwXmdhafUdkN} at almost the same time on Aug. 20, 2018.
From this evidence, we can speculate that the 80 delegate nodes are run by the same player (or organization). 

\smallskip\noindent
\textbf{Summary. }
From these systems, we were able to observe that one player runs multiple nodes for a higher profit. In particular, GXChain and Asch systems seem to be controlled by only two players and one player, respectively, implying a severely low level of decentralization. 
In summary, even though DPoS systems can achieve an even power distribution among nodes, this even power distribution does not translate to the players, which implies that the system has a lower level of decentralization than expected. 

\subsubsection{DAG}
\label{subsubsec:DAG}

In this section, we describe the analysis result of IOTA and Nano, which adopt DAG.
In IOTA, transaction issuers are required to validate their transactions by themselves, and currently, there are not enough issuers to run IOTA stably. 
Therefore, to solve this problem, the IOTA foundation controls the system as a central authority, which implies that IOTA has only one player~\cite{iota-milestone, iota-centralization}. 
This result is in agreement with our protocol analysis in respect that many players do not exist in IOTA.
Meanwhile, at the time of writing, even though Nano does not have enough players, there are \textit{relatively} many players when compared to IOTA. 
Specifically, there are 64 players in Nano, and two players possess approximately 45\% of the power, indicating a significantly biased power distribution. 
This fact is derived referring to the data obtained from a node monitoring website~\cite{nanode}. 
We see that the situation of Nano is owing to incentives outside of the blockchain system. 
Indeed, we observe that at least 37 players get incentives outside of the blockchain system by participating in the system, and these players possess approximately 80\% of the power\footnote{We were not able to identify all such players because there are untraceable players.}.
For example, BrainBlocks~\cite{brainblocks}, which provides a platform related to Nano, is incentivized to run nodes in the Nano system for its business, and currently, it is a rich player in the Nano system. 
As a result, in Nano, most players participate in the consensus protocol to receive external incentives, and they possess most of the resource power.  
External incentives are discussed further in Section~\ref{subsec:debate}. 
\section{Discussion}
\label{sec:discuss}

\subsection{Debate on Incentive Systems}
\label{subsec:debate}

Recently, there was an interesting debate on the incentive system of Algorand~\cite{debate, gilad2017algorand,need}. 
Micali said that incentives are the hardest thing to do, and that existing incentivization has led to poor decentralization. 
Our study supports this notion by proving that it is impossible to design incentive systems for permissionless blockchains such that good decentralization is achieved. 

Can we then create a permissionless blockchain to achieve good decentralization without any incentive system?
The case where the incentive system does not exist is represented by $U_{n_i}=-c,$ where $c$ is the cost associated with running one node. 
This satisfies the second requirement of Def.~\ref{def:perfect} because NS-$\delta$ and ED-$(\varepsilon, \delta)$ are met as a trivial case. 
Meanwhile, the first two conditions, GR-$m$ and ND-$m$, cannot be satisfied. 
As examples, we can consider BridgeCoin, IOTA, and Byteball, which do not have incentive systems and have difficulty in attracting the participation of many players. 
BridgeCoin has only one player (refer to Tab.~\ref{tab:pow_level}), 
and IOTA is also controlled by just one player, the IOTA foundation~\cite{iota-milestone, iota-centralization}. 
Byteball is another system that adopts DAG, and there are only four players. 
These examples show that blockchain systems with no incentive system cannot have a sufficient number of players. 

However, our study considered only the incentives inside the system, and not incentives outside the system. 
Therefore, if there are some incentives that players can obtain outside the blockchain system, they can participate in the system.
For example, IBM is a validator in Stellar, which does business using Stellar, and BrainBlocks~\cite{brainblocks}  provides a payment platform related to Nano. 
This incentivizes IBM and BrainBlocks to participate in each system. 
Note that that fact does not ensure that these systems reach good decentralization. 
Indeed, both of these systems have poor decentralization~\cite{stellarbeat.io, mystellar, minjeong}. 
In other words, they do not have a sufficient number of players and have a biased power distribution. 
Besides, through these cases, we can empirically see that organizations related to the coin system (e.g., the coin foundation or companies that do business with the coin) control the blockchain system, which may deviate from the philosophy of permissionless blockchains. 

Note that we do not assert that blockchains without an incentive mechanism would always suffer from poor decentralization. 
Indeed, we can also find other peer-to-peer systems such as Tor and BitTorrent that attract many players without an incentive system. 
Of course, these systems are significantly different from a blockchain because they do not require resources such as computational power and stakes unlike a blockchain. 
In this paper, we do remain neutral on this debate.

\vspace{-2mm}
\subsection{Relaxation of Conditions from Consensus Protocol}
\label{subsec:relax}
We proved that an incentive system in permissionless blockchains cannot simultaneously satisfy the four conditions. 
Nevertheless, if there is a consensus protocol that relaxes part of the four conditions, we can expect to be able to design an incentive system such that good decentralization is achieved.
However, it seems to be quite difficult to design such consensus protocols.
We explain below the reason why the design of a consensus protocol relaxing the conditions is difficult by considering two methods of designing such protocols: 1) designing non-outsourceable puzzles and 2) finding non-delegable or non-divisible resources.

\para{Non-outsourceable puzzles. }
There exist several studies on the construction of non-outsourceable puzzles in PoW systems~\cite{two-phase, miller2014permacoin,miller2015nonoutsourceable, zeng2017nonoutsourceable}. 
In those puzzles, if players outsource the puzzles, their rewards can be stolen.
This risk can, therefore, cause a pool manager to refrain from outsourcing his work to pool miners. 
For example, in the proposed schemes, if a pool manager outsources the puzzles, the pool miner who finds a valid block might not submit that valid block to the pool manager and might steal the block reward. 

However, these puzzles still allow other types of mining pools, such as \textit{cloud mining}~\cite{cloud}, where individual miners buy hash rate from their service provider, and the provider directly solves the PoW puzzles using computing resources gathered by spending the received funds.
Miller et al.~\cite{miller2015nonoutsourceable} claimed that they can prevent cloud mining as well, since the cloud service provider can steal block rewards in their protocol.
However, with or without non-outsourceable puzzles, the provider can always steal the block reward without any clear evidence. 
Despite this risk, cloud mining has settled as one of the popular types of mining~\cite{popular-method} since cloud miners can reduce the cost of running hardware and nodes. 
Indeed, there exist several popular cloud mining services~\cite{recommend} such as Genesis Mining~\cite{genesis}, HashNest~\cite{hashnest} operated by BITMAIN~\cite{bitmain}, and Bitcoin.com~\cite{bitcoin.com}. 
This indicates that, if profitable, the delegation of resource power to part of the players would still occur even in non-outsourceable PoW protocols~\cite{two-phase, miller2014permacoin, miller2015nonoutsourceable, zeng2017nonoutsourceable}. 
Moreover, the more trust that the company providing the cloud mining service gets from users, the more popular the cloud service would become.

Even in the case of PoS coins, we can empirically see that players would delegate their resources to others for higher profits. 
One way is to delegate resources through investment in service providers, similar to cloud mining in PoW systems, and it seems to be difficult to prevent this if such a business is profitable.
As a result, it would be difficult to make the delegating behavior disappear by simply modifying the consensus protocol.

\para{Non-delegable/non-divisible resources. }
Another way to relax the four conditions is to find non-delegable or non-divisible resources. 
These resources make it impossible for players to delegate their resources to others and to run multiple nodes, respectively. 
Therefore, for each resource, it would be sufficient for the incentive systems to satisfy all conditions except for ND-$m$ and NS-$\delta$ in order to achieve full decentralization.

We can consider reputation as one such resource. 
Currently, GoChain uses proof-of-reputation (PoR) as a consensus algorithm in which nodes must have a high reputation score to participate.
In this system, only the company can be a validator, and it believes that PoR can achieve almost full decentralization~\cite{gochain, gochain2}.
In addition, trust can be one of the non-delegable and non-divisible resources. 
In the Stellar system, nodes have a trust-based relationship with one another. 
Specifically, Stellar uses FBA as a consensus algorithm, where nodes configure their quorum slice, which is a set of dependable nodes during a consensus process, according to their trust relationship. 
In addition, Bahri et al. proposed proof-of-trust (PoT), where more trusted nodes can easily solve puzzles~\cite{bahri2018trust}. 
However, both reputation and trust are not suitable for permissionless blockchains because players would need to know real identities of others. 
Even though Stellar is classified as a permissionless blockchain, for nodes to be effective validators, they should reveal identities.
As a result, it remains an open question as to whether we can find non-delegable or non-divisible resources that are suitable for permissionless blockchains. 
\section{Related Work}

\noindent
\textbf{Attacks. }
Eyal et al.~\cite{eyal2014majority} proposed selfish mining, which an attacker possessing over 33\% of the computing power can execute in PoW-based systems. They mentioned that this attack causes rational miners to join the attacker, eventually decreasing the level of decentralization.
Eyal~\cite{eyal2015miner} and Kwon et al.~\cite{kwon2017selfish} modeled a game between two pools. 
When considering block withholding attacks, the game is equivalent to \textit{the prisoner's dilemma}, and the attacks cause rational miners to leave their mining pools, and instead, directly run nodes in a consensus protocol~\cite{eyal2015miner}. 
Contrary to this positive result, a fork after withholding attack between two pools leads to a pool-size game, where a larger pool can earn extra profits, and thus, the Bitcoin system can become more centralized. 
Furthermore, two existing works analyzed the Bitcoin system in a transaction-fee regime where transaction fees in block rewards are not negligible~\cite{carlsten2016instability, tsabary2018gap}. 
They described that this regime incentivizes large miner coalitions and make a system more centralized. 

\smallskip
\noindent
\textbf{Analysis. }
Many papers have already examined centralization in the Bitcoin system. 
First, Gervais et al. described centralization of the Bitcoin system in terms of various aspects such as services, mining, and incident resolution processes~\cite{gervais2014bitcoin}. 
Miller et al. observed a topology in the Bitcoin network and found that approximately 2\% of high-degree nodes acquire three quarters of the mining power~\cite{miller2015discovering}.
Moreover, Feld et al. analyzed the Bitcoin network, focusing on its autonomous systems (ASes), and showed that routable peers are concentrated only in a few ASes~\cite{feld2014analyzing}. 
Recently, Gencer et al. analyzed the Bitcoin and Ethereum systems from the perspective of decentralization~\cite{gencer2018decentralization}.
Kwon et al. analyzed a game in which two PoW coins with compatible mining algorithms exist~\cite{kwon2019bitcoin}. 
They showed that fickle mining behavior between two coins can reduce the decentralization level of the lower-valued one of the two coins. 
In addition, Kim et al. analyzed the Stellar system and concluded that the system is significantly centralized~\cite{minjeong}. 

\smallskip
\noindent
\textbf{Solutions. }
There are several works that address the issue of poor decentralization in blockchains. 
Many works~\cite{two-phase, miller2014permacoin,miller2015nonoutsourceable, zeng2017nonoutsourceable} have proposed non-outsourceable puzzles to prevent mining pools from being popular.
However, they cannot fully prevent the delegation (Section~\ref{subsec:relax}).
As another solution, Luu et al. proposed an efficient decentralized mining pool, SMARTPOOL, where individual miners who directly run nodes in the consensus protocol can consistently earn profits~\cite{luu2017smart}.
However, this still does not incentivize players to run nodes directly (see Section~\ref{app:protocol}). 
Another work~\cite{blocki2016designing} proposed a proof-of-human-work requiring labor from players with CAPTCHA as a human-work puzzle. 
As mentioned by~\cite{blocki2016designing}, although the gap among labor abilities of people is relatively small by nature, rich players can hire more workers to solve more puzzles. 
Lastly, we are aware of a recent paper~\cite{brunjes2018reward} in which the authors addressed a similar problem to our paper.
Brünjes et al. proposed a reward scheme, which causes a system to reach a state where $k$ staking pools with similar resource power exist.
They assumed our third condition, NS-$\delta$ (i.e., all players can run only one node), and thus, it seems difficult for their incentive system to achieve good decentralization in practice. 
As described in previous sections, there is an incentive system that satisfies only GR-$m$, ND-$m$, and ED-$(\varepsilon, \delta)$. 

\section{Conclusion and direction}
\label{sec:conclude}

Developers are facing difficulties in designing blockchain systems to achieve good decentralization. 
Our study answers the question of why it is significantly difficult to design a system that achieves good decentralization, by proving that the achievement of good decentralization in the consensus protocol and non-reliance on a TTP contradict each other. 
More specifically, we prove that when the ratio between the resource power of the poorest and richest players is close to 0, the upper bound of the probability that systems without a Sybil cost will achieve full decentralization is close to 0. 
This result indicates that if we cannot narrow the gap between the rich and the poor in the real world or assign a Sybil cost without relying on a TTP, a high level of decentralization in systems will not occur forever with a high probability. 
Furthermore, through the protocol and data analysis, we observed the phenomena consistent with our theory.
From our result, we propose one direction to achieve good decentralization of the system; developing a method that can assign Sybil costs without relying on a TTP in blockchains.

\bibliographystyle{ACM-Reference-Format}
\bibliography{references}

\section*{Appendix}
\section{Proof of Theorem~4.3}
\label{app:example}

Because the function $U_{n_i}(\alpha_{n_i}, \bm{\bar{\alpha}_{-n_i}})$ is a strictly increasing function of $\alpha_{n_i}$, the players would want to increase their resource power and increase it at rate $r$ per earned profit. 
Therefore, the resource power $\alpha_{n_i}^{t}$ of node $n_i$ at time $t$ increases to 
$\alpha_{n_i}^{t+1}=\alpha_{n_i}^{t}+r\cdot R_{n_i}^{t}$ at time $t+1.$

Then we sequence nodes at time $t$ such that $\alpha_{n_i}^t \leq\alpha_{n_j}^t$ if $i<j$. 
Thus, $\alpha_{n_1}^t$ and $\alpha_{n_M}^t$ represent the smallest and largest resource power at time $t$, respectively.
In addition, we assume that there exist $M$ nodes (i.e., $|\mathcal{N}|=M$).
At time $t+1,$ the node $n_i$'s resource power $\alpha_{n_i}^{t+1}$ and other node $n_j$'s power $\alpha_{n_j}^{t+1}$ would be $\alpha_{n_i}^t+r\cdot f(\bm{\bar{\alpha}^t})$ and $\alpha_{n_j}^t,$ respectively, if node $n_i$ generates a block with probability $\Pr(R_{n_i}^t=f(\bm{\bar{\alpha}^t})\,|\,\bm{\bar{\alpha}^t})$.
Then, we resequence $M$ nodes at time $t+1$ such that $\alpha_{n_i}^{t+1} \leq\alpha_{n_j}^{t+1}$ if $i<j$.
Here, for simplicity, we denote by $\beta_{n_i}$ (or $\beta^t_{n_i}$) a resource power fraction of node $n_i$ (at time $t$). 
In other words, $\beta_{n_i}=\frac{\alpha_{n_i}}{\sum_{n_i}\alpha_{n_i}}$ and $\beta^t_{n_i}=\frac{\alpha^t_{n_i}}{\sum_{n_j}\alpha^t_{n_j}}.$
Moreover, $\frac{f(\bm{\bar{\alpha}^t})}{\sum_{n_i}\alpha^t_{n_i}}$ is denote by $B$.

Now, we show that $\lim_{t\rightarrow\infty}E[\beta_{n_1}^t]=\lim_{t\rightarrow\infty}E[\beta_{n_M}^t].$
First, the following is met. 
\begin{align}
&\frac{\beta_{n_i}}{\beta_{n_M}}\leq\frac{U_{n_i}(\alpha_{n_i}, \bm{\bar{\alpha}_{-n_i}})}{U_{n_M}(\alpha_{n_M},\bm{\bar{\alpha}_{-n_M}})} \Rightarrow \frac{1}{\beta_{n_M}}\leq\frac{\sum_{i}U_{n_i}(\alpha_{n_i},\bm{\bar{\alpha}_{-n_i}})}{U_{n_M}(\alpha_{n_M},\bm{\bar{\alpha}_{-n_M}})}\notag\\
&\Leftrightarrow U_{n_M}(\alpha_{n_M},\bm{\bar{\alpha}_{-n_M}})\leq \beta_{n_M}\sum_{i}U_{n_i}(\alpha_{n_i},\bm{\bar{\alpha}_{-n_i}}),\label{eq:M}\\    
&\frac{\beta_{n_i}}{\beta_{n_1}}\geq\frac{U_{n_i}(\alpha_{n_i},\bm{\bar{\alpha}_{-n_i}})}{U_{n_1}(\alpha_{n_1},\bm{\bar{\alpha}_{-n_1}})}\Rightarrow \frac{1}{\beta_{n_1}}\geq\frac{\sum_{i}U_{n_i}(\alpha_{n_i},\bm{\bar{\alpha}_{-n_i}})}{U_{n_1}(\alpha_{n_1},\bm{\bar{\alpha}_{-n_1}})}\notag\\
&\Leftrightarrow U_{n_1}(\alpha_{n_1},\bm{\bar{\alpha}_{-n_1}})\geq \beta_{n_1}\sum_{i}U_{n_i}(\alpha_{n_i},\bm{\bar{\alpha}_{-n_i}}).\label{eq:1}
\end{align}
In Eqs.~\eqref{eq:M} and \eqref{eq:1}, the equal sign is true only if all nodes have the same resource power fraction $\frac{1}{M}$.
Then we can derive the below equations.

\begin{equation*}
\begin{aligned}
&E[\beta_{n_i}^{t+1}|\bm{\bar{\alpha}^t}]=\Pr(R_{n_i}^t=f(\bm{\bar{\alpha}^t})\,|\,\bm{\bar{\alpha}^t})\Bigl(\frac{r\cdot B}{1+r\cdot B}\Bigr)+\\
&\sum_{j}\frac{\beta_{n_i}^t \Pr(R_{n_j}^t=f(\bm{\bar{\alpha}^t})|\bm{\bar{\alpha}^t})}{1+r\cdot B}\leq\frac{rU_{n_i}(\alpha_{n_i}^t,\bm{\bar{\alpha}^t_{-n_i}})}{1+r\cdot B}+\\
&\sum_{j}\frac{\beta_{n_M}^t \Pr(R_{n_j}^t=f(\bm{\bar{\alpha}^t})|\bm{\bar{\alpha}^t})}{1+r\cdot B}\\
&\leq\frac{r\beta_{n_M}^t\sum_{j}U_{n_j}(\alpha_{n_j}^t,\bm{\bar{\alpha}_{-n_j}^t})}{1+r\cdot B}+
\frac{\beta_{n_M}^t}{1+r\cdot B}=\beta_{n_M}^t 
\end{aligned}
\end{equation*}
Similarly, we also prove the following equation.
\begin{align}
&E[\beta_{n_i}^{t+1}|\bm{\bar{\alpha}^t}]=\Pr(R_{n_i}^t=f(\bm{\bar{\alpha}^t})|\bm{\bar{\alpha}^t})\Bigl(\frac{r\cdot B}{1+r\cdot B}\Bigr)+\\
&\sum_{j}\frac{\beta_{n_i}^t \Pr(R_{n_j}^t=f(\bm{\bar{\alpha}^t})|\bm{\bar{\alpha}^t})}{1+r\cdot B}\geq\frac{rU_{n_1}(\alpha_{n_1}^t,\bm{\bar{\alpha}_{-n_1}^t})}{1+r\cdot B}+\\
&\sum_{j}\frac{\beta_{n_1}^t\Pr(R_{n_j}^t=f(\bm{\bar{\alpha}^t})|\bm{\bar{\alpha}^t})}{1+r\cdot B}\\
&\geq\frac{r\beta_{n_1}^t\sum_{j}U_{n_j}(\alpha_{n_j}^t,\bm{\bar{\alpha}_{-n_j}^t})}{1+r\cdot B}+
\frac{\beta_{n_1}^t}{1+r\cdot B}=\beta_{n_1}^t \notag
\end{align}
Therefore, the following is satisfied:
$$\beta_{n_1}^t\leq E[\beta_{n_i}^{t+1}|\bm{\bar{\alpha}^t}]\leq \beta_{n_M}^t,\vspace{-1mm}$$ where two equal signs are true if all nodes have the same power fraction.
Because $E[\beta_{n_i}^{t+1}]=E[E[\beta_{n_i}^{t+1}|\bm{\bar{\alpha}^t}]],$ 
the below equation is satisfied: 
$$E[\beta_{n_1}^t]\leq E[\beta_{n_i}^{t+1}]\leq E[\beta_{n_M}^t].$$ 

By the above equation, $E[\beta_{n_1}^t]$ and $E[\beta_{n_M}^t]$ are increasing and decreasing functions of $t$, respectively, and converge according to the \textit{monotone convergence theorem}.
Moreover, if we assume that $\lim_{t\rightarrow\infty}E[\beta_{n_1}^t]=x<\lim_{t\rightarrow\infty}E[\beta_{n_M}^t]=y,$
$E[\beta_{n_1}^{t+1}|\beta_{n_1}^t=x]$ is greater than $x$ for any $t\geq 0$, and this is a contradiction because $E[\beta_{n_1}^{t+1}|\beta_{n_1}^t=x]$ should be $x$ for a large value of $t$. 
Thus, $x$ cannot be the limit, and $\lim_{t\rightarrow\infty}E[\beta_{n_1}^t]=\lim_{t\rightarrow\infty}E[\beta_{n_M}^t].$ 
In addition, because $\beta_{n_M}^t$ is always not less than $\beta_{n_1}^t$,
$$\lim_{t\rightarrow\infty}E[\beta_{n_1}^t]=\lim_{t\rightarrow\infty}E[\beta_{n_M}^t] \Leftrightarrow \lim_{t\rightarrow\infty}E[|\beta_{n_M}^t-\beta_{n_1}^t|]=0.$$
This fact implies that $\beta_{n_i}^t$ converges in mean to $\frac{1}{M}.$ 
Because convergence in mean implies convergence in probability, 
$$\lim_{t\rightarrow\infty}\Pr\Bigl[\frac{\beta_{n_M}^t}{\beta_{n_1}^t}=1\Bigr]=1.$$ 
As a result, Condition~\ref{con:density} is satisfied. 

On the contrary, if 
$$\frac{U_{n_i}(\alpha_{n_i}, \bm{\bar{\alpha}_{-n_i}})}{\alpha_{n_i}}>\frac{U_{n_j}(\alpha_{n_j}, \bm{\bar{\alpha}_{-n_j}})}{\alpha_{n_j}}$$ for any $\alpha_{n_i}>\alpha_{n_j},$ the following is met: $E[\beta_{n_M}^t]\leq E[\beta_{n_M}^{t+1}].$ 
As a result, $\lim_{t\rightarrow\infty}E[\beta_{n_M}^{t+1}]=1,$
and $\beta_{n_M}^{t+1}$ converges in probability to 1, where the case indicates extreme centralization.  
Lastly, when 
$$\frac{U_{n_i}(\alpha_{n_i}, \bm{\bar{\alpha}_{-n_i}})}{\alpha_{n_i}}=\frac{U_{n_j}(\alpha_{n_j}, \bm{\bar{\alpha}_{-n_j}})}{\alpha_{n_j}}$$ for any $\alpha_{n_i}>\alpha_{n_j},$ the following is satisfied:
$E[\beta_{n_i}^{t+1}]= E[\beta_{n_i}^t]=\beta_{n_i}^0.$
Therefore, if $\beta_{n_i}^t$ converges in mean to a value, the value would be $\beta_{n_i}^0.$ 
However, the fact that $\lim_{t\rightarrow\infty}E[\beta_{n_i}^t]=\beta_{n_i}^0$ does not imply 
$\lim_{t\rightarrow\infty} E[|\beta_{n_i}^t-\beta_{n_i}^0|]=0,$ and indeed the following would be met: $\lim_{t\rightarrow\infty} E[|\beta_{n_i}^t-\beta_{n_i}^0|]>0$.
As a result, $\beta_{n_i}^t$ does not converge in probability to $\beta_{n_i}^0$, which implies that there is no convergence in probability of $\beta_{n_i}^t$. 
These facts can be proven, similar to the above proof.

\section{Proof of Theorem~5.1}
\label{app:suff_nec}

In this section, we prove Theorem~\ref{thm:suff_nec}, and we introduce notations $\bm{\bar{EP}}=(EP_{p_i})_{p_i\in\mathcal{P}}$ and $\bm{\bar{EP}^t}=(EP_{p_i}^t)_{p_i\in\mathcal{P}^t}.$ 
In addition, we assume that there is a mechanism $\mathcal{M}$, which stochastically makes a system $(m,\varepsilon, \delta)$-decentralized. 
This mechanism $\mathcal{M}$ can be represented with two functions $\mathcal{M}_1^t$ and $\mathcal{M}_2^t$, which output the effective power distribution among players and resource power distribution among nodes after $t$ time from when entering $\mathcal{M}$, respectively. 
Formally, the two functions are presented as $\mathcal{M}^t_1: \Omega_{EP}\times \Omega_{\alpha} \mapsto \Omega_{EP}$ and $\mathcal{M}^t_2: \Omega_{EP}\times \Omega_{\alpha} \mapsto \Omega_{\alpha}$,
where 
$$\Omega_{EP}=\{(EP_{p_i})_{p_i\in\mathcal{P}}\,|\, EP_{p_i}\in\mathbb{R}^+\} \text{ and }$$
$$\Omega_{\alpha}=\{(\alpha_{n_i})_{n_i\in\mathcal{N}}\,|\, \alpha_{n_i}\in\mathbb{R}^+\}.$$
We also define $\Omega_{\alpha}(\bm{\bar{EP}})$ as follows:
$$\Omega_{\alpha}(\bm{\bar{EP}})=\left\{(\alpha_{n_i})_{n_i\in\mathcal{N}}\,\Big|\, \alpha_{n_i}\in\mathbb{R}^+, \sum_{n_i\in\mathcal{N}_{p_i}}\alpha_{n_i}=EP_{p_i}\right\}.$$
Moreover, note that, because a system has zero Sybil cost (i.e., $C=0$), the following equation is met: 
\begin{equation}
    U_{p_i}(\bm{\bar{EP}}, \bm{\bar{\alpha}})=\sum_{n_j\in\mathcal{N}_{p_i}^{0}}U_{n_i}(\bm{\bar{EP}^\prime}, \bm{\bar{\alpha}}) \quad\forall \bm{\bar{EP}}\not = \bm{\bar{EP}^\prime},
    \label{eq:mech}
\end{equation}
where $U_{p_i}$ indicates an utility of player $p_i$ and $\mathcal{N}_{p_i}^0$ indicates the set of nodes run by player $p_i$ at the state with the effective power distribution $\bm{\bar{EP}}$ and the resource power distribution $\bm{\bar{\alpha}}.$
In addition, we define $N(\bm{\bar{EP}^\star})$ as 
\begin{equation*}
\begin{aligned}
\bigcup_{\bm{\bar{EP}}\in\Omega_{EP}}\bigcap_{k=0}^{\infty}\Big\{\mathcal{M}_{c2}^t &\Big(\bm{\bar{EP}},f_{EP\rightarrow\alpha}(\bm{\bar{EP}})\Big)\,\Big|\,t>k, \\
&\mathcal{M}_{c1}^t\Big(\bm{\bar{EP}},f_{EP\rightarrow\alpha}(\bm{\bar{EP}})\Big)=\bm{\bar{EP}^\star}\Big\}, 
\end{aligned}
\end{equation*}
where the function $f_{EP\rightarrow\alpha}: \bm{\bar{EP}}\mapsto \bm{\bar{\alpha}}$ outputs the resource power distribution among nodes in which each player runs only one node (i.e., $f_{EP\rightarrow\alpha}(\bm{\bar{EP}})=(\alpha_{n_i})_{n_i\in\mathcal{N}}$ and $\alpha_{n_i}=EP_{p_i}$ for $\mathcal{N}_{p_i}=\{n_i\}$). 
Note that $f_{EP\rightarrow\alpha}(\bm{\bar{EP}})\in\Omega_{\alpha}(\bm{\bar{EP}}).$
In the definition of $N(\bm{\bar{EP}})$, $\mathcal{M}_{c1}^t(\bm{\bar{EP}},\bm{\bar{\alpha}})$ and $\mathcal{M}_{c2}^t(\bm{\bar{EP}},\bm{\bar{\alpha}})$ output an effective power distribution among players and a resource power distribution among nodes, respectively, and the outputs are the same as $\mathcal{M}_1^t(\bm{\bar{EP}},\bm{\bar{\alpha}})$ and $\mathcal{M}_2^t(\bm{\bar{EP}},\bm{\bar{\alpha}})$, respectively, under the assumption that \textit{a mechanism $\mathcal{M}$ does not change the resource power owned players} (note that the mechanism can change the effective power of players).

The set of all $(m,\varepsilon, \delta)$-decentralized distribution $\bm{\bar{EP}}$ is denoted by $\bm{S}.$ 
The probability to reach $(m,\varepsilon, \delta)$-decentralization is 
$$\lim_{t\rightarrow \infty}\Pr\left(\mathcal{M}_1^t(\bm{\bar{EP}^0},\bm{\bar{\alpha}}^0)\in \bm{S}\right).$$ 
Moreover, $I_{\bm{\bar{EP}_\delta}}$ denotes a parameter that shows whether the mechanism $\mathcal{M}$ can learn the information about $\bm{\bar{EP}_\delta}=(EP_a)_{a\geq \delta}$, where $I_{\bm{\bar{EP}_\delta}}=1$ (or 0) indicates that mechanism $\mathcal{M}$ gets (or does not get) the information about $\bm{\bar{EP}_\delta}$. 
In other words, when $I_{\bm{\bar{EP}_\delta}}=1$, a system can know the effective power distribution among players above the $\delta$-th percentile.

\begin{lemma}
$I_{\bm{\bar{EP}}_\delta}=1$ if and only if $N(\bm{\bar{EP}})\cap N(\bm{\bar{EP}^\prime})=\emptyset$ for any $\bm{\bar{EP}_\delta}\not = \bm{\bar{EP}^\prime_\delta}$, where $\bm{\bar{EP}_\delta}\subset\bm{\bar{EP}}$ and $\bm{\bar{EP}_\delta^\prime}\subset\bm{\bar{EP}^\prime}$.
\label{lem:ep_delta}
\end{lemma}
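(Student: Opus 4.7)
The plan is to reduce ``learnability of $\bm{\bar{EP}_\delta}$'' to a pure distinguishability question on observable trajectories. Since the Sybil cost function is zero, Eq.~\eqref{eq:mech} says the players' utilities (and hence their rational actions, and hence the dynamics $\mathcal{M}$ can induce) depend only on the aggregate resource-power distribution $\bm{\bar{\alpha}}$ among nodes, not on which player controls which node. Consequently the only information that $\mathcal{M}$ can ever react to is $\bm{\bar{\alpha}}$, and $I_{\bm{\bar{EP}_\delta}}=1$ is equivalent to the statement that the long-run $\bm{\bar{\alpha}}$ produced under $\mathcal{M}$ uniquely determines the top-$\delta$ truncation $\bm{\bar{EP}_\delta}$ of the hidden effective-power profile.

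For the $(\Leftarrow)$ direction I would assume $N(\bm{\bar{EP}})\cap N(\bm{\bar{EP}^\prime})=\emptyset$ whenever $\bm{\bar{EP}_\delta}\neq \bm{\bar{EP}^\prime_\delta}$, and define a decoding map $\Phi$ sending each $\bm{\bar{\alpha}}\in\bigcup_{\bm{\bar{EP}}}N(\bm{\bar{EP}})$ to the (by hypothesis unique) $\bm{\bar{EP}_\delta}$ such that $\bm{\bar{\alpha}}\in N(\bm{\bar{EP}})$ for some $\bm{\bar{EP}}$ with that truncation. Disjointness makes $\Phi$ well defined. Since the definition of $N(\bm{\bar{EP}^\star})$ collects exactly those $\bm{\bar{\alpha}}$ which recur at arbitrarily late times while the effective-power profile sits at $\bm{\bar{EP}^\star}$, every persistent observation made by $\mathcal{M}$ lies in the domain of $\Phi$; applying $\Phi$ then recovers $\bm{\bar{EP}_\delta}$, so $I_{\bm{\bar{EP}_\delta}}=1$.

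For the $(\Rightarrow)$ direction I would argue by contrapositive. Suppose some $\bm{\bar{\alpha}^\ast}\in N(\bm{\bar{EP}})\cap N(\bm{\bar{EP}^\prime})$ exists with $\bm{\bar{EP}_\delta}\neq \bm{\bar{EP}^\prime_\delta}$. Unfolding the union-over-initial-states and intersection-over-$k$ in the definition of $N$ yields two initial states and two unbounded sequences of times along which $\mathcal{M}$ visits $(\bm{\bar{EP}},\bm{\bar{\alpha}^\ast})$ in the first trajectory and $(\bm{\bar{EP}^\prime},\bm{\bar{\alpha}^\ast})$ in the second. At the matched observation $\bm{\bar{\alpha}^\ast}$ the two trajectories are indistinguishable to $\mathcal{M}$, yet they carry different $\bm{\bar{EP}_\delta}$, so $\mathcal{M}$ cannot learn $\bm{\bar{EP}_\delta}$ and $I_{\bm{\bar{EP}_\delta}}=0$.

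The main obstacle is formalizing the informal step ``the only thing $\mathcal{M}$ can observe is $\bm{\bar{\alpha}}$''. I would justify this by combining Eq.~\eqref{eq:mech} (which collapses per-player information into the $\bm{\bar{\alpha}}$ marginal at the level of utilities) with the rationality assumption of the system model: any hypothetical mechanism that branched on hidden per-player bookkeeping could be simulated by one that branches only on $\bm{\bar{\alpha}}$, since the induced dynamics agree. A secondary subtlety is that the two trajectories in the necessity direction need not be synchronized in time; what matters is only that $\bm{\bar{\alpha}^\ast}$ is observed at arbitrarily late times along each trajectory separately, which is exactly what the $\bigcap_{k=0}^{\infty}$ in the definition of $N$ supplies.
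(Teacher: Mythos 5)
Your proposal is correct and follows essentially the same route as the paper: both directions hinge on Eq.~\eqref{eq:mech} forcing the mechanism's observable behavior to depend only on $\bm{\bar{\alpha}}$, so that disjointness of the sets $N(\cdot)$ is exactly the condition under which the observed $\bm{\bar{\alpha}}$ decodes to a unique $\bm{\bar{EP}_\delta}$ (your map $\Phi$), and a nonempty intersection yields two indistinguishable states with different truncations. The paper's own proof is terser (it phrases the necessity direction as a direct contradiction with Eq.~\eqref{eq:mech} rather than your contrapositive), but the underlying argument is identical.
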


\begin{proof}
If $I_{\bm{\bar{EP}_\delta}}=1$, there is an incentive system such that, for any $\bm{\bar{EP}}$ and $\bm{\bar{EP}^\prime}$, which have $\bm{\bar{EP}_\delta}$ and $\bm{\bar{EP}^\prime_\delta}\,(\not=\bm{\bar{EP}_\delta})$, respectively,
$$U_{p_i}(\bm{\bar{EP}}, \bm{\bar{\alpha}})\not=\sum_{n_j\in\mathcal{N}_{p_i}^{0}}U_{n_j}(\bm{\bar{EP}^\prime}, \bm{\bar{\alpha}}) \quad \forall \bm{\bar{\alpha}}\in N(\bm{\bar{EP}})\cap N(\bm{\bar{EP}^\prime}).$$
However, the above equation contradicts Eq.~\eqref{eq:mech}, and thus, $N(\bm{\bar{EP}})\cap N(\bm{\bar{EP}^\prime})$ for $\bm{\bar{EP}_\delta}\not=\bm{\bar{EP}^\prime_\delta}$ should be the empty set. 
In addition, if $N(\bm{\bar{EP}})\cap N(\bm{\bar{EP}^\prime})=\emptyset$, a system can determine the
effective power distribution among players above the $\delta$-th percentile. 
Therefore, $I_{\bm{\bar{EP}_\delta}}=1$ if and only if $N(\bm{\bar{EP}})\cap N(\bm{\bar{EP}^\prime})=\emptyset$ for any $\bm{\bar{EP}_\delta}\not = \bm{\bar{EP}^\prime_\delta}$.
\end{proof}

\begin{lemma}
$N(\bm{\bar{EP}})\cap N(\bm{\bar{EP}^\prime})=\emptyset$ for any $\bm{\bar{EP}_\delta}\not = \bm{\bar{EP}^\prime_\delta}$ if and only if, for any effective power distribution $\bm{\bar{EP}^\star}$, $N(\bm{\bar{EP}^\star})=\emptyset$ or it is not more profitable for any player with effective power $EP_{p_i}^\star\geq EP^\star_{\delta}$ to run multiple nodes than to run only one node.
\label{lem:empty}
\end{lemma}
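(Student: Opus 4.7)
The plan is to establish the biconditional by handling each direction with a short structural argument that leverages the zero-Sybil-cost hypothesis $C=0$ and player rationality from Section~\ref{sec:model}. Intuitively, the disjointness condition $N(\bm{\bar{EP}})\cap N(\bm{\bar{EP}^\prime})=\emptyset$ says that any observed $\bm{\bar{\alpha}}$ pins down the rich effective-power vector $\bm{\bar{EP}_\delta}$, and this identifiability is exactly what breaks whenever a rich player bundles her effective power across several nodes.

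\medskip

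\noindent\textbf{$(\Leftarrow)$.} I assume the right-hand side and consider any $\bm{\bar{\alpha}}\in N(\bm{\bar{EP}})\cap N(\bm{\bar{EP}^\prime})$. Both $N(\bm{\bar{EP}})$ and $N(\bm{\bar{EP}^\prime})$ are nonempty, so by hypothesis every player above the $\delta$-th percentile runs exactly one node in either state. Thus for each rich player $p_i$ the identity $EP_{p_i}=\alpha_{n_i}$ holds at her unique node $n_i$, and because a poor player's effective power (hence each of her nodes' resource power) lies strictly below $EP_\delta$, the multiset $\bm{\bar{EP}_\delta}$ is recoverable from $\bm{\bar{\alpha}}$ simply by reading off its largest entries. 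The same reading applies to $\bm{\bar{EP}^\prime_\delta}$, giving $\bm{\bar{EP}_\delta}=\bm{\bar{EP}^\prime_\delta}$, which contradicts $\bm{\bar{EP}_\delta}\not=\bm{\bar{EP}^\prime_\delta}$.

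\medskip

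\noindent\textbf{$(\Rightarrow)$.} I prove the contrapositive. Suppose some $\bm{\bar{EP}^\star}$ has $N(\bm{\bar{EP}^\star})\not=\emptyset$ and some player $p$ with $EP_p^\star\geq EP^\star_\delta$ strictly prefers running $k>1$ nodes to running a single one. Rationality forces $p$ to split her effective power into nodes $n_1,\ldots,n_k$ with $\alpha_{n_j}>0$ and $\sum_j\alpha_{n_j}=EP_p^\star$; pick any $\bm{\bar{\alpha}}\in N(\bm{\bar{EP}^\star})$ realising that split. Form $\bm{\bar{EP}^\prime}$ by deleting $p$'s entry $EP_p^\star$ and inserting $k$ fresh players each of effective power $\alpha_{n_j}$, each running one of the $n_j$. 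Since $C=0$, Eq.~\eqref{eq:mech} makes per-node utilities depend only on $\bm{\bar{\alpha}}$ rather than on the player labelling, so the mechanism's trajectory from $f_{EP\rightarrow\alpha}(\bm{\bar{EP}^\prime})$ matches the one from $f_{EP\rightarrow\alpha}(\bm{\bar{EP}^\star})$; in particular $\bm{\bar{\alpha}}\in N(\bm{\bar{EP}^\prime})$. Yet $EP_p^\star\in\bm{\bar{EP}^\star_\delta}$ while $EP_p^\star\notin\bm{\bar{EP}^\prime_\delta}$ (each $\alpha_{n_j}<EP_p^\star$ as $k\geq 2$), so $\bm{\bar{EP}^\star_\delta}\not=\bm{\bar{EP}^\prime_\delta}$, violating disjointness.

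\medskip

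\noindent\textbf{Main obstacle.} The delicate step is justifying $\bm{\bar{\alpha}}\in N(\bm{\bar{EP}^\prime})$: one must show that the mechanism, started from $f_{EP\rightarrow\alpha}(\bm{\bar{EP}^\prime})$, persistently revisits $(\bm{\bar{EP}^\prime},\bm{\bar{\alpha}})$ in the same way it revisits $(\bm{\bar{EP}^\star},\bm{\bar{\alpha}})$ from $f_{EP\rightarrow\alpha}(\bm{\bar{EP}^\star})$. This rests on Eq.~\eqref{eq:mech} together with $C=0$, which together make per-node utilities (and thus the entire stochastic dynamics) a function of $\bm{\bar{\alpha}}$ alone, so the two trajectories are dynamically indistinguishable despite their different player partitions. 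Pinning down this invariance cleanly is the crux of the argument; once established, the rest follows from the elementary fact that every replacement $\alpha_{n_j}$ is strictly less than $EP_p^\star$ whenever $k\geq 2$.
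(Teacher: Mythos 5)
Your proof is correct and follows essentially the same route as the paper: the backward direction is the identifiability argument the paper calls "easy," and the forward direction is proved by contraposition, constructing a second effective-power distribution by splitting the multi-node rich player into separate one-node players and invoking $C=0$ (Eq.~\eqref{eq:mech}) to argue that the same recurrent $\bm{\bar{\alpha}}$ lies in both $N$-sets, so disjointness fails. The only notable difference is that the paper splits \emph{every} node into its own player via $f_{\alpha\rightarrow EP}(\bm{\bar{\alpha}^\star})$ rather than only player $p$'s nodes, which makes the crux step you flag ($\bm{\bar{\alpha}}\in N(\bm{\bar{EP}^\prime})$) slightly cleaner, since then $f_{EP\rightarrow\alpha}(\bm{\bar{EP}^\prime})=\bm{\bar{\alpha}^\star}$ exactly, whereas with your partial split the one-node-per-player initialization of $\bm{\bar{EP}^\prime}$ need not reproduce $\bm{\bar{\alpha}}$ when other players also run multiple nodes or delegate.
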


\begin{proof}
It is easy to prove $N(\bm{\bar{EP}})\cap N(\bm{\bar{EP}^\prime})=\emptyset$ for any $\bm{\bar{EP}_\delta}\not = \bm{\bar{EP}^\prime_\delta}$, when it is most profitable for players to collude or when a player with effective power $EP_{p_i}\geq EP_{\delta}$ runs one node. 
Therefore, we describe the proof of the other direction. 
To do this, we assume that a player with effective power greater than or equal to $EP^{\star}_\delta$ runs multiple nodes in the state with effective power distribution $\bm{\bar{EP}^\star}$ and so the state has the resource power distribution $\bm{\bar{\alpha}^\star}$ (i.e., $\bm{\bar{\alpha}^\star}\in N(\bm{\bar{EP}^\star})$).
Here, we define a function $f_{\alpha\rightarrow EP}: \bm{\bar{\alpha}}\mapsto \bm{\bar{EP}}$ as $f_{\alpha\rightarrow EP}(\bm{\bar{\alpha}})=(EP_{p_i})_{p_i\in\mathcal{P}}$, where the output represents a state in which each player runs only one node and $EP_{p_i}=\alpha_{n_i}$. 
Then $\bm{\bar{\alpha}^\star}$ belongs to the set $N(f_{\alpha\rightarrow EP}(\bm{\bar{\alpha}^\star}))$. 
This is certainly true when it is not more profitable for some players to delegate their resource to others or run more than one node in the state with $f_{\alpha\rightarrow EP}(\bm{\bar{\alpha}^\star}).$
Even if it is more profitable for some players to run more than one node in the state with $f_{\alpha\rightarrow EP}(\bm{\bar{\alpha}^\star}),$ the state can come back to itself after going through a process where a player runs multiple nodes and then delegates its resource power to others because 
$\bm{\bar{\alpha}^\star}\in N(\bm{\bar{EP}^\star}).$
Lastly, if it is more profitable for some players to delegate their resource power to others, the state can also come back to itself after a player delegates its resource power to others. 
As a result, $\bm{\bar{\alpha}^\star} \in N(f_{\alpha\rightarrow EP}(\bm{\bar{\alpha}^\star}))$ and $N(\bm{\bar{EP}^\star})\cap N(f_{\alpha\rightarrow EP}(\bm{\bar{\alpha}^\star}))\not =\emptyset$.
This fact implies that $N(\bm{\bar{EP}})\cap N(\bm{\bar{EP}^\prime})=\emptyset$ for any $\bm{\bar{EP}_\delta}\not = \bm{\bar{EP}^\prime_\delta}$ if and only if, for any $\bm{\bar{EP}}$, $N(\bm{\bar{EP}})=\emptyset$ or players above the $\delta$-th percentile should run only one node.
Note that, in order to satisfy $N(\bm{\bar{EP}})=\emptyset$, it should be more profitable for some players to delegate their resource to others in the state $\bm{\bar{EP}}.$ 
\end{proof}

In Lemma~\ref{lem:empty}, the fact that $N(\bm{\bar{EP}})$ is the empty set represents that a coalition for some players is more profitable at the state $\bm{\bar{EP}}.$

When a system can find out whether $\frac{EP_{\tt max}}{EP_{\delta}}\leq 1+\varepsilon$ for the current state and get ${EP}_{\tt max}$ if the ratio is greater than $1+\varepsilon,$
the probability to reach $(m,\varepsilon, \delta)$-decentralization would be certainly greater than that for when it is not. 
This is because if $\frac{{EP}_{\tt max}}{{EP}_{\delta}}$ is greater than $1+\varepsilon,$ the mechanism $\mathcal{M}$, which makes $\bm{\bar{EP}}$ belong to $\bm{S}$, should adjust $\frac{{EP}_{\tt max}}{{EP}_{\mu}}$\footnote{To get a fraction $\frac{{EP}_{\tt max}}{{EP}_\mu}$, the system should get ${EP}_{\tt max}$ and ${EP}_\mu.$} for some $\mu\geq \gamma$.
Also, if the system adjusts $\frac{{EP}_{\tt max}}{{EP}_{\mu}}$ while not knowing the value of $\frac{{EP}_{\tt max}}{{EP}_{\mu}}$, the state cannot move in the best direction to $(m,\varepsilon, \delta)$-decentralization. 
As a result, the following is met:
\begin{align}
&\max_{\mathcal{M}}\lim_{t\rightarrow \infty}\Pr(\mathcal{M}^t_1(\bm{\bar{EP}^0}, \bm{\bar{\alpha}^0})\in\bm{S}\,|\, I_{\bm{\mathcal{S}}}=0 \text{ or }\label{eq:max1}\\ 
&I^{\bm{\mathcal{S}^c}}_{\bm{\bar{EP}_{100}}}=0)\leq \max_{\mathcal{M}}\lim_{t\rightarrow\infty}\Pr(\mathcal{M}^t_1(\bm{\bar{EP}^0}, \bm{\bar{\alpha}^0})\in\bm{S}\,|\\
&\hspace{4.5cm} I_{\bm{\mathcal{S}}}=1, I^{\bm{\mathcal{S}^c}}_{\bm{\bar{EP}_{100}}}=1)=\notag\\
&\max_{\mathcal{M}}\lim_{t\rightarrow\infty}\Pr(\mathcal{M}^t_1(\bm{\bar{EP}^0}, \bm{\bar{\alpha}^0})\in\bm{S}\,|\,N(\bm{\mathcal{S}})\cap N(\bm{\mathcal{S}^c})=\emptyset \label{eq:max2}\\
& \text{ and } N(\bm{\bar{EP}})\cap N(\bm{\bar{EP}^\prime})=\emptyset \text{ for any } {EP}_{\tt max}\not = EP^\prime_{\tt max}),\notag
\end{align}
where $I_{\bm{\mathcal{S}}}=1$ (or 0) indicates that a system can (or cannot) learn the information about whether the current state is in $\bm{\mathcal{S}}$, and
$I^{\bm{\mathcal{S}^c}}_{\bm{\bar{EP}_{100}}}=1$ (or 0) indicates that a system can (or cannot) learn effective power of the richest when the current state is not in $\bm{\mathcal{S}}.$ 
Note that Eq.~\eqref{eq:max2} is derived by Lemma~\ref{lem:empty}.
Considering Lemma~\ref{lem:ep_delta} and \ref{lem:empty}, one can see that a mechanism satisfying 1) it is most profitable for all players to collude or for the richest to run only one node in a state that does not belong to $\bm{\mathcal{S}}$ and 2) $N(\bm{\mathcal{S}})\cap N(\bm{\mathcal{S}^c})=\emptyset$, can maximize the probability to achieve $(m,\varepsilon,\delta)$-decentralization. 
Moreover, $N(\bm{\mathcal{S}})\cap N(\bm{\mathcal{S}^c})=\emptyset$ implies that $N(\bm{\bar{EP}})=\emptyset$ or 
$f_{\alpha\rightarrow EP}(N(\bm{\bar{EP}}))\subset \bm{\mathcal{S}}$ for any $\bm{\bar{EP}}\in\bm{\mathcal{S}}.$

Next, we consider a mechanism where, for a state $\bm{\bar{EP}}$, it is most profitable for all players to form a grand coalition running only one node. 
Then all players would share reward $R=U_{n_i}(\bm{\bar{EP}})$. 
Here, we consider a scheme sharing the reward among joined accounts, and a player can have multiple accounts if the behavior is more profitable than that the one that is not. 
We also denote by $U_{a_i}(\alpha_{a_i}, \bm{\bar{\alpha}_{-a_i}})$ the received reward of account $a_i$ owned resource power $\alpha_{a_i}$.
Similar to the above progress, we can show that, in this case, the probability to reach $(m,\varepsilon,\delta)$-decentralization can be maximized when players above the $\delta$-th percentile should have one account. 
Note that when $A$ denotes the set of all accounts, $R=\sum_{a_i}U_{a_i\in A}(\alpha_{a_i}, \bm{\bar{\alpha}_{-a_i}})$ for any $A.$
Therefore, the conditions to maximize the probability to reach $(m,\varepsilon,\delta)$-decentralization in the sharing scheme correspond to the following: At least the richest player runs only one node, and ND-2 is satisfied. 
As a result, by Lemma~\ref{lem:m2}, we can derive that the probability to reach $(m, \varepsilon,\delta)$-decentralization is the maximum when the following is met: 
\begin{equation}
U_{a_i}(\alpha_{a_i}, \bm{\bar{\alpha}_{-a_i}})=\frac{R\cdot \alpha_{a_i}}{\sum_{a_i\in A}\alpha_{a_i}}.
\label{eq:acc_linear}
\end{equation}

Second, we consider a mechanism in which it is not most profitable for all players to collude and it is most profitable for the richest player to run only one node when the state is not in $\bm{\mathcal{S}}.$ 
In fact, this is equivalent to the case where GR-2 and ND-2 and NS-100 are satisfied. 
Therefore, from Lemma~\ref{lem:m2}, $U_{n_i}$ should be Eq.~\eqref{eq:linear} when the state is not in $\bm{\mathcal{S}}$. 

As a result, because Eq.~\eqref{eq:acc_linear} is also a form of Eq.~\eqref{eq:linear}, we can see that, through Lemma~\ref{lem:linear}, the probability to reach $(m,\varepsilon,\delta)$-decentralization can be maximized when GR-$|\mathcal{N}|$, ND-$|\mathcal{P}_\alpha|$, and NS-0 are met. 
Lastly, by presenting Lemma~\ref{lem:m2}, we completes the proof of Theorem~\ref{thm:suff_nec}. 

\begin{lemma}
Let us consider that GR-2, ND-2, and NS-100 are met. 
Then, in order that the probability of reaching $(m, \varepsilon, \delta)$-decentralization is the maximum, the following should be met:
\begin{equation}
U_{n_i}(\alpha_{n_i}, \bm{\bar{\alpha}_{-n_i}})=F\Bigl(\sum_{n_j\in\mathcal{N}}\alpha_{n_j}\Bigr)\cdot\alpha_{n_i}, \label{eq:linear2}
\end{equation}
where $F: \mathbb{R}^+\mapsto\mathbb{R}^+.$
\label{lem:m2}
\end{lemma}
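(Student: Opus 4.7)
The plan is to use maximality of the probability to effectively upgrade the three weak hypotheses (GR-2, ND-2, NS-100) into the strong versions (GR-$|\mathcal{N}|$, ND-$|\mathcal{P}_\alpha|$, NS-0) that appear as premises of Lemma~\ref{lem:linear}, and then invoke that lemma to extract the linear form Eq.~\eqref{eq:linear2}. The linear candidate clearly satisfies the weak hypotheses (since it satisfies the strong ones by Lemma~\ref{lem:linear}), so the remaining task is to rule out every non-linear candidate by showing it is strictly dominated in probability.

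First I would promote GR-2 to GR-$|\mathcal{N}|$. If some node $n_i$ with $\alpha_{n_i}>0$ had $U_{n_i}\leq 0$, its operator would rationally withdraw from the consensus protocol (the opportunity cost argument following Condition~\ref{con:resource}), strictly decreasing $|\mathcal{P}|$ in every realization where that node would otherwise be active. This strictly lowers the probability of the event $|\mathcal{P}|\geq m$ on boundary states, contradicting maximality; hence the maximizer must assign positive utility to every node with positive resources.

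Next I would promote ND-2 to ND-$|\mathcal{P}_\alpha|$ and NS-100 to NS-0. The hypothesis ND-2 only rules out coalitions that would push $|\mathcal{P}|$ below $2$, so if some coalition of $k\geq 3$ distinct players were strictly profitable, rational agents would form it, again strictly reducing $|\mathcal{P}|$ and the chance of reaching $|\mathcal{P}|\geq m$. Similarly, since $C=0$ makes Sybil-splitting costless, any player above the $\delta$-th but below the $100$-th percentile that finds a strictly profitable split would execute it; the split conserves effective power but skews the resource distribution across nodes, so $EP_{\tt max}/EP_\delta$ becomes stochastically larger in the induced dynamics. Both deviations strictly lower the probability of reaching $(m,\varepsilon,\delta)$-decentralization, so maximality forces equality in Eq.~\eqref{eq:coll} and Eq.~\eqref{eq:sybil} at every coalition size and every percentile, yielding ND-$|\mathcal{P}_\alpha|$ and NS-0. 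Lemma~\ref{lem:linear} then applies and gives the linear form Eq.~\eqref{eq:linear2}.

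The main obstacle is making the ``strictly lowers the probability'' steps watertight: a priori, a small local deviation (a tiny Sybil split or a three-player coalition) could conceivably be compensated by more favorable downstream dynamics. The argument I would use is that, because $C=0$ and reinvestment acts node-by-node through $R_{n_i}$ as described in Section~\ref{sec:model}, the Markov chain on resource vectors under any strictly non-linear $U$ is first-order stochastically dominated, jointly in $|\mathcal{P}|$ and in $EP_{\tt max}/EP_\delta$, by the chain under the linear $U$ started from the same initial state; hence the strict gap accumulates rather than cancels, and the non-linear candidate is strictly suboptimal.
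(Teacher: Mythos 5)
Your overall strategy---promote the weak hypotheses to GR-$|\mathcal{N}|$, ND-$|\mathcal{P}_\alpha|$, and NS-$0$ by a maximality argument and then invoke Lemma~\ref{lem:linear}---is not the paper's route, and the promotion step is exactly where the argument has a genuine gap. The paper never upgrades the conditions. It instead observes that ND-$2$ and NS-$100$, weak as they are, both constrain the \emph{same} configuration, namely a partition of the \emph{entire} resource power among all nodes versus a single node holding all of it: ND-$2$ gives $\sum_{n_i\in\mathcal{N}_\alpha}U_{n_i}\geq U_{n_j}(\alpha)$ (the grand coalition is not profitable), while NS-$100$ with $C=0$ gives the reverse inequality (the richest player, who here owns everything, prefers one node), and since utilities depend only on the resource-power vector and not on ownership these combine into the additivity $\sum_{n_i\in\mathcal{N}_\alpha}U_{n_i}=U_{n_j}(\alpha)$ for \emph{every} partition. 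Symmetry then fixes the equal-split values $U_{n_i}(\tfrac{\alpha}{n},[\tfrac{\alpha}{n}]^{n-1})=\tfrac{1}{n}U_{n_i}(\alpha)$, and the \emph{only} place maximality of the probability enters is to resolve the residual one-sided inequality $U_{n_i}\bigl(\tfrac{(n-l)\alpha}{n},(\tfrac{l\alpha}{n})\bigr)\geq\tfrac{n-l}{n}\,U_{n_i}(\alpha)$ for unequal splits, via the local principle that a lower payoff to the richest node can only help decentralization; induction on $|\mathcal{P}|$ and density of the rationals then give Eq.~\eqref{eq:linear2}. Your proposal never extracts this additivity, so the entire burden falls on the global maximality argument.

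That burden is not discharged. First, your mechanism for ruling out Sybil splits is not correct: a split conserves effective power, so $EP_{\tt max}/EP_\delta$ does not become ``stochastically larger'' when a player splits; in the paper the harm from Sybil nodes is informational (Lemmas~\ref{lem:ep_delta} and \ref{lem:empty}: the system can no longer identify $\bm{\bar{EP}_\delta}$ and hence cannot steer optimally), which is a different and far more delicate argument belonging to the surrounding proof of Theorem~\ref{thm:suff_nec}, not to this lemma. Second, ``maximality forces equality in Eq.~\eqref{eq:coll} and Eq.~\eqref{eq:sybil}'' conflates equality in those inequalities with what ND-$|\mathcal{P}_\alpha|$ and NS-$0$ actually assert, namely that the one-sided inequalities hold at every coalition size and every percentile. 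Third, the closing claim of first-order stochastic dominance of the induced chain, jointly in $|\mathcal{P}|$ and $EP_{\tt max}/EP_\delta$, under any non-linear $U$ by the linear one is precisely the hard content; you flag it yourself as the obstacle, and nothing in the proposal establishes it. Finally, since Lemma~\ref{lem:m2} is an ingredient of the paper's proof of Theorem~\ref{thm:suff_nec}, your promotion step in effect re-derives that theorem's conclusion inside the lemma, which is both harder than what is needed and risks circularity in the overall architecture.
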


\begin{proof}
According to ND-2 and NS-100, the following equation is satisfied for any $\alpha$ and set $\mathcal{N}_{\alpha}$ in which a node is an element and the total resource power of the elements is $\alpha$:
\begin{equation}
    \sum_{n_i\in\mathcal{N}_{\alpha}} U_{n_i}\Bigl(\alpha_{n_i},\bm{\bar{\alpha}_{-n_i}}(\mathcal{N}_{\alpha})\Bigr)
    = U_{n_j}(\alpha_{n_j}=\alpha),
    \label{eq:one}
\end{equation}
where node $n_j\in\mathcal{N}_{\alpha}$ and
$\bm{\bar{\alpha}_{-n_i}}(\mathcal{N}_{\alpha})=(\alpha_{n_k})_{n_k\in\mathcal{N}_{\alpha}, k\not=i}.$
Therefore, for all $n\in\mathbb{N},$ the following is met:  
\begin{equation}
    U_{n_i}\Bigl(\frac{\alpha}{n},\,\,\left[\frac{\alpha}{n}\right]^{n-1}\Bigr)
    = \frac{U_{n_i}(\alpha)}{n},
    \label{eq:n_linear}
\end{equation}
where $\left[\frac{\alpha}{n}\right]^{n-1}$ represents the array, which has $n-1$ elements $\frac{\alpha}{n}$. 
Note that $\left[\frac{\alpha}{n}\right]^n$ is one of possible candidates for $\mathcal{N}_\alpha$ because the sum of elements is $\alpha$. 

Moreover, according to Eq.~\eqref{eq:one} and Eq.~\eqref{eq:sybil} in NS-100, the following equations are met for any natural number $l<\frac{n}{2}$: 
\begin{equation*}
\begin{aligned}
    &U_{n_i}\Bigl(\frac{l\alpha}{n},\,\,\Bigl(\frac{(n-l)\alpha}{n}\Bigr)\Bigr)+U_{n_i}\Bigl(\frac{(n-l)\alpha}{n},\,\,\left(\frac{l\alpha}{n}\right)\Bigr)
    = U_{n_i}(\alpha),\\
    &U_{n_i}\Bigl(\frac{(n-l)\alpha}{n},\,\,\left(\frac{l\alpha}{n}\right)\Bigr)\geq (n-l)\cdot U_{n_i}\Bigl(\frac{\alpha}{n},\,\,\left[\frac{\alpha}{n}\right]^{n-1}\Bigr)
\end{aligned}
\end{equation*}
Because the lower the payoff of the richest, the more likely a system would reach $(m,\varepsilon,\delta)$-decentralization, the below equations should be met to maximize the probability to reach $(m,\varepsilon,\delta)$-decentralization. 
\begin{equation*}
\begin{aligned}
    &U_{n_i}\Bigl(\frac{(n-l)\alpha}{n},\,\,\left(\frac{l\alpha}{n}\right)\Bigr)= \frac{n-l}{n}\cdot U_{n_i}(\alpha),\\
    &U_{n_i}\Bigl(\frac{l\alpha}{n},\,\,\Bigl(\frac{(n-l)\alpha}{n}\Bigr)\Bigr)=\frac{l\cdot U_{n_i}(\alpha)}{n}
\end{aligned}
\end{equation*}
This fact implies that Eq.~\eqref{eq:linear2} is satisfied for any $\mathcal{P}$ of which size is two.

Next, we assume that Eq.~\eqref{eq:linear2} is satisfied for any $\mathcal{P}$ of which size is $k(<n)$. 
Then we show that 
$$U_{n_i}\Bigl(\frac{l_0\alpha}{n},\,\,\left(\frac{l_1\alpha}{n},\cdots, \frac{l_{k}\alpha}{n}\right)\Bigr)= \frac{l_0}{n}\cdot U_{n_i}(\alpha),$$
where $l_0, l_1, \cdots, l_k\in\mathbb{N}$ and $l_0=\max\{l_0, l_1, \cdots, l_k\}$. 
According to Eq.~\eqref{eq:sybil} and the assumption, the following is met for any $0<p \leq k$: 
\begin{equation*}
\begin{aligned}
    &\frac{l_0+l_p}{n}\cdot U_{n_i}(\alpha)=U_{n_i}\left(\frac{l_0\alpha}{n},\,\,\left(\frac{l_1\alpha}{n},\cdots, \frac{l_{k}\alpha}{n}\right)\right)+\\
    &U_{n_i}\left(\frac{l_p\alpha}{n},\,\,\left(\frac{l_0\alpha}{n},\cdots,\frac{l_{p-1}\alpha}{n},\frac{l_{p+1}\alpha}{n},\cdots,\frac{l_{k}\alpha}{n}\right)\right).
\end{aligned}
\end{equation*}
Moreover, the above equation derives the following. 
\begin{equation*}
\begin{aligned}
    k\cdot U_{n_i}&\left(\frac{l_0\alpha}{n},\,\,\left(\frac{l_1\alpha}{n},\cdots, \frac{l_{k}\alpha}{n}\right)\right)+\sum_{p=1}^{k}U_{n_i}\left(\frac{l_p\alpha}{n},\,\,\ast\right)
    \\&=\sum_{p=1}^{k}\frac{l_0+l_p}{n}\cdot U_{n_i}(\alpha),
\end{aligned}
\end{equation*}
where $\ast=\left(\frac{l_1\alpha}{n},\cdots, \frac{l_{k}\alpha}{n}\right).$
In addition, because 
$$\sum_{p=1}^{k}U_{n_i}\left(\frac{l_p\alpha}{n},\,\,\ast\right)=\sum_{p=1}^{k}\frac{l_p}{n}\cdot U_{n_i}(\alpha),$$
Eq.~\eqref{eq:linear2} is met for any $\mathcal{P}$ of which size is $k+1$. 
By mathematical induction, Eq.~\eqref{eq:linear2} holds for any $n$ and $k(<n)$, which implies that Eq.~\eqref{eq:linear2} is true when relative resource power of all nodes to total resource power is a rational number. 
As a result, by \textit{the density of the rational numbers}, Eq.~\eqref{eq:linear2} holds for any $\bm{\bar{\alpha}}.$ 
This completes the proof. 
\end{proof}

\section{Proof of Lemma~5.2}
\label{app:1}

The proof of Lemma~\ref{lem:linear} is similar to that for Lemma~\ref{lem:m2}. Thus, we briefly describe this proof. 
First, it is trivial for Eq.~\eqref{eq:linear} to satisfy GR-$|\mathcal{N}|$, ND-$|\mathcal{P}|$, and NS-$0$. 
Thus, we show the proof of the other direction. 
In other words, we prove that if the three conditions are met, the utility function should be Eq,~\eqref{eq:linear}. 
According to ND-$|\mathcal{P}|$ and NS-$0$, the following equation is satisfied for any $\alpha$:
\begin{equation*}
    \sum_{n_i\in\mathcal{N}_{\alpha}} U_{n_i}\Bigl(\alpha_{n_i},\bm{\alpha_{-n_i}^{+}}(\mathcal{N}_{\alpha})\Bigr)
    = U_{n_j}(\alpha_{n_j}=\alpha,\bm{\bar{\alpha}_{-\mathcal{N}_{\alpha}}}),
\end{equation*}
where node $n_j\in\mathcal{N}_{\alpha},$ the total resource power in the node set $\mathcal{N}_{\alpha}$ is $\alpha$, $\bm{\bar{\alpha}_{-\mathcal{N}_{\alpha}}}=(\alpha_{n_k})_{n_k\not\in\mathcal{N}_{\alpha}},$ and $\bm{\alpha_{-n_i}^{+}}(\mathcal{N}_{\alpha})=\bm{\bar{\alpha}_{-\mathcal{N}_{\alpha}}}\Vert (\alpha_{n_k})_{n_k\in \mathcal{N}_{\alpha}, n_k\not=n_i}.$
Therefore, for all $n\in\mathbb{N},$ the following is met:  
\begin{equation*}
    U_{n_i}\Bigl(\frac{\alpha}{n},\bm{\alpha_{-n_i}^{+}}(\mathcal{N}^{n_{\alpha}})\Bigr)
    = \frac{U_{n_j}(\alpha,\bm{\bar{\alpha}_{-\mathcal{N}^{n_{\alpha}}}})}{n},
\end{equation*}
where all nodes in $\mathcal{N}^n_\alpha$ possess $\frac{\alpha}{n}$ and $|\mathcal{N}^n_\alpha|=n$. 
Note that $\mathcal{N}^n_\alpha$ is one of possible candidates for $\mathcal{N}_\alpha$. 
The above equation derives the below equation: 
\begin{equation*}
    U_{n_i}\Bigl(\alpha_{n_i},\bm{\alpha_{-n_i}^{+}}(\mathcal{N}^\mathbb{Q}_{\alpha})\Bigr)
    = \frac{\alpha_{n_i}}{\alpha}\cdot U_{n_j}(\alpha,\bm{\bar{\alpha}_{-\mathcal{N}^{\mathbb{Q}_{\alpha}}}}),
\end{equation*}
where $\mathcal{N}^{\mathbb{Q}}_\alpha=\{n_i\,|\,\alpha_{n_i}=q_i\alpha, q_i\in\mathbb{Q}\}$ and node $n_j\in\mathcal{N}^{\mathbb{Q}}_\alpha$. Here, note that $\frac{\alpha_{n_i}}{\alpha}$ is a rational number.  
As a result, according to \textit{the density of the rational numbers}, 
the utility $U_{n_i}$ is a linear function for given the sum of resource power of nodes (i.e., $\sum_{n_i\in\mathcal{N}}\alpha_{n_i}$), where the coefficient is denoted by $F(\sum_{n_i\in\mathcal{N}}\alpha_{n_i})$ as a function of $\sum_{n_i\in\mathcal{N}}\alpha_{n_i}.$ 
Lastly, the coefficient $F(\sum_{n_i\in\mathcal{N}}\alpha_{n_i})$ should be positive to satisfy GR-$|\mathcal{N}|$.

\section{Proof of Theorem~5.3}
\label{app:3}

First, we consider that there is the minimum value of $\varepsilon(> 0)$ such that $\max_{x\leq A}xF(x)=(A-\varepsilon)F(A-\varepsilon)$ for a given value of $A$. 
Then, when $\sum\alpha_{n_i}$ is $A$, 
\begin{equation}
    \begin{aligned}
        &U\left(\alpha_{n_k}\cdot\frac{A-\varepsilon}{A},\bm{\bar{\alpha}_{-n_k}}\cdot \frac{A-\varepsilon}{A}\right)=F\left(A-\varepsilon\right)\cdot \alpha_{n_k}\frac{A-\varepsilon}{A}>\\ &U\left(\alpha_{n_k}\cdot \frac{A-\varepsilon^\prime}{A},\bm{\bar{\alpha}_{-n_k}}\cdot \frac{A-\varepsilon^\prime}{A}\right)=F\left(A-\varepsilon^\prime\right)\cdot \alpha_{n_k}\frac{A-\varepsilon^\prime}{A},
        \label{eq:control}
    \end{aligned}
\end{equation}
for any $\varepsilon^\prime<\varepsilon$.
Therefore, when all players reduce resource power of their node at the same rate, 
their node power would decrease from $\alpha_{n_k}$ to $\alpha_{n_k}\cdot\frac{A-\varepsilon}{A}$, and they earn a higher profit. 
We also consider the case where a node does not reduce its power by $\frac{\sum{\alpha_{n_i}}-\varepsilon}{\sum{\alpha_{n_i}}}$ times. 
However, the retaliation of other nodes can make this behavior less profitable when compared to the case where the node reduces its power by $\frac{\sum{\alpha_{n_i}}-\varepsilon}{\sum{\alpha_{n_i}}}$ times, where retaliation strategies are often used in a repeated game for cooperation. 
A possible strategy of node $n_i$ with resource power $\alpha_{n_i}^t$ is that the node updates its power $\alpha_{n_i}^t$ to $\alpha_{n_i}^{t+1}=\frac{A^{t+1}-\alpha_{n_i}^{t}}{A^t-\alpha_{n_i}^t}\cdot \alpha_{n_i}^t$ at time $t+1$, 
where $A^t$ denotes the total resource power of nodes at time $t$. 
Under this strategy, because of Eq.~\eqref{eq:control}, if even one node does not reduce its power by $\frac{A-\varepsilon}{A}$ times, all nodes earn a lower profit. 
As a result, there is a reachable equilibrium where all players reduce resource power of their node (i.e., effective power) by $\frac{A-\varepsilon}{A}$ times. 
Note that, in the equilibrium, the effective power distribution among players does not change. 

Second, we consider that $\max_{x\leq A}xF(x)=AF(A)$ for any $A$. 
This fact derives that
\begin{equation*}
    \begin{aligned}
    &U_{n_i}(\alpha_{n_i}+\varepsilon, \bm{\bar{\alpha}_{-n_i}})=(\alpha_{n_i}+\varepsilon)F\left(\sum_{n_i}\alpha_{n_i}+\varepsilon\right)>\\
    &U_{n_i}(\alpha_{n_i},\bm{\bar{\alpha}_{-n_i}})=\alpha_{n_i}F\left(\sum_{n_i}\alpha_{n_i}\right). 
    \end{aligned}
\end{equation*}
The above equation implies that the utility is a strictly increasing function for $\alpha_{n_i}$: $U_{n_i}(\alpha_{n_i}+\varepsilon,\bm{\bar{\alpha}_{-n_i}})>U_{n_i}(\alpha_{n_i}, \bm{\bar{\alpha}_{-n_i}})$ for any $\varepsilon>0.$ 
Thus, all nodes would increase their power for a higher profit. 

\begin{figure}[ht]
    \centering
    \includegraphics[width=0.65\columnwidth]{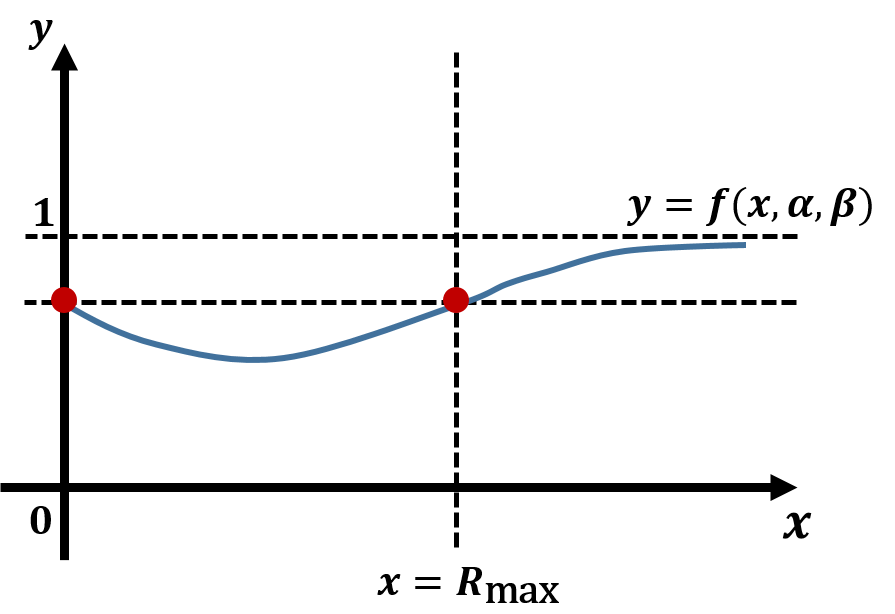}
    \caption{The function $f(x, \alpha,\beta)$ represents the right-hand side of Eq.~\eqref{eq:find_x}. 
    This graph shows that $f(R_{\tt max},\alpha,\beta)$ is the maximum in the range $x\leq R_{\tt max}.$}
    \label{fig:find_x}
\end{figure}

To satisfy the second requirement of Definition~\ref{def:perfect}, the following should be satisfied for any two players $p_i, p_j\in \mathcal{P}^t_{\delta}$: 
$$\frac{EP_{p_i}^t}{EP_{p_j}^t}\leq 1+\varepsilon,$$ where $EP_{p_i}^t \geq EP_{p_j}^t.$ 
Under the utility function Eq.~\eqref{eq:linear}, a player would run one node with its own resource, and the above equation can be expressed as follows: $\frac{\alpha_{n_i}^t}{\alpha_{n_j}^t}\leq 1+\varepsilon,$ 
where $\mathcal{N}_{p_i}=\{n_i\}$ and $\mathcal{N}_{p_j}=\{n_j\}.$
Because $U(\alpha_{n_i}, \bm{\bar{\alpha}_{-n_i}})$ is a strictly increasing function of $\alpha_{n_i},$ all nodes would increase their resource at rate $r$ per earned net profit. 
Then the ratio $\frac{\alpha_{n_i}^{t+1}}{\alpha_{n_j}^{t+1}}$ between the resource power of nodes $n_i$ and $n_j$ at time $t+1$ is 
\begin{equation*}
    \begin{aligned}
    \frac{\alpha_{n_i}^t+r\cdot R_{n_i}^t}{\alpha_{n_j}^t+r\cdot R_{n_j}^t}=\frac{\alpha_{n_i}^t}{\alpha_{n_j}^t}\cdot \frac{1+r\cdot \frac{R_{n_i}^t}{\alpha_{n_i}^t}}{1+r\cdot \frac{R_{n_j}^t}{\alpha_{n_j}^t}}>\frac{\alpha_{n_i}^t}{\alpha_{n_j}^t}\cdot\frac{1}{1+r\cdot \frac{R_{n_j}^t}{\alpha_{n_j}^t}}.
    \end{aligned}
\end{equation*}
For ease of reading, a state where $\alpha_{n_i}=\alpha$ and $\alpha_{n_j}=\beta$ is denoted by $(\alpha,\beta)$.
Here, note that $\alpha$ is not less than $\beta.$
Then we consider one step in which $(\alpha,\beta)$ moves to $(\alpha,\beta+ry)$ with probability $p$ and $(\alpha+rx,\beta)$ with probability $1-p$, where $x, y\leq R_{\tt max}$.
Because of $\frac{U_{n_j}(\beta)}{\beta}-\frac{U_{n_i}(\alpha)}{\alpha}=0,$ $p=\frac{x}{x+\frac{\alpha y}{\beta}}$. 
We also denote $\Pr\left(a\rightarrow b\,|\,(\alpha,\beta)\right)$ by the probability for ratio $\frac{\alpha_{n_i}}{\alpha_{n_j}}$ to reach from $a$ to less than $b$ when a state $(\alpha_{n_i},\alpha_{n_j})$ starts from $(\alpha, \beta)$.
Then the following holds:
\begin{equation}
\begin{aligned}
    &\Pr\left(\frac{\alpha}{\beta}\rightarrow 1+\varepsilon \,\Bigg|\,(\alpha,\beta)\right)\leq \frac{\beta x}{\beta x+\alpha y}\times \\
    &\max\Pr\left(\frac{\alpha}{\beta+ry}\rightarrow 1+\varepsilon \,\Bigg|\,(\alpha,\beta+ry)\right)+\frac{\alpha y}{\beta x+\alpha y}\\
    &\times\max\Pr\left(\frac{\alpha+rx}{\beta}\rightarrow 1+\varepsilon\,\Bigg|\,(\alpha+rx,\beta)\right),
\end{aligned}
\label{eq:find_x}
\end{equation}
where $\max\Pr(\frac{\alpha}{\beta+ry}\rightarrow 1+\varepsilon \,|\,(\alpha,\beta+ry))$ indicates the maximum probability for $(\alpha_{n_i},\alpha_{n_j})$ to reach from $(\alpha,\beta+ry)$ to a state satisfying that $\frac{\alpha_{n_i}}{\alpha_{n_j}}\leq 1+\varepsilon$, considering all possible random walks.
Similarly, $\max\Pr(\frac{\alpha+rx}{\beta}\rightarrow 1+\varepsilon\,|\,(\alpha+rx,\beta))$ represents the maximum probability for $(\alpha_{n_i},\alpha_{n_j})$ to reach from $(\alpha+rx,\beta)$ to a state satisfying that $\frac{\alpha_{n_i}}{\alpha_{n_j}}\leq 1+\varepsilon$.
Note that, in the range $0\leq x\leq R_{\tt max},$ the right-hand side of Eq.~\eqref{eq:find_x} is the maximum when $x=0$. 

We denote the right-hand side of Eq.~\eqref{eq:find_x} by $f(x, \alpha,\beta).$
Then, when assuming 1) $\lim_{\alpha\rightarrow\infty}f(x, \alpha,\beta)$ is a constant in terms of $x$ and 2) 
$f(x, \alpha,\beta)$ is the maximum when $x=R_{\tt max}$,
the probability to reach $(m,\varepsilon,\delta)$-decentralization is upper bounded by the maximum probability to reach $(m,\varepsilon,\delta)$-decentralization under \textit{a random walk where $\alpha_{n_i}$ changes to $\alpha_{n_i}+rR_{\tt max}$ if it increases.}
For the second assumption, Fig.~\ref{fig:find_x} describes an example. 
Note that the value of when $x=0$ cannot be greater than that for when $x=R_{\tt max}$ because $\max\Pr(\frac{\alpha}{\beta}\rightarrow 1+\varepsilon \,|\,(\alpha,\beta))$ is not greater than $f(x, \alpha,\beta)$. 
Moreover, the above fact derives that, even if we extend to one step in which $(\alpha,\beta)$ can move to $(\alpha,\beta+ry)$,  $(\alpha+rx_1,\beta)$, $(\alpha+rx_2,\beta), \cdots, (\alpha+rx_n,\beta)$, the probability for the ratio $\frac{\alpha_{n_i}}{\alpha_{n_j}}$ to reach from $\frac{\alpha}{\beta}$ to less than $1+\varepsilon$ can be the maximum when $x_i=R_{\tt max}$ for $1\leq i\leq n$.
Also, when considering one step where $(\alpha,\beta)$ can move to $(\alpha,\beta+ry_1)$, $(\alpha,\beta+ry_2)$, $\cdots$, $(\alpha,\beta+ry_n)$, $(\alpha+rx,\beta)$, the probability for the ratio $\frac{\alpha_{n_i}}{\alpha_{n_j}}$ to reach from $\frac{\alpha}{\beta}$ to less than $1+\varepsilon$ can be the maximum if $x=R_{\tt max}$.
This is because such steps can be expressed as a linear combination of a step $s_i$ for $i\leq n$ in which $(\alpha,\beta)$ can move to $(\alpha,\beta+ry_i)$ or $(\alpha+rx_i,\beta)$.
As a result, these facts imply that it is sufficient to find a function $G(\alpha,\beta)$ satisfying the following.
\begin{enumerate}
    \item The function $G(\alpha,\beta)$ is equal to or greater than 
    $$\max_{x=R_{\tt max}}\Pr\left(\frac{\alpha}{\beta} \rightarrow 1+\varepsilon\,\big|\,(\alpha,\beta) \right).$$
    \item The following equation is the maximum when $x=R_{\tt max}.$
    \begin{equation}
        \begin{aligned}
\max_{y}&\left\{\frac{\beta x}{\beta x+\alpha y}\cdot G(\alpha,\beta+ry)+\right.\\
&\hspace{1.5cm}\left.\frac{\alpha y}{\beta x+\alpha y}\cdot G(\alpha+rx,\beta)\right\}.\label{eq:const}
        \end{aligned}
    \end{equation}
    \item The limit value of Eq.~\eqref{eq:const} when $\alpha$ goes to infinity is a constant in terms of $x.$
    \item The below equation holds:
    \begin{equation*}
    \begin{aligned}
        G(\alpha,\beta)\geq \max_{y}&\left\{\frac{\beta R_{\tt max}}{\beta R_{\tt max}+\alpha y}\cdot G(\alpha,\beta+ry)+\right.\\
&\hspace{0.8cm}\left.\frac{\alpha y}{\beta R_{\tt max}+\alpha y}\cdot G(\alpha+rR_{\tt max},\beta)\right\}.
    \end{aligned}
    \end{equation*}
\end{enumerate}

\begin{figure*}[ht]
	\centering
    \subfloat[]{
      \includegraphics[height=0.2\textwidth]{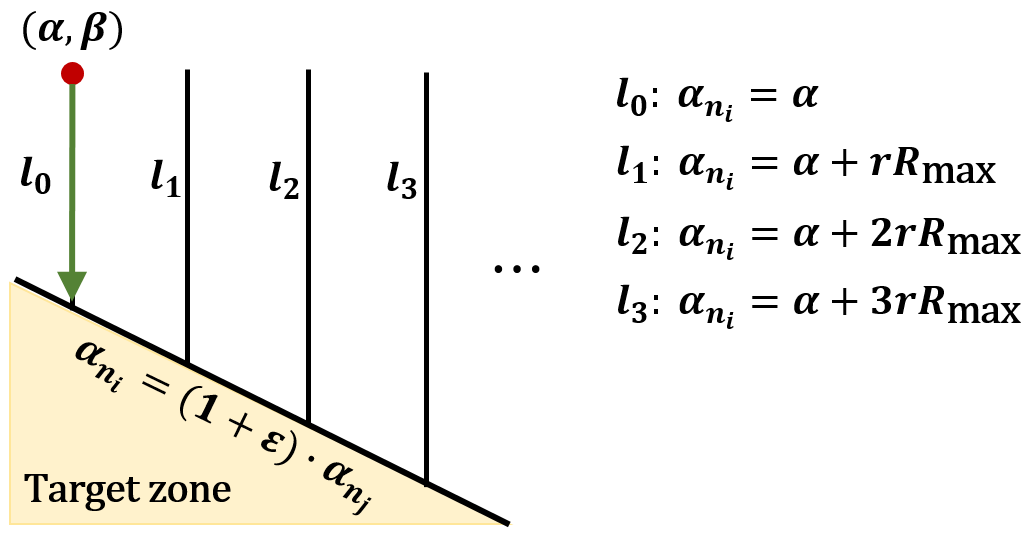}
      \label{fig:ex1}
   }
    \subfloat[]{
      \includegraphics[height=0.2\textwidth]{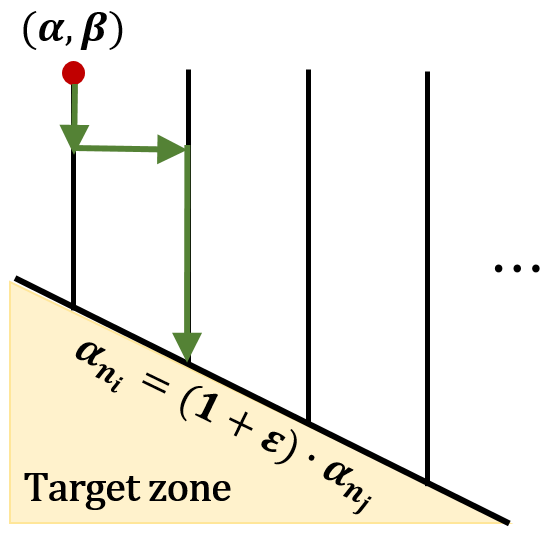}
      \label{fig:ex2}
   }
    \subfloat[]{
      \includegraphics[height=0.2\textwidth]{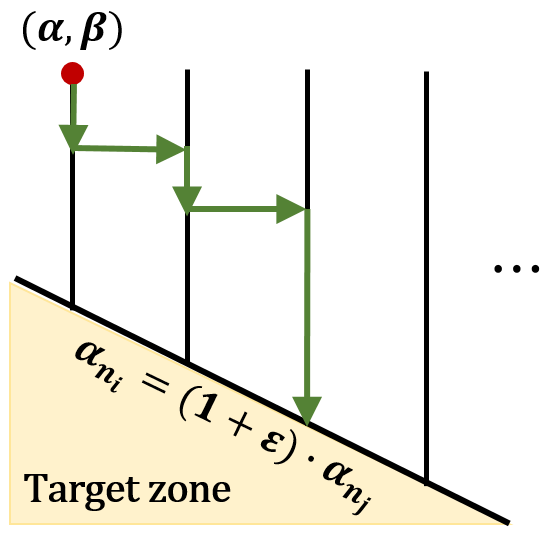}
      \label{fig:ex3}
   }
\caption{The figures represent examples for events of which probabilities are $P_0^\varepsilon(\alpha,\beta), P_1^\varepsilon(\alpha,\beta)$, and $P_2^\varepsilon(\alpha,\beta)$, respectively. 
The red point $(\alpha,\beta)$ is a start point, and a random walk aims to enter the target zone in which $\frac{\alpha_{n_i}}{\alpha_{n_j}}\leq 1+\varepsilon$. 
The lines $l_0, l_1, l_2$ and $l_3$ represent $\alpha_{n_i}=\alpha, \alpha_{n_i}=\alpha+R_{\tt max}, \alpha_{n_i}=\alpha+2R_{\tt max},$ and  $\alpha_{n_i}=\alpha+3R_{\tt max}$, respectively.
The point $(\alpha_{n_i},\alpha_{n_j})$ would descend along the current line or move to the next line.}
\label{fig:examples}
\end{figure*}

Next, we consider the case where the ratio $\frac{\alpha_{n_i}}{\alpha_{n_j}}$ changes from $\frac{\alpha}{\beta}$ to less than $1+\varepsilon$ without a process in which $\alpha_{n_i}$ increases from $\alpha$ to $\alpha+rR_{\tt max}$. 
The probability for the case is denoted by $P_0^\varepsilon(\alpha,\beta)$. 
In addition, for the case where $\frac{\alpha_{n_i}}{\alpha_{n_j}}$ changes from $\frac{\alpha}{\beta}$ to less than $1+\varepsilon$ with a process in which $\alpha_{n_i}$ increases from $\alpha$ to $\alpha+krR_{\tt max}$ but not to $\alpha+(k+1)rR_{\tt max}$,
its probability is denoted by $P_k^\varepsilon(\alpha,\beta)$.
Fig.~\ref{fig:examples} represents examples for events of which probabilities are $P_0^\varepsilon(\alpha,\beta), P_1^\varepsilon(\alpha,\beta)$, and $P_2^\varepsilon(\alpha,\beta)$, respectively.
For ease of reading, we also denote $\frac{R_{n_j}}{\alpha_{n_j}}-\frac{R_{n_i}}{\alpha_{n_i}}$ by $D$,
and then, the following holds:
\begin{align*}
    &\frac{U_{n_j}(\alpha_{n_j},\bm{\bar{\alpha}_{-n_j}})}{\alpha_{n_j}}-\frac{U_{n_i}(\alpha_{n_i},\bm{\bar{\alpha}_{-n_i}})}{\alpha_{n_i}}=0\\=&\int_{ D\geq d}D\Pr(D)+\int_{D< d}D\Pr(D)\\ \geq \,&d\Pr(D\geq d)-\frac{R_{\tt max}}{\alpha_{n_i}}(1-\Pr(D\geq d))\\
    \Rightarrow &\Pr(D\geq d)\leq \frac{R_{\tt max}}{R_{\tt max}+d\alpha_{n_i}}\\
    \Rightarrow &\Pr(\frac{R_{n_j}}{\alpha_{n_j}}\geq d, R_{n_i}=0)\leq \frac{R_{\tt max}}{R_{\tt max}+d\alpha_{n_i}}
\end{align*}

By the above equation, we can also derive the following:
\begin{align}
    &\Pr\Bigl(\frac{\alpha_{n_i}^{t+1}}{\alpha_{n_j}^{t+1}}\leq x \Big| \alpha_{n_i}^t,\alpha_{n_j}^t\Bigr)\notag
    \\\leq&\Pr\Bigl(\frac{R_{n_j}^t}{\alpha_{n_j}^t}\leq\frac{1}{r}\Bigl(\frac{1}{x}\cdot\frac{\alpha_{n_i}^t}{\alpha_{n_j}^t}-1\Bigr)=d \Big| \alpha_{n_i}^t,\alpha_{n_j}^t\Bigr)\notag
    \\\leq&\frac{R_{\tt max}}{R_{\tt max}+d\alpha_{n_i}^t}=\frac{R_{\tt max}(1+rd)}{(R_{\tt max}+d\alpha_{n_i}^{t})(1+rd)}\notag
    \\\leq& \frac{1}{1+rd}\leq x\cdot \frac{\alpha_{n_j}^t}{\alpha_{n_i}^t} \quad \text{ if } \quad R_{\tt max}\cdot r\leq\alpha_{n_i}^t 
    \label{eq:pf}
\end{align}
Assuming that $\beta\prod_{t=1}^{n}(1+rd^t)=\frac{\alpha}{1+\varepsilon}$, Eq.~\eqref{eq:pf} implies 
\begin{align}
    &P_0^\varepsilon(\alpha,\beta)= \prod_{t=1}^{n}\frac{R_{\tt max}}{R_{\tt max}+d^t\alpha}=\prod_{t=1}^{n}\frac{R_{\tt max}(1+rd^t)}{(R_{\tt max}+d^t\alpha)(1+rd^t)}\notag\\&
    \leq (1+\varepsilon)\cdot \frac{\beta}{\alpha} \quad \text{ if } \quad R_{\tt max}\cdot r\leq\alpha. \notag 
\end{align}
Furthermore, 
\begin{align*}
\max P_0^\varepsilon(\alpha,\beta)&=\max_{(d^1,\cdots, d^n)\in S_1}\prod_{t=1}^{n}\frac{R_{\tt max}}{R_{\tt max}+d^t\alpha}\\
&\leq\max_{(d^1,\cdots, d^n)\in S_2}\prod_{t=1}^{n}\frac{R_{\tt max}}{R_{\tt max}+d^t\alpha},    
\end{align*}
where 
\begin{equation*}
    \begin{aligned}
&S_1=\left\{(d^1,\cdots, d^n)\,\Big|\, 0\leq d^t\leq \frac{R_{\tt max}}{\beta\prod_{i=1}^{t-1}(1+rd^i)},\right.\\ 
&\hspace{2.5cm}\left.\beta\prod_{t=1}^{n}(1+rd^t)=\frac{\alpha}{1+\varepsilon}\right\}\subset \\
&S_2=\left\{(d^1,\cdots, d^n)\,\Big|\, 0\leq d^t, \beta\prod_{t=1}^{n}(1+rd^t)=\frac{\alpha}{1+\varepsilon}\right\}.    
    \end{aligned}
\end{equation*}
Because $\prod_{t=1}^{n}\frac{R_{\tt max}}{R_{\tt max}+d^t\alpha}$ is a symmetric and convex function for variables $d^1, d^2, \cdots, d^n,$
it would be the maximum when a point $(d^1, d^2, \cdots, d^n)$ is on the boundary of a set $A_2.$
In other words, if 
\begin{equation*}
d^1=\frac{1}{r}\left(\frac{\alpha}{\beta(1+\varepsilon)}-1\right) \,\,\text{and}\,\, d^t=0 \quad\forall t>1,    
\end{equation*}
the value of $\prod_{t=1}^{n}\frac{R_{\tt max}}{R_{\tt max}+d^t\alpha}$ is the maximum. 
Meanwhile, $\prod_{t=1}^{n}\frac{R_{\tt max}}{R_{\tt max}+d^t\alpha}$ is the minimum if $d^1, d^2, \cdots, d^n$ are the same. 
In addition, when $R_{\tt max}\cdot r=\alpha$, $P^\varepsilon_0(\alpha, \beta)$ can be maximized, and the value is $(1+\varepsilon)\frac{\beta}{\alpha}$. 

We define $Pr_k\left((\alpha, \beta)\rightarrow(\alpha+krR_{\tt max}, \beta^\prime)\right)$ as the probability of an event where a point $(\alpha_{n_i}, \alpha_{n_j})$ starting from $(\alpha, \beta)$ reaches the line $\alpha_{n_i}=\alpha+krR_{\tt max}$ before satisfying $\frac{\alpha_{n_i}}{\alpha_{n_j}}\leq 1+\varepsilon$, and the value of $\alpha_{n_j}$ of the point at which $(\alpha_{n_i}, \alpha_{n_j})$ meets the line $\alpha_{n_i}=\alpha+kR_{\tt max}$ for the first time is $\beta^\prime.$
Then, for the probability $P^\varepsilon_k(\alpha, \beta)$, the following holds: 
\begin{equation}
\begin{aligned}
&P^\varepsilon_k(\alpha, \beta)=
\sum_{\beta^\prime}Pr_k\left((\alpha, \beta)\rightarrow(\alpha+krR_{\tt max}, \beta^\prime)\right)\times \\
& P^\varepsilon_0(\alpha+krR_{\tt max}, \beta^\prime)\leq \sum_{\beta^\prime}Pr_k\left((\alpha, \beta)\rightarrow(\alpha+krR_{\tt max}, \beta^\prime)\right) \\
&\hspace{1cm}\times \frac{rR_{\tt max}}{rR_{\tt max}+(\alpha+krR_{\tt max})\cdot\left(\frac{\alpha+krR_{\tt max}}{\beta^\prime(1+\varepsilon)}-1\right)}
\label{eq:hk}
\end{aligned}
\end{equation}
We denote the right-hand side of Eq.~\eqref{eq:hk} by $H_k(\alpha, \beta)$.
Note that the value of $H_k(\alpha, \beta)$ indicates the probability of an event in which the point $(\alpha_{n_i}, \alpha_{n_j})$ meets the line $\alpha_{n_i}=\alpha+krR_{\tt max}$ and moves from $(\alpha, \beta)$ to a point satisfying $\frac{\alpha_{n_i}}{\alpha_{n_j}}\leq 1+\varepsilon$.
In this event, if $(\alpha_{n_i}, \alpha_{n_j})$ is on the point $(\alpha+krR_{\tt max},\beta^\prime)$, it can reach a point satisfying $\frac{\alpha_{n_i}}{\alpha_{n_j}}\leq 1+\varepsilon$ with probability 
\begin{equation}
\frac{rR_{\tt max}}{rR_{\tt max}+(\alpha+krR_{\tt max})\cdot\left(\frac{\alpha+krR_{\tt max}}{\beta^\prime(1+\varepsilon)}-1\right)}.  
\label{eq:max_0}
\end{equation}
Therefore, the value of $H_k(\alpha, \beta)$ depends on how the point $(\alpha_{n_i}, \alpha_{n_j})$ reaches the line $\alpha_{n_i}=\alpha+krR_{\tt max}.$ 

\begin{figure*}[ht]
	\centering
    \subfloat[Random walk $\mathcal{W}_1$]{
      \hspace{-5mm}\includegraphics[height=0.23\textwidth]{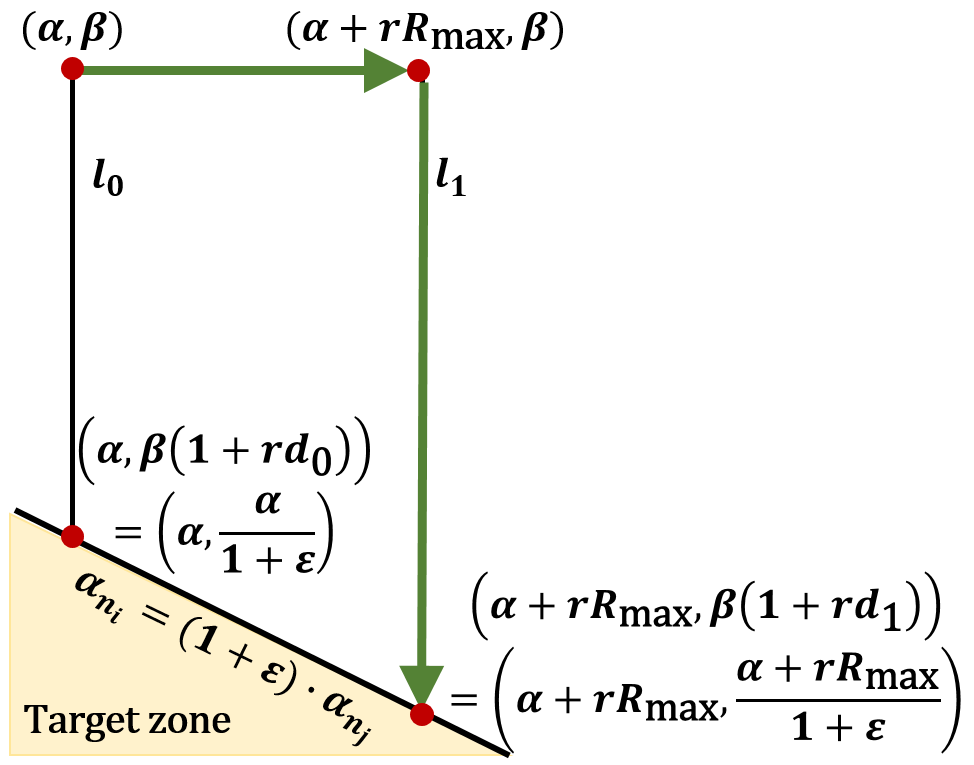}
      \label{fig:w1}
   }
    \subfloat[Random walk $\mathcal{W}_2$]{
      \includegraphics[height=0.23\textwidth]{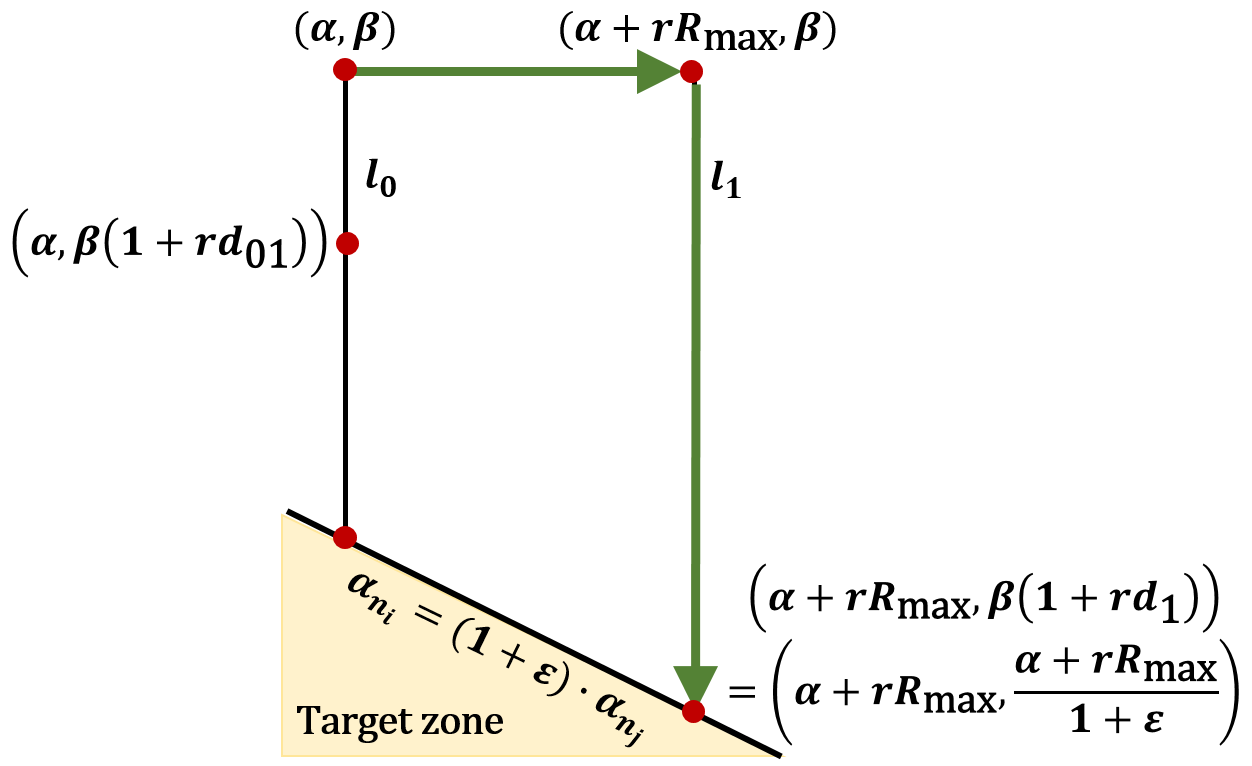}
      \label{fig:w2_1}
   }
   \subfloat[Random walk $\mathcal{W}_2$]{
      \includegraphics[height=0.23\textwidth]{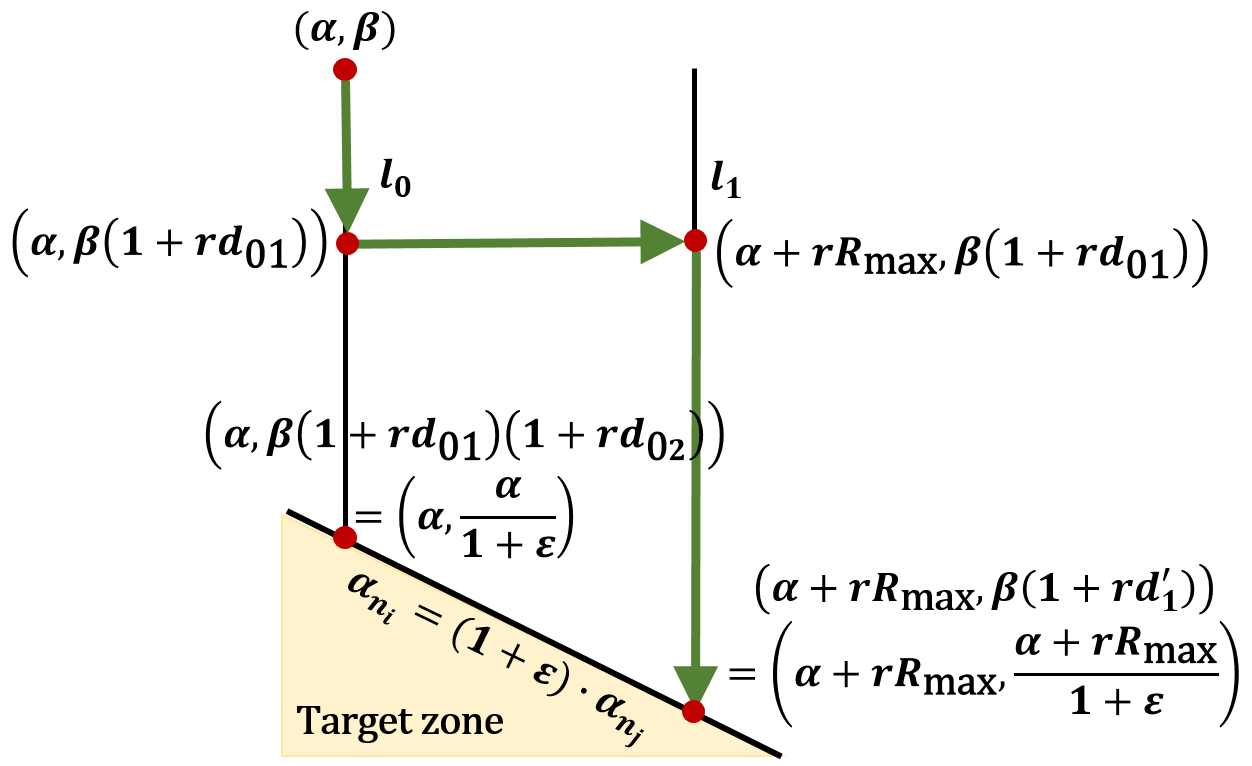}
      \label{fig:w2_2}
   }
\caption{The figures represent two random walks $\mathcal{W}_1$ and $\mathcal{W}_2$, respectively.
The red points indicate points to which the state $(\alpha_{n_i}, \alpha_{n_j})$ can move through each random walk. 
Moreover, green paths indicate the possible path in each random walk. 
In $\mathcal{W}_2$, there is one red point $(\alpha, \beta(1+rd_{01}))$ on line $l_0$ in addition to the red point of $\mathcal{W}_1$. 
Here, $1+rd_0=(1+rd_{01})(1+rd_{02})$.}
\label{fig:walk_ex}
\end{figure*}

\begin{figure*}[ht]
	\centering
    \subfloat[Random walk $\mathcal{W}_3$]{
      \hspace{-5mm}\includegraphics[height=0.25\textwidth]{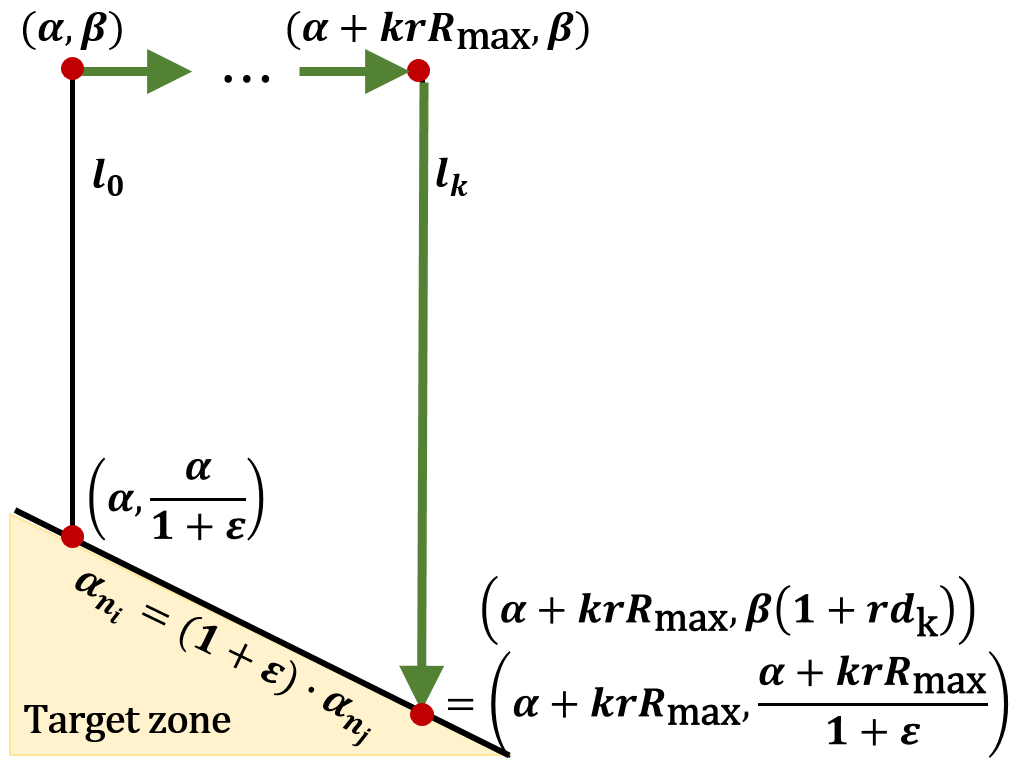}
      \label{fig:wk3}
   }
    \subfloat[Random walk $\mathcal{W}_4$]{
      \includegraphics[height=0.25\textwidth]{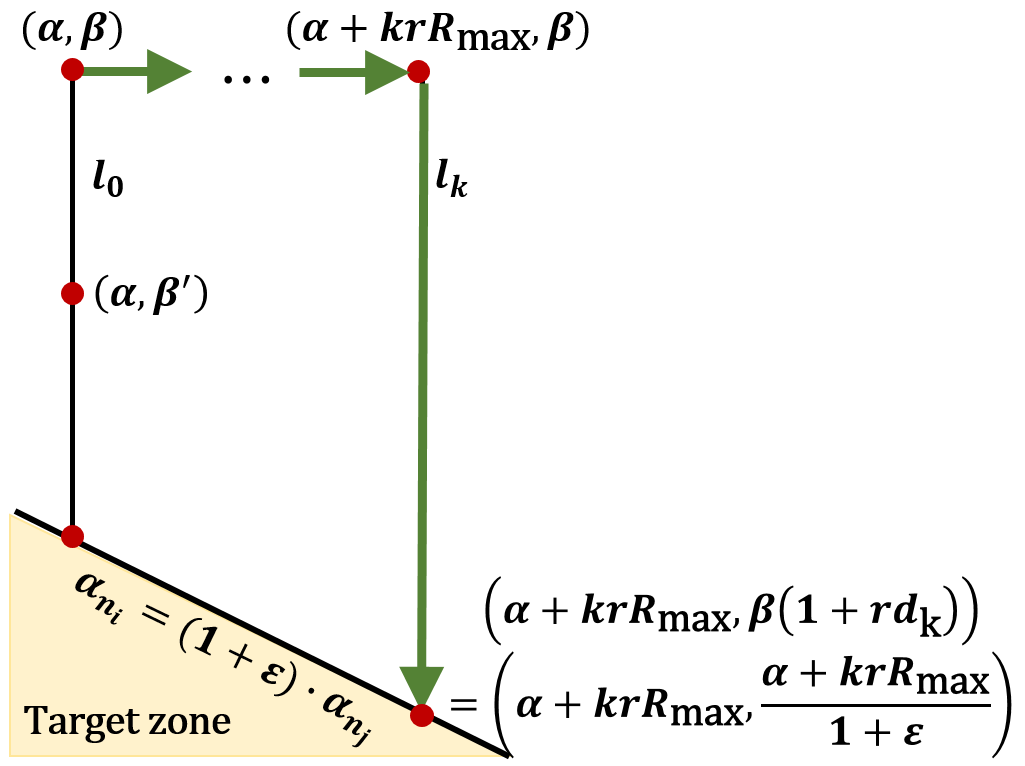}
      \label{fig:wk4_1}
   }
   \subfloat[Random walk $\mathcal{W}_4$]{
      \includegraphics[height=0.25\textwidth]{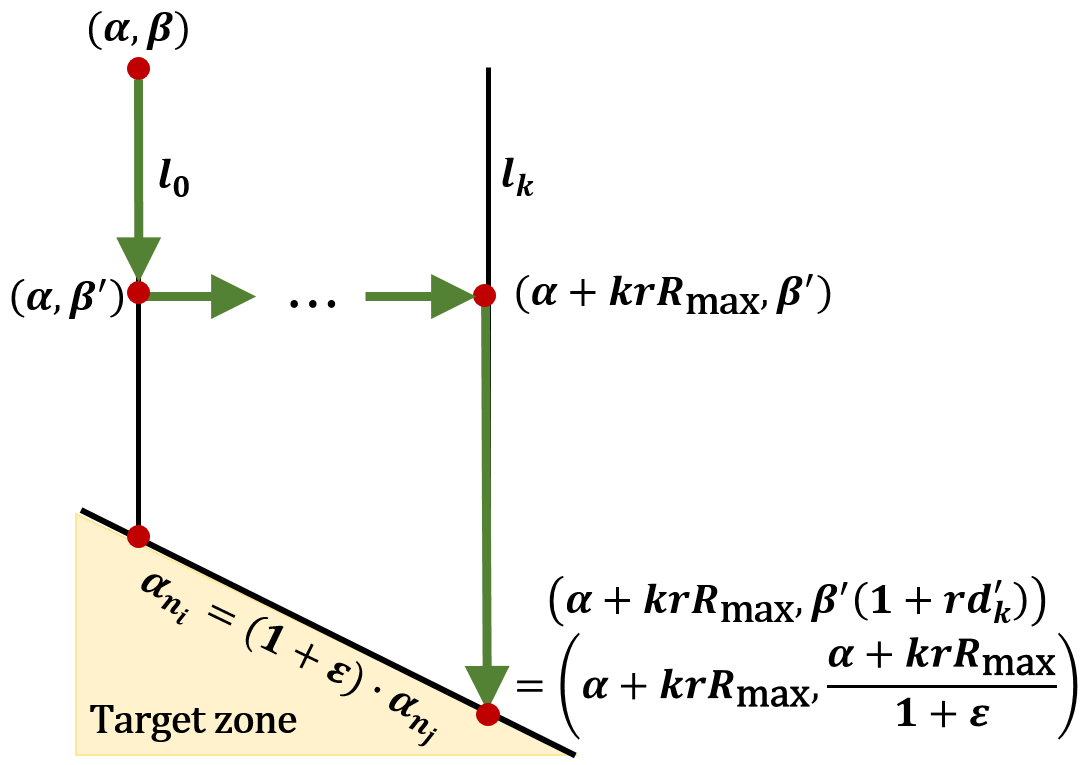}
      \label{fig:wk4_2}
   }
\caption{The figures represent two random walks $\mathcal{W}_3$ and $\mathcal{W}_4$, respectively.
The red points indicate points to which the state $(\alpha_{n_i}, \alpha_{n_j})$ can change the moving direction. 
Moreover, green paths indicate the possible path in each random walk. 
In $\mathcal{W}_4$, there is another red point $(\alpha, \beta^\prime)$ on line $l_0$ in addition to the red point of $\mathcal{W}_3$. }
\label{fig:walk_exk}
\end{figure*}

Next, we find when $H_k(\alpha, \beta)$ can be maximized. 
Note that the value of $H_0(\alpha, \beta)$ is determined as Eq.~\eqref{eq:max_0}.
Thus, we first consider when $k=1$ and denote the value of $H_k(\alpha, \beta)$ under a random walk $\mathcal{W}$ by $H_k^{\mathcal{W}}(\alpha, \beta)$. 
Also, we assume that two random walks $\mathcal{W}_1$ and $\mathcal{W}_2$ exist.
In $\mathcal{W}_1$, the point $(\alpha_{n_i}, \alpha_{n_j})$ on the line $\alpha_{n_i}=\alpha$ can move to either the point $(\alpha+rR_{\tt max}, \alpha_{n_j})$ or the point $(\alpha, \frac{\alpha}{1+\varepsilon}).$ 
If the point is on the line $\alpha_{n_i}=\alpha+rR_{\tt max},$ it can move to either the point $(\alpha+2rR_{\tt max}, \alpha_{n_j})$ or the point $(\alpha+rR_{\tt max}, \frac{\alpha+rR_{\tt max}}{1+\varepsilon}).$
The random walk $\mathcal{W}_2$ is similar to $\mathcal{W}_1$ except that there is one additional path from the line $\alpha_{n_i}=\alpha$ to the line $\alpha_{n_i}=\alpha+rR_{\tt max}$ when compared to $\mathcal{W}_1$.
Fig.~\ref{fig:walk_ex} represents $\mathcal{W}_1$ and $\mathcal{W}_2$.
While the random walk $\mathcal{W}_1$ has only one point $(\alpha+rR_{\tt max}, \beta)$ at which a state $(\alpha_{n_i}, \alpha_{n_j})$ can meet the line $l_1$, $(\alpha_{n_i}, \alpha_{n_j})$ can meet the line $l_1$ at two points $(\alpha+rR_{\tt max}, \beta)$ and $(\alpha+rR_{\tt max}, \beta(1+rd_{01}))$ in random walk $\mathcal{W}_2$. 
Fig.~\ref{fig:w1} represents the possible path of random walk $\mathcal{W}_1$, and Figs.~\ref{fig:w2_1} and \ref{fig:w2_2} show two possible paths of random walk $\mathcal{W}_2$.

We show that $H_1^{\mathcal{W}_2}(\alpha, \beta)$ is greater than $H_1^{\mathcal{W}_1}(\alpha, \beta)$.
Referring to Fig.~\ref{fig:walk_ex}, the following is met: 
\begin{align}
&\beta(1+rd_0)=\frac{\alpha}{1+\varepsilon},\quad\notag
\beta(1+rd_1)=\frac{\alpha+rR_{\tt max}}{1+\varepsilon},\notag\\
&\beta(1+rd_{01})(1+rd_{02})=\frac{\alpha}{1+\varepsilon},\notag\\
&\beta(1+rd_{01})(1+rd_{1}^\prime)=\frac{\alpha+rR_{\tt max}}{1+\varepsilon},\notag\\
&R_{\tt max}(R_{\tt max}+\alpha d_{0})\leq(R_{\tt max}+\alpha d_{01})(R_{\tt max}+\alpha d_{02}).\label{eq:h1}
\end{align}
Also, $H_1^{\mathcal{W}_1}(\alpha, \beta)$ and $H_1^{\mathcal{W}_2}(\alpha, \beta)$ are 
$$\frac{R_{\tt max}}{R_{\tt max}+(\alpha+rR_{\tt max})d_1}\cdot \frac{\alpha d_0}{R_{\tt max}+\alpha d_0} \text{ and }$$ 
\begin{equation*}
    \begin{aligned}
&\frac{\alpha d_{01}}{R_{\tt max}+\alpha d_{01}}\cdot \frac{R_{\tt max}}{R_{\tt max}+(\alpha+rR_{\tt max})d_1}+
\frac{R_{\tt max}}{R_{\tt max}+\alpha d_{01}}\\
&\times \frac{\alpha d_{02}}{R_{\tt max}+\alpha d_{02}}\cdot\frac{R_{\tt max}}{R_{\tt max}+(\alpha+rR_{\tt max})d_1^\prime},
    \end{aligned}
\end{equation*} respectively.
Because of Eq.~\eqref{eq:h1}, $H_1^{\mathcal{W}_2}(\alpha, \beta)$ is less than 
\begin{equation}
    \begin{aligned}
&\frac{\alpha d_{01}}{R_{\tt max}+\alpha d_{01}}\cdot \frac{R_{\tt max}}{R_{\tt max}+(\alpha+rR_{\tt max}d_1)}+\left(\frac{R_{\tt max}}{R_{\tt max}+\alpha d_{01}}\right.\\
&\left.-\frac{R_{\tt max}}{R_{\tt max}+\alpha d_{0}}\right)\cdot \frac{R_{\tt max}}{(\alpha+rR_{\tt max})d_1^\prime}.  
\end{aligned}
\label{eq:h1_2}
\end{equation}
By the below equations, Eq.~\eqref{eq:h1_2} is greater than $H_1^{\mathcal{W}_1}(\alpha, \beta)$.
\begin{align*}
    &\frac{1}{R_{\tt max}+(\alpha+rR_{\tt max})d_1^\prime}\geq
    \frac{1}{R_{\tt max}+(\alpha+rR_{\tt max})d_1}\Leftrightarrow\\ 
    &\frac{1}{R_{\tt max}+(\alpha+rR_{\tt max})d_1^\prime}\times \left(\frac{R_{\tt max}}{R_{\tt max}+\alpha d_{01}}-\frac{R_{\tt max}}{R_{\tt max}+\alpha d_{0}}\right)\\
    &\geq\frac{1}{R_{\tt max}+(\alpha+rR_{\tt max})d_1}\times \left(\frac{\alpha d_{0}}{R_{\tt max}+\alpha d_{0}}-\frac{\alpha d_{01}}{R_{\tt max}+\alpha d_{01}}\right)
    \\&\Leftrightarrow H_1^{\mathcal{W}_1}(\alpha, \beta)<\text{Eq.~\eqref{eq:h1_2}}
\end{align*}
Here, note that $d_1^\prime<d_1.$
As a result, $H_1^{\mathcal{W}_2}(\alpha, \beta)>H_1^{\mathcal{W}_1}(\alpha, \beta).$
Moreover, $H_1^{\mathcal{W}_2}(\alpha, \beta)$ is a concave function of $d_{01},$ which implies that the value of $H_1^{\mathcal{W}_2}(\alpha, \beta)$ would more efficiently increase when $d_{01}$ is closer to 0. 
Considering this fact, we can see that the more densely there exist points at which $(\alpha_{n_i},\alpha_{n_j})$ can meet the line $\alpha_{n_i}=\alpha+rR_{\tt max}$ for the first time, the greater the value of $H_1(\alpha, \beta)$ is. 

Next, we consider two random walks, $\mathcal{W}_3$ and $\mathcal{W}_4$, and find when $H_k^\mathcal{W}(\alpha, \beta)$ can be maximized.
Hereafter, a point $(\alpha_{n_i}, \alpha_{n_j})$, which can move to the next line, (e.g., red points represented in Fig.~\ref{fig:walk_ex}) is called a break point. 
The random walk $\mathcal{W}_4$ has one additional break point on the line $l_0: \alpha_{n_i}=\alpha$ in comparison with $\mathcal{W}_3$. 
Therefore, the number of points at which $\mathcal{W}_4$ can meet the line $\alpha_{n_i}=\alpha+krR_{\tt max}$ for the first time is greater than that for $\mathcal{W}_3$ by 1.
Fig.~\ref{fig:walk_exk} represents the two random walks $\mathcal{W}_3$ and $\mathcal{W}_4$, and the following holds:
\begin{equation*}
    \begin{aligned}
    &\beta^\prime =\beta(1+rx),\,\, \beta(1+rd_k)=\alpha+krR_{\tt max}, \\
    &\beta^\prime(1+rd_k^\prime)=\alpha+krR_{\tt max},\\
    &\beta(1+rd_{k+1})=\alpha+(k+1)rR_{\tt max},\\
    &\beta^\prime(1+rd_{k+1}^\prime)=\alpha+(k+1)rR_{\tt max}
    \end{aligned}
\end{equation*}
for $\beta^\prime>\beta$. 
Then we find $\frac{\partial (H_k^{\mathcal{W}_4}-H_k^{\mathcal{W}_3})}{\partial x}\Big|_{x=0}.$

First, $H_k^{\mathcal{W}_3}$ and $H_k^{\mathcal{W}_4}$ can be expressed as follows:
\begin{align*}
&H_k^{\mathcal{W}_3}=\prod_{i=0}^{k-1}\frac{(\alpha+irR_{\tt max}) \left(\frac{\alpha+irR_{\tt max}}{(1+\varepsilon)\beta}-1\right)}{rR_{\tt max}+(\alpha+irR_{\tt max}) \left(\frac{\alpha+irR_{\tt max}}{(1+\varepsilon)\beta}-1\right)}\times\\
&\hspace{1cm}\frac{rR_{\tt max}}{rR_{\tt max}+(\alpha+krR_{\tt max}) \left(\frac{\alpha+krR_{\tt max}}{(1+\varepsilon)\beta}-1\right)},\\
&H_k^{\mathcal{W}_4}=\prod_{i=1}^{k-1}\frac{(\alpha+irR_{\tt max}) \left(\frac{\alpha+irR_{\tt max}}{(1+\varepsilon)\beta}-1\right)}{rR_{\tt max}+(\alpha+irR_{\tt max}) \left(\frac{\alpha+irR_{\tt max}}{(1+\varepsilon)\beta}-1\right)}\\
&\times\frac{\alpha x}{R_{\tt max}+\alpha x}\cdot\frac{rR_{\tt max}}{rR_{\tt max}+(\alpha+krR_{\tt max}) \left(\frac{\alpha+krR_{\tt max}}{(1+\varepsilon)\beta}-1\right)}\\
&+\frac{R_{\tt max}}{R_{\tt max}+\alpha x}\cdot \prod_{i=0}^{k-1}\frac{(\alpha+irR_{\tt max}) \left(\frac{\alpha+irR_{\tt max}}{(1+\varepsilon)\beta(1+rx)}-1\right)}{rR_{\tt max}+(\alpha+irR_{\tt max}) \left(\frac{\alpha+irR_{\tt max}}{(1+\varepsilon)\beta(1+rx)}-1\right)}\\
&\times\frac{rR_{\tt max}}{rR_{\tt max}+(\alpha+krR_{\tt max}) \left(\frac{\alpha+krR_{\tt max}}{(1+\varepsilon)\beta(1+rx)}-1\right)}.
\end{align*}
In fact, when $\frac{\partial (H_k^{\mathcal{W}_4}-H_k^{\mathcal{W}_3})}{\partial x}\Big|_{x=0}$ is positive, it is always greater than $\frac{H_k^{\mathcal{W}_4}-H_k^{\mathcal{W}_3}}{x}$ for any $0<x<\frac{1}{r}\cdot\left(\frac{\alpha}{(1+\varepsilon)\beta}-1\right)$. 
In addition, if $\frac{\partial (H_k^{\mathcal{W}_4}-H_k^{\mathcal{W}_3})}{\partial x}\Big|_{x=0}$ is negative, $H_k^{\mathcal{W}_3}$ is greater than  $H_k^{\mathcal{W}_4}$. 
These facts implies that if $\frac{\partial (H_k^{\mathcal{W}_4}-H_k^{\mathcal{W}_3})}{\partial x}\Big|_{x=0}$ is positive, $H_k^{\mathcal{W}}$ can be maximized when there exist densely break points on the line $l_0$. 
Meanwhile, if $\frac{\partial (H_k^{\mathcal{W}_4}-H_k^{\mathcal{W}_3})}{\partial x}\Big|_{x=0}$ is negative, $H_k^{\mathcal{W}}$ can be maximized when there is no break point on line $l_0$.

The derivative $\frac{\partial (H_k^{\mathcal{W}_4}-H_k^{\mathcal{W}_3})}{\partial x}\Big|_{x=0}$ is equal to $\frac{\partial H_k^{\mathcal{W}_4}}{\partial x}\Big|_{x=0}$ because $\mathcal{W}_3$ is constant in terms of $x$. 
In addition, the value of $\frac{\partial H_k^{\mathcal{W}_4}}{\partial x}\Big|_{x=0}$ is equal to the value of $\frac{\partial A^k}{\partial x}\Big|_{x=0}$, where
\begin{equation*}
    \begin{aligned}
&A^k=\prod_{i=0}^{k-1}\frac{(\alpha+irR_{\tt max}) \left(\frac{\alpha+irR_{\tt max}}{(1+\varepsilon)\beta(1+rx)}-1\right)}{rR_{\tt max}+(\alpha+irR_{\tt max}) \left(\frac{\alpha+irR_{\tt max}}{(1+\varepsilon)\beta(1+rx)}-1\right)}\\
&\times\frac{rR_{\tt max}}{rR_{\tt max}+(\alpha+krR_{\tt max}) \left(\frac{\alpha+krR_{\tt max}}{(1+\varepsilon)\beta(1+rx)}-1\right)}
    \end{aligned}
\end{equation*}
The value of $\frac{\partial A^k}{\partial x}\Big|_{x=0}$ is expressed as 
\begin{align*}
&-r\cdot A^k\cdot\sum_{i=0}^{k-1}\frac{(l+i)^2}{((l+i)^2\frac{R^\prime_{\tt max}}{(1+\varepsilon)\beta}-l-i+1)^2}\times\\
&\frac{1+(l+i)((l+i)\frac{R^\prime_{\tt max}}{(1+\varepsilon)\beta}-1)}{(l+i)((l+i)\frac{R^\prime_{\tt max}}{(1+\varepsilon)\beta}-1)}+r\cdot A^k\times\\
&\frac{(l+k)^2}{((l+k)^2\frac{R^\prime_{\tt max}}{(1+\varepsilon)\beta}-l-k+1)^2}\times\\
&\left(1+(l+k)((l+k)\frac{R^\prime_{\tt max}}{(1+\varepsilon)\beta}-1)\right),
\end{align*}
where $R_{\tt max}^\prime=rR_{\tt max}$ and $l=\frac{\alpha}{R_{\tt max}^\prime}$.
Through the above equation, one can see that if $\frac{\partial (H_k^{\mathcal{W}_4}-H_k^{\mathcal{W}_3})}{\partial x}\Big|_{x=0}$ is positive when $l=l_0,$ $\frac{\partial (H_k^{\mathcal{W}_4}-H_k^{\mathcal{W}_3})}{\partial x}\Big|_{x=0}$ is also positive for all $l\geq l_0.$  
In other words, when the derivative value
is positive for $\alpha=\alpha_0$, it is positive for all $\alpha>\alpha_0.$ 

Also, we assume that $H_k^{\mathcal{W}_3}(\alpha,\beta)>H_k^{\mathcal{W}_4}(\alpha,\beta).$ 
This fact implies that 
\begin{equation*}
\resizebox{\hsize}{!}{$
\begin{aligned}
    &f_1\cdot\frac{R_{\tt max}}{R_{\tt max}+(\alpha+krR_{\tt max})d_{k}} +f_2\cdot \frac{R_{\tt max}}{R_{\tt max}+(\alpha+krR_{\tt max})d_{k}^\prime}\\
    &<f_3\cdot \frac{R_{\tt max}}{R_{\tt max}+(\alpha+krR_{\tt max})d_{k}},
\end{aligned}$}
\end{equation*}
where $f_1, f_2$, and $f_3$ are determined by $\mathcal{W}_3$ and $\mathcal{W}_4$. 
To prove that $H_{k+1}^{\mathcal{W}_3}(\alpha,\beta)>H_{k+1}^{\mathcal{W}_4}(\alpha,\beta),$ it is sufficient to show the following:
\begin{equation}
    \begin{aligned}
    &\frac{f_1\cdot (\alpha+krR_{\tt max})d_{k}}{R_{\tt max}+(\alpha+krR_{\tt max})d_{k}}\cdot \frac{R_{\tt max}}{R_{\tt max}+(\alpha+(k+1)rR_{\tt max})d_{k+1}}+\\
    &\frac{f_2\cdot(\alpha+krR_{\tt max})d_{k}^\prime}{R_{\tt max}+(\alpha+krR_{\tt max})d_{k}^\prime}\cdot \frac{R_{\tt max}}{R_{\tt max}+(\alpha+(k+1)rR_{\tt max})d_{k+1}^\prime}<\\
    &\frac{f_3\cdot (\alpha+krR_{\tt max})d_k}{R_{\tt max}+(\alpha+krR_{\tt max})d_k}\cdot \frac{R_{\tt max}}{R_{\tt max}+(\alpha+(k+1)rR_{\tt max})d_{k+1}}.
    \end{aligned}
    \label{eq:walk_eq}
\end{equation}
Then the above equation can be derived as follows: 
\begin{equation}
    \begin{aligned}
&(\alpha+krR_{\tt max})^2(\beta^\prime-\beta)+(\alpha+(k+1)rR_{\tt max})^2(\beta-\beta^\prime)<0 \\
&\Leftrightarrow(\alpha+krR_{\tt max}-\beta)(\beta^\prime rR_{\tt max}+(\alpha+(k+1)rR_{\tt max})\times\\
&(\alpha+(k+1)rR_{\tt max}-\beta^\prime)>(\alpha+krR_{\tt max}-\beta^\prime)\times\\
&(\beta rR_{\tt max}+(\alpha+(k+1)rR_{\tt max})(\alpha+(k+1)rR_{\tt max}-\beta))\\
&\Leftrightarrow d_k(R_{\tt max}+(\alpha+(k+1)rR_{\tt max})d_{k+1}^\prime)>d_k^\prime \times\\
&(R_{\tt max}+(\alpha+(k+1)rR_{\tt max})d_{k+1})\Rightarrow
\text{Eq.}~\eqref{eq:walk_eq}.
\end{aligned}
\label{eq:pf1}
\end{equation}
This fact implies that if $\frac{\partial H_k^{\mathcal{W}_4}}{\partial x}\Big|_{x=0}$ is negative when $k=k_0$, $\frac{\partial H_k^{\mathcal{W}_4}}{\partial x}\Big|_{x=0}$ is negative for all $k>k_0.$ 

Now, we consider when $l_k$ for $k\geq 1$ has an additional break point. 
Let us assume that there are two random walks $\mathcal{W}_1^k$ and $\mathcal{W}_2^k$, where $\mathcal{W}_2^k$ has an additional break point $(\alpha+krR_{\tt max}, \beta_2)$ on $l_k$ $(k\geq 1)$ below the final break point $(\alpha+krR_{\tt max}, \beta_1)$ located on $l_k$ $(k\geq 1)$ in the random walk $\mathcal{W}_1^k.$ 
Here, we assume that $\beta_2=(1+rx)\beta_1.$
Then, $H_{k+1}^{\mathcal{W}_1^k}<H_{k+1}^{\mathcal{W}_2^k}$, and this is easily proven by using the proof of that $H_{1}^{\mathcal{W}_1}<H_{1}^{\mathcal{W}_2}$, which is described above. 
In addition, if $\frac{\partial H_{k+1}^{\mathcal{W}_2^{k}}-H_{k+1}^{\mathcal{W}_1^{k}}}{\partial x}\Big|_{x=0}$ positive, it is always greater than $\frac{H_{k+1}^{\mathcal{W}_2^{k}}-H_{k+1}^{\mathcal{W}_1^{k}}}{x}$ for any $0<x<\frac{1}{r}\cdot \left(\frac{\alpha+krR_{\tt max}}{(1+\varepsilon)\beta_1}-1\right)$, and thus $H_{k+1}^{\mathcal{W}_2^{k}}(\alpha, \beta)$ can more efficiently increase when $x$ is closer to 0. 

Next, we consider $H_{k+N}^{\mathcal{W}_1^{k}}(\alpha, \beta)$ and $H_{k+N}^{\mathcal{W}_2^{k}}(\alpha, \beta)$.
The derivative $\frac{\partial (H_{k+N}^{\mathcal{W}_2^k}-H_{k+N}^{\mathcal{W}_1^k})}{\partial x}\Big|_{x=0}$ is equal to $\frac{\partial H_{k+N}^{\mathcal{W}_2^k}}{\partial x}\Big|_{x=0},$ and it can be expressed as 
\begin{align*}
&-rA_k^{N}\cdot\sum_{i=0}^{N-1}\frac{(l+k+i)^2}{((l+k+i)^2\frac{R^\prime_{\tt max}}{(1+\varepsilon)\beta}-l-k-i+1)^2}\times\\
&\frac{1+(l+k+i)((l+k+i)\frac{R^\prime_{\tt max}}{(1+\varepsilon)\beta}-1)}{(l+k+i)((l+k+i)\frac{R^\prime_{\tt max}}{(1+\varepsilon)\beta}-1)}+rA_k^{N}\times\\
&\frac{(l+k+N)^2}{((l+k+N)^2\frac{R^\prime_{\tt max}}{(1+\varepsilon)\beta}-l-k-N+1)^2}\times\\
&\left(1+(l+k+N)((l+k+N)\frac{R^\prime_{\tt max}}{(1+\varepsilon)\beta}-1)\right),
\end{align*}
where $R_{\tt max}^\prime=rR_{\tt max}$, $l=\frac{\alpha}{R_{\tt max}^\prime}$, and 
\begin{equation*}
    \begin{aligned}
&A_k^{N}=\prod_{i=k}^{k+N-1}\frac{(\alpha+irR_{\tt max}) \left(\frac{\alpha+irR_{\tt max}}{(1+\varepsilon)\beta(1+rx)}-1\right)}{rR_{\tt max}+(\alpha+irR_{\tt max}) \left(\frac{\alpha+irR_{\tt max}}{(1+\varepsilon)\beta(1+rx)}-1\right)}\\
&\times\frac{rR_{\tt max}}{rR_{\tt max}+(\alpha+(k+N)rR_{\tt max}) \left(\frac{\alpha+(k+N)rR_{\tt max}}{(1+\varepsilon)\beta(1+rx)}-1\right)}.
\end{aligned}
\end{equation*}
This implies that if $\frac{\partial H_{k+N}^{\mathcal{W}_2^k}}{\partial x}\Big|_{x=0}$ is positive when $k=k_0$, $\frac{\partial H_{k+N}^{\mathcal{W}_2^k}}{\partial x}\Big|_{x=0}$ is positive for all $k>k_0.$
In fact, when $k=1,$ $\frac{\partial H_{k+N}^{\mathcal{W}_2^k}}{\partial x}\Big|_{x=0}$ is positive regardless of $N$ and $\alpha$. 
Therefore, for all $k>0,$ $\frac{\partial H_{k+N}^{\mathcal{W}_2^k}}{\partial x}\Big|_{x=0}$ is positive regardless of $N$ and $\alpha$. 
In other words, $H_k^{\mathcal{W}}(\alpha,\beta)$ can be maximized when line $l_i$ has infinitely many break points for all $0<i<k.$

When we define the random walk $\mathcal{W}_{\tt max}^k$ as $\mathcal{W}_{\tt max}^k=\argmax_{\mathcal{W}}H_{k}^{\mathcal{W}}(\alpha,\beta)$,
the random walk $\mathcal{W}_{\tt max}^k$ has infinite break points on $l_i$ for $0<i<k$.
Formally, there always exist break points in interval $(\alpha+irR_{\tt max}, (\beta_1,\beta_2))$, for $\beta\leq\beta_1<\beta_2\leq \frac{\alpha+irR_{\tt max}}{1+\varepsilon}$.
Meanwhile, $\mathcal{W}_{\tt max}^k$ has no break point on $l_k$. 
In other words, in $\mathcal{W}_{\tt max}^k$, whenever a point moves to the line $l_k: \alpha_{n_i}=\alpha+krR_{\tt max}$, the point can reach the target zone where $\frac{\alpha_{n_i}}{\alpha_{n_j}}\leq 1+\varepsilon$, without break points. 
Considering the above facts, the following holds:
\begin{align}
&\max_{\mathcal{W}}H_{k}^{\mathcal{W}}(\alpha,\beta)=H_{k}^{\mathcal{W}^k_{\tt max}}(\alpha,\beta)=\notag\\
&\lim_{d\rightarrow 0}\sum_{\substack{\forall j<k:\sum_{i=0}^j x_i< m_j^d}}\left\{\frac{rR_{\tt max}}{rR_{\tt max}+(\alpha+krR_{\tt max})\cdot D_k}\right.\notag\\
&\hspace{4cm}\left.\times\prod_{i=0}^{k-1}h_i(x_i,d)\right\},\label{eq:final}
\end{align}
where 
\begin{equation*}
\resizebox{\hsize}{!}{$
\begin{aligned}
&m_j^d=\log_{1+rd}\left(\frac{\alpha+jrR_{\tt max}}{(1+\varepsilon)\beta}\right)\text{ for } j>0,\,\, m_0^d=\log_{1+rd}\left(\frac{\alpha+jrR_{\tt max}}{(1+\varepsilon)\beta^\star}\right),\\ 
&D_k=\frac{1}{r}\cdot\left(\frac{\alpha+krR_{\tt max}}{(1+\varepsilon)\beta(1+rd)^{\sum_{i=0}^{k-1}{x_i}}}-1\right),\\
&h_i(x_i,d)=\left(\frac{rR_{\tt max}}{rR_{\tt max}+(\alpha+irR_{\tt max})d}\right)^{x_i}\cdot \left(\frac{(\alpha+irR_{\tt max})d}{rR_{\tt max}+(\alpha+irR_{\tt max})d}\right)
\end{aligned}$}
\end{equation*}
The notation $\beta^\star$ denotes the root of the following equation for $\beta:$
\begin{equation*}
\begin{aligned}
&\sum_{i=0}^{k-1}\frac{(l+i)^2}{\left((l+i)^2\frac{R^\prime_{\tt max}}{(1+\varepsilon)\beta}-l-i+1\right)^2}\cdot\frac{1+(l+i)((l+i)\frac{R^\prime_{\tt max}}{(1+\varepsilon)\beta}-1)}{(l+i)((l+i)\frac{R^\prime_{\tt max}}{(1+\varepsilon)\beta}-1)}=\\
&\frac{(l+k)^2}{\left((l+k)^2\frac{R^\prime_{\tt max}}{(1+\varepsilon)\beta}-l-k+1\right)^2}\cdot
\left(1+(l+k)((l+k)\frac{R^\prime_{\tt max}}{(1+\varepsilon)\beta}-1)\right),
\end{aligned}
\end{equation*}
where $R_{\tt max}^\prime=rR_{\tt max}$ and $l=\frac{\alpha}{R_{\tt max}^\prime}$. 
Note that the root is unique. 
Then we denote $g_k(\alpha,\beta)$ by $H_k^{\mathcal{W}^k_{\tt max}}(\alpha,\beta)$ for ease of reading.

Finally, because 
\begin{equation*}
\begin{aligned}
\max_{x=R_{\tt max}}&\Pr\left(\frac{\alpha}{\beta} \rightarrow 1+\varepsilon \,\Big|\, (\alpha,\beta)\right)=\max\sum_{k=0}^{\infty}P^\varepsilon_k(\alpha,\beta)\\
&\leq\sum_{k=0}^{\infty}\max P^\varepsilon_k(\alpha,\beta)=\sum_{k=0}^{\infty}g_i(\alpha,\beta),    
\end{aligned}    
\end{equation*}
the probability for a state $(\alpha_{n_i},\alpha_{n_j})$ starting from $(\alpha,\beta)$ to reach the target zone in which satisfies $\frac{\alpha_{n_i}}{\alpha_{n_j}}\leq 1+\varepsilon$ is upper bounded by
\begin{equation}
\begin{aligned}
\lim_{\substack{d\rightarrow 0\\n\rightarrow \infty}}
&\sum_{k=0}^{n}\Bigg\{\sum_{\substack{\forall j<k:\sum_{i=0}^j x_i< m_j^d}}\Big\{\frac{rR_{\tt max}}{rR_{\tt max}+(\alpha+krR_{\tt max})\cdot D_k}\\
&\hspace{2cm}\times\prod_{i=0}^{k-1}h_i(x_i,d)\Big\}\Bigg\},
\end{aligned}    
\label{eq:final2}
\end{equation}
which is denoted by $G(\alpha,\beta).$
Note that 
\begin{equation*}
\begin{aligned}
    g_0(\alpha,\beta)\geq &\frac{R_{\tt max}}{R_{\tt max}+\alpha d}\cdot g_0(\alpha,\beta(1+rd)) \text{ and }\\
g_i(\alpha,\beta)\geq &\frac{R_{\tt max}}{R_{\tt max}+\alpha d}\cdot g_i(\alpha,\beta(1+rd))+\\
&\frac{\alpha d}{R_{\tt max}+\alpha d}\cdot g_{i-1}(\alpha+rR_{\tt max},\beta) \,\,\,\,\forall i>0.
\end{aligned}
\end{equation*}
Therefore, the following holds: 
$$G(\alpha,\beta)\geq\frac{R_{\tt max}}{R_{\tt max}+\alpha d}\cdot G(\alpha,\beta(1+rd))+\frac{\alpha d}{R_{\tt max}+\alpha d}\cdot G(\alpha+rR_{\tt max},\beta).$$
Also, Eq.~\eqref{eq:const} is the maximum when $x=R_{\tt max}.$
More specifically, Eq.~\eqref{eq:const} has a similar form to that shown in Fig.~\ref{fig:find_x}.
Lastly, because the limit value of $G(\alpha,\beta)$ when $\alpha$ goes to infinity is 0, it is a constant in terms of $x.$ 
As a result, 
$$\lim_{t\rightarrow\infty}\Pr\Bigr[\frac{EP_{\tt max}^t}{EP^t_{\delta}}<1+\varepsilon\Bigl]<G(\alpha_{\tt MAX}, \alpha_{\delta}),$$ 
and $G(\alpha_{\tt MAX}, \alpha_{\delta})$ is denoted by $G^\varepsilon(f_\delta, \frac{rR_{\tt max}}{\alpha_{\tt MAX}})$ in Theorem~\ref{thm:impossible}. 
Moreover, the limit value of $G^\varepsilon(f_\delta, \frac{rR_{\tt max}}{\alpha_{\tt MAX}})$ when $f_\delta$ goes to 0 is 0. 
This completes the proof of Theorem~\ref{thm:impossible}. 

\section{Simulation}
\label{app:sim}

\begin{figure}[ht]
    \centering
    \includegraphics[width=\columnwidth]{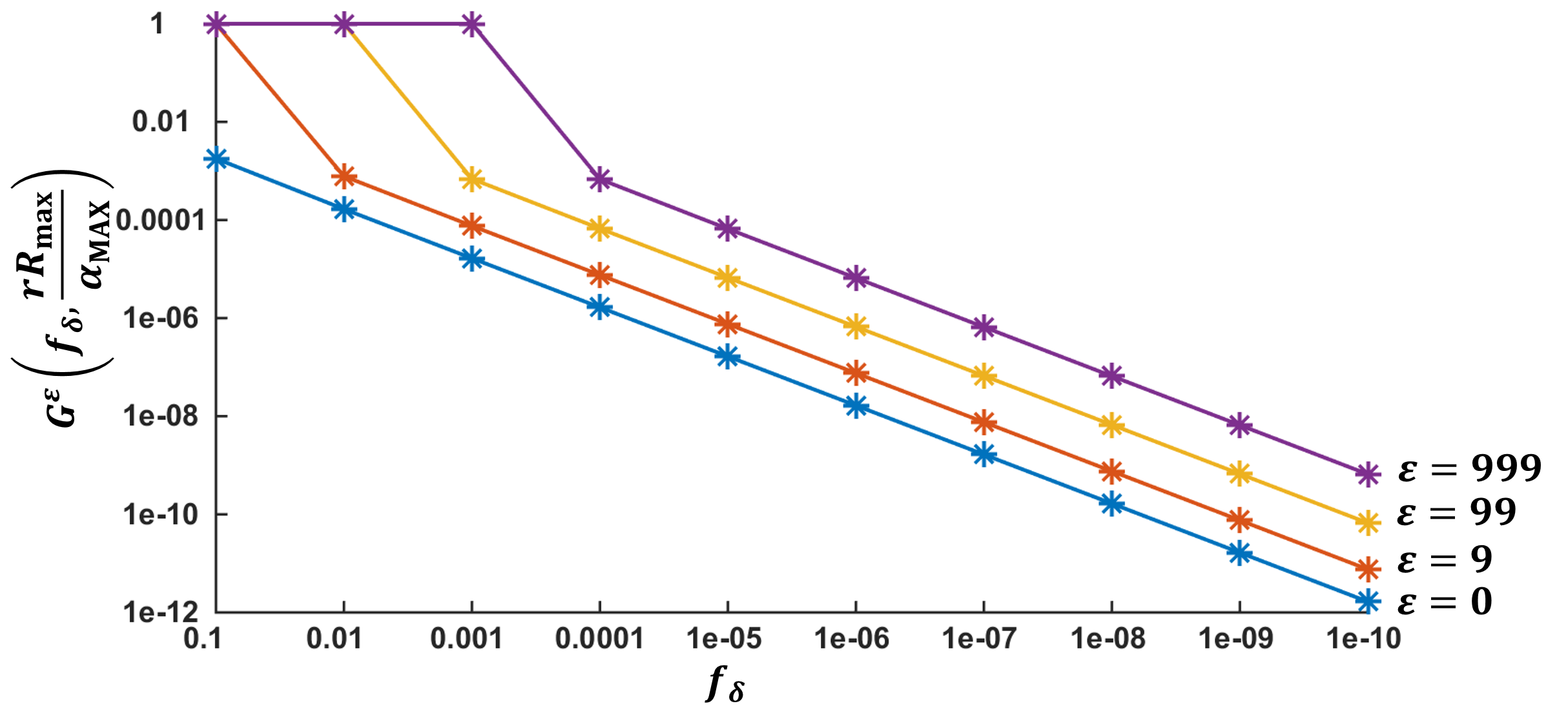}
    \caption{In this figure, when $\frac{rR_{\tt max}}{\alpha_{\tt MAX}}$ is $10^{-2}$, $G^\varepsilon(f_\delta, \frac{rR_{\tt max}}{\alpha_{\tt MAX}})$ ($y$-axis) is presented with regard to $f_\delta$ ($x$-axis) and $\varepsilon$. }
    \label{fig:sim2}
\end{figure}

\begin{figure}[ht]
    \centering
    \includegraphics[width=\columnwidth]{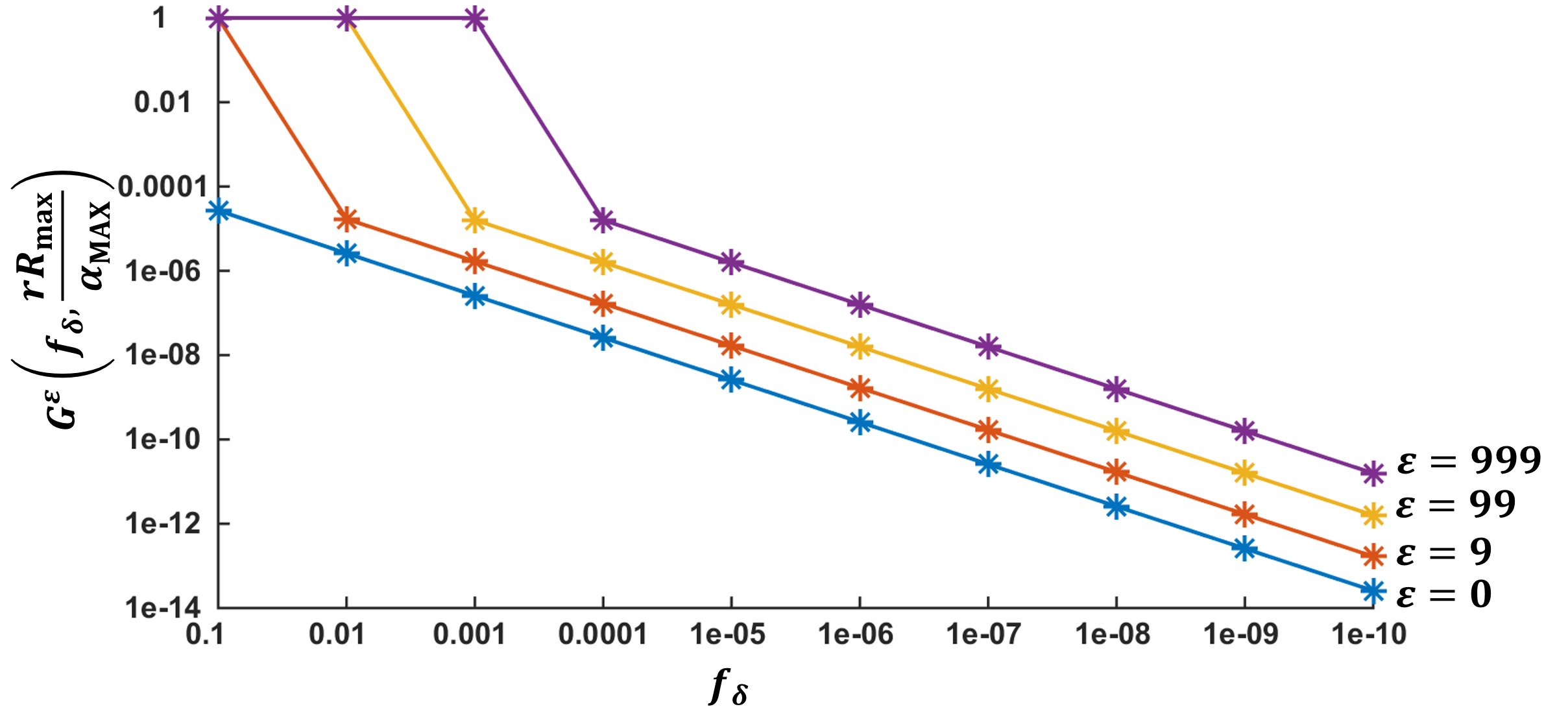}
    \caption{In this figure, when $\frac{rR_{\tt max}}{\alpha_{\tt MAX}}$ is $10^{-4}$, $G^\varepsilon(f_\delta, \frac{rR_{\tt max}}{\alpha_{\tt MAX}})$ ($y$-axis) is presented with regard to $f_\delta$ ($x$-axis) and $\varepsilon$. }
    \label{fig:sim3}
\end{figure}

\end{document}